\newtheorem{theorem}{Theorem}[section]
\newtheorem{lemma}[theorem]{Lemma}
\newtheorem{remark}[theorem]{Remark}
\newtheorem{proposition}[theorem]{Proposition}
\DeclareMathOperator{\Tr}{Tr}
\begin{document}

	\def\spacingset#1{\renewcommand{\baselinestretch}%
		{#1}\small\normalsize} \spacingset{1}


		\title{ Maximizing the algebraic connectivity in multilayer networks with arbitrary interconnections}
		\author{ Ali Tavasoli$^{1}$,  Ehsan Ardjmand$^{2}$, Heman Shakeri$^{3}$
		\footnote{$^1$Department of Mechanical Engineering,
		Payame Noor University, Tehran, Iran. $^2$Department of Analytics and Information Systems, College of Business, Ohio University, OH, USA. $^3$(Corresponding author) School of Data Science, University of Virginia, Charlottesville, Virginia, USA.}}

		\maketitle
	
	\begin{abstract} 

The role of multilayer networks on the emergence of several real world phenomena has been impacted network science research in recent years. The second smallest eigenvalue of the Laplacian matrix, known as algebraic connectivity,  is determinative in characterizing  properties such as diffusion speed and robustness.
  In this paper, we go beyond the special structure of  one-to-one interconnection and study multilayer networks with arbitrary interconnections and investigate the problem of maximizing algebraic connectivity by allocating interlink weights subject to a limited total budget $c$.
 We show that our formulated optimization problem is impacted by a threshold budget $c^*$ below which the maximum algebraic connectivity reaches a known upper-bound that is subject to regular optimal weights--that may or may not be  uniform depending on the interlayer structure.
 For efficient numerical approaches in regions of no analytical solution, we cast the problem into a convex optimization and considered the primal-dual setting to enable exploration from several perspectives. Particularly, a geometric transformation of dual variables leads to a graph embedding problem that is easier to interpret and is related to optimum diffusion phases, as well as to interlayer and intralayer interactions, in each region. Allowing arbitrary interconnections entails regions of multiple transitions, thus we observe more diverse diffusion phases with respect to the one-to-one interconnection case. We derive several analytical results in multilayer networks with all-pairs interconnection possibility, and investigate the associations between the optimal weights and Fiedler vector components of each layer. 
  The algebraic connectivity divided by the number of nodes in a subgraph, which we call specific algebraic connectivity, plays an important role in explaining the results. Finally, we study the placement of a limited number of interlinks by greedy heuristics, using the subgraph Fiedler vector components. 
 
	\end{abstract}
	
	\noindent%
	{\it Keywords:}  Algebraic connectivity; multilayer networks; convex optimization; graph embedding.
  
	\maketitle
\section{Introduction}
We live in a world of networks that are seldom isolated and often function strongly based on each other \citep{Boccalettia2006}. Such interacting networks can be found in every discipline, with social \citep{Cozzo2015Triadic,Estrada2014}, biological \citep{Sahneh2012}, transportation \citep{Domenico2014Navigability,Estrada2014}, supply chain \citep{borgatti2009social,kim2011structural}, engineering \citep{Buldyrev2010,Mesbahi2010}, and  sport game \citep{buldu2018using} networks representing only a few examples of systems that can perform highly interconnected dynamics. The operation of such interdependent systems may be considered in multiple layers, thereby motivating their multilayer name. While studying multilayer networks, their interlayer structural property is a focal subject area \citep{Arenas2014multilayer,boccaletti2014structure} due to its impact on different dynamical and functional features of such networks; for example, percolation \citep{Buldyrev2010,Son2012Percolation,Yagan2012OptAlc,Kryven2019Percolation}, robustness \citep{Gao2012Robustness,Min2014Robustness,Yagan2019Robustness}, epidemic spreading \citep{Saumell2012Epidemic,Dickison2012Epidemics,Yagan2012Contagion,Yagan2019Contagion}, synchronization \citep{Aguirre2014Synchronization,Wang2019Synchronization}, diffusive behavior \citep{Arruda2018spreading,Cencetti2019Diffusive}, and controllability \citep{Moothedath2019Controllability}. Consequently,  significant effort have  spurred towards optimizing the design of inter-structures  \citep{Yagan2012OptAlc, Li2015RobustAllocation, Tejedor2018Directed,Moothedath2019Controllability,Pan2019Spreading,Yang2019Synch,Chattopadhyay2019Robustness}.  \\

The second smallest eigenvalue of Lagrangian of the graph represents the connectivity of networks  \citep{PietBook} and was appropriately coined by \citet{Fiedler1973}, the algebraic connectivity of a graph. Moreover, the eigenvector corresponding to algebraic connectivity is named Fiedler vector that plays a key role in spectral partitioning of networks \citep{PietBook}. 
Algebraic connectivity increases monotonically by adding links \citep{Fiedler1973}  and can be considered as a measure of network robustness \citep{Jamakovic2007}. 
Moreover, various bounds in graph partitioning, optimal graph labeling, min-sum problems, or bandwidth optimization can be obtained using the second smallest eigenvalue as a key factor \citep{Juvan1993Bandwidth,Helmberg1995Bounds}. The convergence speed of various processes such as mixing Markov chains on graphs \citep{bremaud2013markov}, reaching consensus in multi-agent systems \citep{Jadbabaie2003NearestNeighbor,Olfati-Saber2004Consensus,Olfati-SaberFlock,Jadbabaie2007Flocking}, synchronization of coupled oscillators \citep{Strogatz2001, Arena2008Sync}, or diffusion dynamics on networks \citep{GomezDiffusion2013} are controlled by the second smallest eigenvalue of the Laplacian. \\

Algebraic connectivity of multilayer networks has recently been topic of several works, with the effect of interlinks being studied more than any other thing. When the interconnection follows a one-to-one interconnection, with varying interlink weights, the algebraic connectivity grows linearly with increasing weights up to a critical threshold, say $c^*$, and then enters a nonlinear region afterwards \citep{GomezDiffusion2013}.  The existence of such threshold gives rise to structural transition in interdependent networks \citep{Radicchi2013}. Below the threshold $c^*$ the individual networks are structurally distinguishable, while above that the multilayer network acts as a whole \citep{Radicchi2013}. When at least one of the network components has vanishing algebraic connectivity, the structural transition disappears \citep{sahneh2014exact, Mieghem2019Regular} and, hence, components of such interconnected network topologies become indistinguishable despite very weak coupling between them \citep{sahneh2014exact}. On the other hand, subgraphs are coupled more difficultly when their algebraic connectivity values are close to each other \citep{sahneh2014exact, shakeri2020designing}. Some bounds \citep{Radicchi2013,sahneh2014exact,Mieghem2019Regular} and exact expressions \citep{sahneh2014exact} for the coupling threshold were derived. \citet{Radicchi2014Supercritical} compares the existence of transition threshold to thermodynamic behavior of substances in transition from normal to supercritical fluid. Moreover, \citet{Mieghem2019Regular} discuss the physical meaning of $c^*$ in terms of the minimum cut. Abrupt structural transition is not limited to varying coupling strength as it can be equivalently observed, for instance, by varying the number of interconnections \citet{Gregorio2014algebraic} or by layer degradation \citep{Cozzo2019LayDegrd}. \\

Another characteristic phenomenon in interconnected networks is supper-diffusion \citep{GomezDiffusion2013}, where under certain conditions the diffusion in the interconnected network takes place faster than in either of the networks separately. \citet{sahneh2014exact} place the superdiffusion with respect to the coupling threshold and report diversified behaviors in interconnected networks. While diffusion is monotonic with respect to coupling strength in undirected networks \citep{GomezDiffusion2013}, their directed counterparts \citep{Tejedor2018Directed} can exhibit a nonmonotonic behavior resulting in a faster diffusion at an intermediate degree of coupling than when the two layers are fully coupled. \\

In this paper, we are searching for optimal inter-structures that maximize the algebraic connectivity. In a single layer  graph, with variable edge-weights subject to a total budget, \citet{boyd2004fastestChain} and \citet{GoringShadowSeperator} show that maximizing algebraic connectivity  corresponds to a dual semidefinite optimization problem  and   the optimal solutions of the dual are related to the eigenvectors of the optimal algebraic connectivity. Additionally,  the dual problem can be interpreted as an embedding of the single-layer graph in $\mathbb{R}^n$ (optimal realization of the graph in Euclidean space), and the optimal embedding has structural properties tightly connected to the separators of the graph. We pose similar problems in multilayer networks and similarly, our goal is to  allocate weights on the interlayer links so as to maximize the smallest positive eigenvalue of the Laplacian. In this paper, this is achieved  by formulating the primal-dual program \citep{boyd2004convex} and deriving its properties, and then extracting the equivalent graph realization problem and identifying its features with respect to the multilayer network's structure. \\

The significant part of above works on interdependent networks has been paid on a one-to-one interconnection between nodes of different layers. Particularly, the multilayer graph is usually a multiplex where the number of nodes in each layer is the same and the interconnection matrix $B = pI$, with $I$ being the identity matrix and $p$ the coupling strength \citep{Mieghem2019Regular}. However there are many real world examples breaking the especial one-to-one interconnection pattern so that a node in a layer may interact with multiple nodes from other layer \citep{Mieghem2017Interdepen, Arena2018Multiple, Mieghem2019Regular}. However, the functionality of such special structures is very limited. Figure \ref{fig:begin} shows a small multilayer network where only one change in interlayer graph with respect to a one-to-one interconnection impacts its spectral properties; in particular, supperdiffusion occurs by fairly small budgets, this is impossible in a corresponding multiplex network whatever the amount of budget.     
\begin{figure}[!htb]
	\begin{center}
		\includegraphics[clip,width=.3\columnwidth]{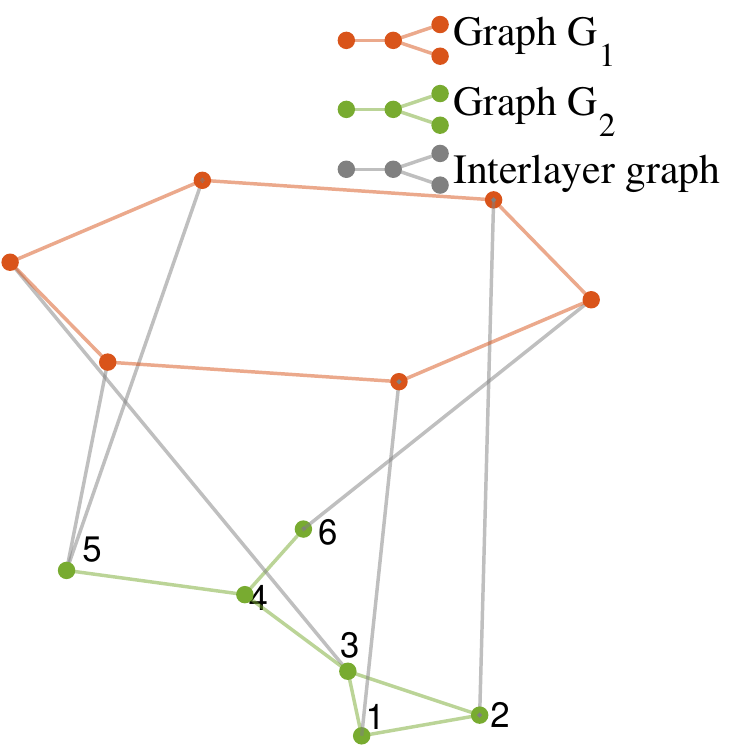} 
		\caption{A small interconnected network.}
		\label{fig:begin}
	\end{center}
\end{figure}
Furthermore, an effective strategy to make an interdependent system more robust is to bring the superposition of the layers as close as possible to an all-to-all topology \citep{Radicchi2013}. These motivate the present research after the recent related work \citep{shakeri2020designing} where the authors investigated the special case of multilayer networks with one-to-one interconnection. We relax the assumptions regarding one-to-one structures and allow for multilayer structures where the number of nodes in each layer can be different and the pattern of interconnections is arbitrary.  \\

Examining the algebraic connectivity in a primal-dual setting allows for a versatile approach that reflects a multi-sided view of the problem. This finds more importance when working under an interweaving environment where the presence of several nonlinear phenomena gives birth to diversified dynamical behaviors \citep{sahneh2014exact}. While our primal problem goes into diffusion speed \citep{GomezDiffusion2013} over the multilayer, we show that its dual is related to characteristic valuation \citep{Radicchi2013} and, hence, informative of diffusion phase. This duality enables us to understand the optimal diffusion route in each region. Here, the primal and dual optimization problems are impacted by structural transition  \citep{Arena2018Multiple}. Our analytical approach shows that primal problem gives a regular weight distribution \citep{Mieghem2017Interdepen,Mieghem2019Regular}, whenever feasible, before the (first) transition threshold, where the dual problem reveals an intralayer optimal diffusion phase.  For larger total budget values, the weight distribution becomes nonregular and different phases of interlayer diffusion are activated. This is related to interlink and intralink Fiedler cuts of the multilayer graph before and after transition \citep{Gregorio2014algebraic}. However, here, due to multiple structural transitions, we note more diverse  optimal diffusion phases. 

When there is no restriction on the interconnection pattern, we derive several analytical results for maximum algebraic connectivity and optimal weight distribution before threshold, as well as for different transition thresholds and the conditions under which supperdiffusion is possible. Furthermore, we find out a positive correlation between optimal weights and the components of subgraph Fiedler vectors after the initial transition. We observe the role of a quantity equal to the algebraic connectivity divided by number of nodes in a subgraph, which we call specific algebraic connectivity, in sorting different results. When interconnections are restricted to a given admissible set, we note the conditions under which optimal weights are not uniform before the transition $c^*$. This is different from conditions of multilayer networks with one-to-one interconnection \citep{shakeri2015PRL,shakeri2020designing} where optimal weights before $c^*$ are always uniform. Another problem pursued in this work is well-interconnected networks where our design parameter changes from interlink weights to interconnection pattern, which plays a key role in characterizing multilayer spectral properties. \citet{Mieghem2017Interdepen} and \citet{Mieghem2019Regular} investigate the multilayer spectra under some special inter-structures. However, it is not definite  which interconnection pattern will lead to maximum algebraic connectivity for a given number of interlinks. We address this problem through a simple greedy approach \citep{Boyd2006Growing}, and observe that nodes having substantially different components in a subgraph Fiedler vector are determinative in achieving well-interconnected multilayer networks.  \\

We organize the remainder of the paper  as follows. Section \ref{sec:Model} describes multilayer networks with arbitrary interconnections. We formulate the maximum algebraic connectivity problem in Section \ref{Sec:MaxLam2}, and derive some of its main properties. In Section \ref{sec:SDP}, we analyze primal-dual setting for maximizing algebraic connectivity in multilayer networks, and consider maximum algebraic connectivity of multilayers with all-pairs interconnection possibility in Section \ref{sec:all-pair}. Furthermore, we investigate the condition when the interlinks can be chosen only from a given admissible set in Section \ref{sec:admissible}. In Section \ref{sec:Well}, we change our optimum design parameter from interlink weights to interconnections pattern and suggest well-interconnected multilayer networks. Finally Section \ref{sec:Conclusion} is devoted to concluding remarks and a discussion on the applications. \\

\section{Multi-layer network with arbitrary interlayer connections}\label{sec:Model} 

Let $G=\left(V,E\right)$ represent an undirected network and by $V=\left\{ 1,\ldots,n\right\} $ and $E\subset{V\choose 2}$, we denote the set of nodes and links. 
For a link $e$ between nodes $i$ and $j$, i.e., $e:\{i,j\}\in E$, we define a nonnegative value $w_{ij}$ as the weight of the link.
Given $G$  a multilayer network,  let $G_1 = \left\{ V_1,E_1\right\}$ and $G_2=\left\{ V_2,E_2\right\}$, $|V_1|=n$, $| V_2|=m$, represent the layers, and a bipartite graph $G_3=\left\{ V,E_3\right\}$ with $E_3\subseteq \{\{i,j\}: i\in V_1, j\in V_2\}$ are connecting the layers. The whole multilayer network holds total number of nodes $N=n+m$. Throughout the paper, we use the term \textit{intralayer} links for $E_1$ and $E_2$, and \textit{interlayer} links for $E_3$. \\

The links in $G_3$ bridge $G_1$ and $G_2$ and should be chosen strategically, for instance in a way that minimizes the disruption of the flow of information, electric power or goods, or to avoid failures against attackers and possible errors that can  fragment the system or cause  cascading phenomena \citep{Buldyrev2010}.
The edge weights of $G_3$ are design parameters. \\

The Laplacian matrix is defined as 
\begin{equation}
\label{eq:Laplacian}
L(w):=
\sum_{\{i,j\}\in E_1\cup E_2}B_{ij}+\sum_{\{i,j\}\in E_3}w_{ij}B_{ij},
\end{equation}
where $B_{ij}:= (\delta_i-\delta_j)(\delta_i-\delta_j)^T$, for each link $\{i,j\}$, and $\delta_i$ is the delta function at vertex $i$. In particular, we think of the Laplacian matrix of $G$ as a function of the interlayer weights $w$.
Enumerating the vertices in $V_1$ followed by the vertices in $V_2$, we can write $L(w)$ in block form in terms of the Laplacian matrices of the layers, $L_1$ and $L_2$, as follows
\begin{equation} \label{eq:LaplacMatrix}
\begin{aligned}
L(w)=
\begin{bmatrix}
L_1+\text{diag}\left(W\boldsymbol{1}_m\right) & -W \\
-W^T & L_2+\text{diag}\left(W^T\boldsymbol{1}_n\right)
\end{bmatrix}
\end{aligned}
\end{equation}
We will first assume that it is possible to connect any node of $G_1$ to any node in $G_2$ and $W_{n\times m}=\left[w_{ij}\right]$ consists of nonnegative weights. We use $\boldsymbol{1}_n$ and $\boldsymbol{1}_m$ to denote the $n$- and $m$-dimensional all ones vectors, respectively. Recall that the Laplacian matrix $L(\omega) $ is positive semidefinite and has (at least for connected networks) one zero eigenvalue with eigenvector $\boldsymbol{1}=[1,\dots,1]^{T}$, the vector of all ones of appropriate length. The eigenvalues of $ L(\omega) $ are ordered as $ 0=\lambda_1(\omega)\leq\lambda_2(\omega)\leq\lambda_3(\omega)\leq\dots\leq\lambda_{n}(\omega) $.\\

Our goal is to  allocate weights on the interlayer links, subject to a total budget $c$ such that $\sum w_{ij} = c$,  to maximize the smallest positive eigenvalue of the Laplacian. After formulating the primal-dual program and deriving its properties, the equivalent graph realization problem in is extracted and its features with respect to the multilayer network's structure are identified. One difference between the present research and the recent related work \citep{shakeri2020designing} is that there the authors considered a special case where both individual networks hold the same number of nodes and each node is interconnected exactly to one node in the other network (one-to-one interconnection). However, in this paper, we relax the assumptions regarding  one-to-one interconnections and allow for multilayer structures where the number of nodes in each layer can be different and the pattern of interconnections is arbitrary.

\section{Maximum algebraic connectivity in arbitrary multilayer networks}\label{Sec:MaxLam2}
The smallest positive Laplacian eigenvalue is called  the algebraic connectivity of graphs and is characterized as
\begin{equation}\label{eq:lambda2}
\begin{gathered}
\lambda_2\left(L\right)=\underset{\substack{v^T\boldsymbol{1}=0 \\ {v\neq 0}}}{\text{min}} \ \ \frac{v^TLv}{\| v\|^2}
\end{gathered}
\end{equation}
We  maximize $\lambda_2(L)$ by distributing a total budget $c$ over the interlayer edges, formulated as
\begin{equation}\label{eq:maxLam}
\begin{gathered}
F\left(c\right):=\underset{\substack{w\geq0 \\ w^T\boldsymbol{1}=c}}{\text{max}} \  \lambda_2\left[L\left(w\right)\right]
\end{gathered}
\end{equation}
We rewrite \eqref{eq:lambda2}  by separating the components of $v$,
\begin{equation}\label{eq:v1v2}
\begin{gathered}
v_1^T\left(L_1+\text{diag}\left(W\boldsymbol{1}_m\right)\right)v_1-2v_1^TWv_2+v_2^T\left(L_2+\text{diag}\left(W^T\boldsymbol{1}_n\right)\right)v_2 \\ -\lambda_2\left(L\right)\left(\|v_1\|^2+\|v_2\|^2\right)\geq 0, \ \ \ \forall \  v_1^T\boldsymbol{1}_n=-v_2^T\boldsymbol{1}_m
\end{gathered}
\end{equation}
where $\left[v_1^T \ v_2^T\right]^T = v$, $v_1\in\mathbb R^n$ and $v_2\in\mathbb R^m$.
We further decompose  $v_1$ and $v_2$,
\begin{equation}\label{eq:vdecompos}
\begin{gathered}
v_1=\alpha\boldsymbol{1}_n+u_1, \ \ v_2=-\frac{\alpha n}{m}\boldsymbol{1}_m+u_2, \ \ \ \forall u_1\in\mathbb R^n, \ u_2\in\mathbb R^m, \ \ \  u_1^T\boldsymbol{1}_n=u_2^T\boldsymbol{1}_m=0
\end{gathered}
\end{equation}
where $\alpha$ is a scalar. Substituting \eqref{eq:vdecompos} in \eqref{eq:v1v2} gives the following inequality that is used in the subsequent  lemmas.
\begin{equation}\label{eq:lam2main}
\begin{gathered}
\alpha^2\left(1+\frac{n}{m}\right)\left[\left(1+\frac{n}{m}\right)c-n\lambda_2\right]  +2\alpha\left(1+\frac{n}{m}\right)\left[u_1^TW\boldsymbol 1_m-\boldsymbol 1_n^TWu_2\right] \\ +u_1^T\left(L_1+\text{diag}\left(W\boldsymbol{1}_m\right)\right)u_1-2u_1^TWu_2+u_2^T\left(L_2+\text{diag}\left(W^T\boldsymbol{1}_n\right)\right)u_2 -\lambda_2\left(\|u_1\|^2+\|u_2\|^2\right)\geq 0, \\ \forall \ \alpha, \ u_1^T\boldsymbol{1}_n=u_2^T\boldsymbol{1}_m=0
\end{gathered}
\end{equation}
\begin{lemma}\label{lem:lam2bound1}
	The maximum algebraic connectivity function $F\left(c\right)$ in \eqref{eq:maxLam} is upper-bounded as
	\begin{equation}\label{eq:lam2bound}
	\begin{gathered}
	F\left(c\right)\leq\left(\frac{1}{n}+\frac{1}{m}\right)c
	\end{gathered}
	\end{equation}
\end{lemma}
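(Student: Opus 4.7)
The plan is to derive the bound by specializing inequality \eqref{eq:lam2main} to the simplest admissible direction. Setting $u_1=0$ and $u_2=0$ is permitted since both trivially satisfy $u_1^T\boldsymbol{1}_n = u_2^T\boldsymbol{1}_m = 0$, and it annihilates every term in \eqref{eq:lam2main} that carries a factor of $u_1$ or $u_2$. What remains is
\begin{equation*}
\alpha^2\left(1+\frac{n}{m}\right)\left[\left(1+\frac{n}{m}\right)c - n\,\lambda_2(L)\right] \geq 0, \qquad \forall\,\alpha\in\mathbb{R}.
\end{equation*}
Taking any $\alpha\neq 0$ and dividing by the positive factor $\alpha^2(1+n/m)$ forces the bracket to be nonnegative, giving $\lambda_2(L)\leq \frac{1+n/m}{n}\,c = \left(\tfrac{1}{n}+\tfrac{1}{m}\right)c$.

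The key structural point is that this conclusion never used anything about the interlayer weight matrix $W$ beyond the budget constraint $w^T\boldsymbol{1}=c$: all $W$-dependent terms in \eqref{eq:lam2main} multiply $u_1$ or $u_2$, so they vanish under the chosen specialization. The bound is therefore uniform over the feasible set $\{w\geq 0,\ w^T\boldsymbol{1}=c\}$ and passes through the maximum in \eqref{eq:maxLam}, yielding $F(c)\leq (1/n+1/m)c$.

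Equivalently, one can read the argument as evaluating the Rayleigh quotient \eqref{eq:lambda2} at the two-valued test vector $v=(\alpha\,\boldsymbol{1}_n^T,\,-\tfrac{\alpha n}{m}\,\boldsymbol{1}_m^T)^T$, which is orthogonal to $\boldsymbol{1}_N$ by construction; the intralayer Laplacians contribute nothing since $L_j\boldsymbol{1}=0$, while the interlayer contribution collapses to a multiple of the budget $c$. No real obstacle arises — the whole proof is a one-line specialization of \eqref{eq:lam2main} — the only point to check is that the test direction is admissible in \eqref{eq:lambda2}, which is automatic from the decomposition \eqref{eq:vdecompos}. The resulting inequality will serve as the benchmark against which the threshold budget $c^*$ is later defined, namely the regime in which \eqref{eq:lam2bound} is attained with equality.
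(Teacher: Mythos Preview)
Your proof is correct and follows essentially the same approach as the paper: both extract the bound from the requirement that the coefficient of $\alpha^2$ in \eqref{eq:lam2main} be nonnegative. You make the specialization $u_1=0$, $u_2=0$ explicit, whereas the paper simply notes that the inequality must hold for all $\alpha$ and hence the leading coefficient cannot be negative; the content is identical.
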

\begin{proof}
	Since the inequality \eqref{eq:lam2main} must hold for every $\alpha$, it follows the coefficient of $\alpha^2$ must be nonnegative, so that
	\[
	\left(1+\frac{n}{m}\right)c-n\lambda_2\geq 0
	\]  
	which is satisfied only for $\lambda_2\leq\left(\frac{1}{n}+\frac{1}{m}\right)c$.
\end{proof} 

The  upper-bound  in \eqref{eq:lam2bound} is independent of the number and pattern of interconnections.
 In a  multilayer with one-to-one interconnection \citep{shakeri2015PRL}, $n=m$, the bound \eqref{eq:lam2bound} is verified as $F\left(c\right)\leq\frac{2c}{n}$.

\begin{lemma}\label{lem:UpBound}
	The upper-bound \eqref{eq:lam2bound} is attainable only if the following regulatory conditions are satisfied
	\begin{equation}\label{eq:RegulCond0}
	\begin{gathered}
	W\boldsymbol 1_m\in\text{span}\{\boldsymbol 1_n\}, \ \ W^T\boldsymbol 1_n\in\text{span}\{\boldsymbol 1_m\}
	\end{gathered}
	\end{equation}
	In other words, $W$ has constant row sum and column sum.
\end{lemma}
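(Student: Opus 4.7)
The plan is to exploit the fact that inequality \eqref{eq:lam2main} must hold for all scalars $\alpha$ and all $u_1, u_2$ orthogonal to the all-ones vectors of their respective dimensions, and to observe what happens at the boundary case $\lambda_2 = (1/n + 1/m)c$. When the upper bound in \eqref{eq:lam2bound} is attained, the coefficient of $\alpha^2$ in \eqref{eq:lam2main} vanishes identically, so the left-hand side becomes a polynomial in $\alpha$ of degree at most one (with coefficients depending on $u_1, u_2$). For such a linear expression to be nonnegative for every real $\alpha$, the linear coefficient must vanish, which forces
\[
u_1^T W \boldsymbol{1}_m \;-\; \boldsymbol{1}_n^T W u_2 \;=\; 0
\qquad \forall\, u_1 \perp \boldsymbol{1}_n,\ u_2\perp \boldsymbol{1}_m .
\]

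From this single identity I would extract the two claimed conditions by specialization. Setting $u_2 = 0$ gives $u_1^T (W\boldsymbol{1}_m) = 0$ for every $u_1$ orthogonal to $\boldsymbol{1}_n$, which is exactly the statement $W\boldsymbol{1}_m \in \text{span}\{\boldsymbol{1}_n\}$, i.e.\ $W$ has constant row sums. Symmetrically, setting $u_1 = 0$ yields $(W^T\boldsymbol{1}_n)^T u_2 = 0$ for every $u_2$ orthogonal to $\boldsymbol{1}_m$, giving $W^T\boldsymbol{1}_n \in \text{span}\{\boldsymbol{1}_m\}$, i.e.\ constant column sums. The two specializations decouple cleanly, so no subtle interaction between $u_1$ and $u_2$ needs to be tracked.

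The only delicate point is the logical step that allows us to upgrade the nonnegativity of \eqref{eq:lam2main} from a joint statement in $(\alpha, u_1, u_2)$ to the vanishing of a linear-in-$\alpha$ coefficient for each fixed $(u_1,u_2)$. This is routine once observed: the quadratic-in-$\alpha$ term has zero coefficient at the boundary, and any nonzero slope in $\alpha$ would make the expression $\to -\infty$ for $\alpha$ of the appropriate sign. I do not anticipate a real obstacle here; the main care is simply to confirm that the admissible $(u_1,u_2)$ are unconstrained apart from the orthogonality conditions already built into \eqref{eq:vdecompos}, so that the two specializations $u_2=0$ and $u_1=0$ are both legitimate and give necessary conditions on $W$ independently.
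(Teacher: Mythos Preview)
Your proposal is correct and follows essentially the same route as the paper: at the boundary $\lambda_2=(1/n+1/m)c$ the $\alpha^2$-coefficient in \eqref{eq:lam2main} vanishes, forcing the $\alpha$-coefficient to vanish for every admissible $(u_1,u_2)$, and then the specializations $u_2=0$ and $u_1=0$ yield the two span conditions. The paper's proof is exactly this argument, with the same two specializations.
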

\begin{proof}
	When the upper-bound  is reached, i.e. $\lambda_2=\left(\frac{1}{n}+\frac{1}{m}\right)c$, inequality  \eqref{eq:lam2main} reads
	\begin{equation*}
	\begin{gathered}
	2\alpha\left(1+\frac{n}{m}\right)\left[u_1^TW\boldsymbol 1_m-\boldsymbol 1_n^TWu_2\right] \\ +u_1^T\left(L_1+\text{diag}\left(W\boldsymbol{1}_m\right)\right)u_1-2u_1^TWu_2+u_2^T\left(L_2+\text{diag}\left(W^T\boldsymbol{1}_n\right)\right)u_2 -\lambda_2\left(\|u_1\|^2+\|u_2\|^2\right)\geq 0, \\ \forall \ \alpha, \ u_1^T\boldsymbol{1}_n=u_2^T\boldsymbol{1}_m=0
	\end{gathered}
	\end{equation*} 
	Since this inequality must hold for every $\alpha$, the coefficient of $\alpha$ must vanish
	\begin{equation}\label{eq:alfaCoef}
	\begin{gathered}
	u_1^TW\boldsymbol 1_m-\boldsymbol 1_n^TWu_2=0
	\end{gathered}
	\end{equation}
	Setting $u_2=0$ in \eqref{eq:alfaCoef} yields $u_1^TW\boldsymbol 1_m=0$, which in addition to $u_1^T\boldsymbol 1_n=0$ imply the vector $W\boldsymbol 1_m$ belongs to the space spanned by $\boldsymbol 1_n$.  Similarly, setting $u_1=0$ in \eqref{eq:alfaCoef} will yield $\boldsymbol 1_n^TWu_2=0$, and thus the vector $\boldsymbol 1_n^TW$ belongs to the space spanned by $\boldsymbol 1_m$. 
\end{proof}

\begin{remark}
	Knowing that the total budget $c$ satisfies $\boldsymbol 1_n^TW\boldsymbol 1_m=c$, the regularity condition \eqref{eq:RegulCond0} implies
	\begin{equation}\label{eq:RegulCond}
	\begin{gathered}
	W\boldsymbol 1_m=\frac{c}{n}\boldsymbol 1_n, \ \ \ W^T\boldsymbol 1_n=\frac{c}{m}\boldsymbol 1_m
	\end{gathered}
	\end{equation}
indicating the nodes in an individual layer are all assigned the same total interlayer weight (i.e. equal weighted interlayer degree for all nodes of a layer). This generally does not imply identical weights for all interlinks. \citet{shakeri2015PRL} show uniform optimal weights in a  one-to-one interconnection structure when $c\leq c^*$. Another case where regularity is feasible with uniform weights is an all-pairs interconnection pattern \citep{Mieghem2017Interdepen, Mieghem2019Regular}. A more general case where regularity is accompanied with uniform interlink weights is when interlayer connectivity follows $k$-to-$k$ coupling scheme ($k$ integer) \citep{Mieghem2019Regular}.
\end{remark}

	When regularity is feasible, one can check that $\lambda=\left(\frac{1}{n}+\frac{1}{m}\right)c$ is always an eigenvalue of $L$ with eigenvector $\begin{bmatrix} m\boldsymbol{1}_n \\ -n\boldsymbol{1}_m \end{bmatrix}$. Recall that this eigenvalue is zero when $c=0$, and that the algebraic connectivity is bounded by this eigenvalue (Lemma \ref{lem:lam2bound1}). In fact, a perturbation approach for sufficiently small values of budget $c$--starting from zero and increasing--indicates that the maximum algebraic connectivity is $\lambda=\left(\frac{1}{n}+\frac{1}{m}\right)c$ with the corresponding Fiedler vector	$\begin{bmatrix} m\boldsymbol{1}_n \\ -n\boldsymbol{1}_m \end{bmatrix}$. 
This eigenvalue increases linearly with $c$ and by increasing the coupling budget and at some point,  there exists a transition threshold $c^*$ where the second and third smallest eigenvalues coalesce\footnote{For nonvanishing subgraph algebraic connectivities, this concurrence of different eigenvalues is inevitable because the eigenvalues of $L$ are continuous functions of the coupling strength $c$ \citep{sahneh2014exact, Mieghem2019Regular}.}.   
After $c^*$, the  regularity conditions is broken and the optimal weights will be generally non-regular and the maximum algebraic connectivity is a nonlinear function of $c$;  Lemma \ref{lem:Opt0} summarizes this. 

\begin{lemma}\label{lem:Opt0}
	Suppose the regularity conditions given by \eqref{eq:RegulCond0} are feasible. For budget values $c$ not greater than the threshold $c^*$, $c\leq c^*$, the solution to the maximum algebraic connectivity problem \eqref{eq:maxLam} is $\lambda^*=\left(\frac{1}{n}+\frac{1}{m}\right)c$ with corresponding eigenvector $\begin{bmatrix} m\boldsymbol{1}_n \\ -n\boldsymbol{1}_m \end{bmatrix}$, that is only attainable  by  regularity conditions \eqref{eq:RegulCond0}.
\end{lemma}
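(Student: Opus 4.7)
The plan is to assemble the statement from three ingredients already available: the universal upper bound of Lemma~\ref{lem:lam2bound1}, the necessity of regularity from Lemma~\ref{lem:UpBound}, and a direct construction that saturates the upper bound for $c\le c^*$. First, I would verify that whenever $W$ satisfies the regularity conditions~\eqref{eq:RegulCond}, a short block-multiplication using~\eqref{eq:LaplacMatrix} confirms that $v^{\star}:=\begin{bmatrix} m\boldsymbol{1}_n \\ -n\boldsymbol{1}_m \end{bmatrix}$ is an eigenvector of $L(w)$ with eigenvalue $\left(\tfrac{1}{n}+\tfrac{1}{m}\right)c$: the top block produces $m L_1 \boldsymbol{1}_n + m\,\mathrm{diag}(W\boldsymbol{1}_m)\boldsymbol{1}_n + nW\boldsymbol{1}_m = (mc/n + c)\boldsymbol{1}_n$, and symmetrically for the bottom block. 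Since $v^{\star}\perp\boldsymbol{1}_{n+m}$, this eigenvalue is a legitimate candidate for $\lambda_2$.

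Next I would run the continuity/perturbation argument hinted at in the paragraph preceding the lemma. Parametrise a feasible regular $W$ linearly in $c$; then $L(w)$ depends continuously on $c$, and at $c=0$ reduces to $L_1\oplus L_2$, which, under the standing assumption that both layers are connected, has exactly two zero eigenvalues with the remaining spectrum bounded below by $\min(\lambda_2(L_1),\lambda_2(L_2))>0$. The constructed eigenvalue $\left(\tfrac{1}{n}+\tfrac{1}{m}\right)c$ starts at $0$ and grows linearly in $c$, so for sufficiently small $c$ it is the smallest positive eigenvalue of $L(w)$. By the definition of $c^*$ as the first budget at which this eigenvalue coalesces with the next one, the identification $\lambda_2(L(w))=\left(\tfrac{1}{n}+\tfrac{1}{m}\right)c$ with Fiedler vector $v^{\star}$ persists throughout $(0,c^*]$.

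Combined with Lemma~\ref{lem:lam2bound1}, which supplies $\lambda_2(L(w))\le\left(\tfrac{1}{n}+\tfrac{1}{m}\right)c$ for \emph{every} admissible $W$, the regular construction saturates the universal bound on $[0,c^*]$, so $F(c)=\left(\tfrac{1}{n}+\tfrac{1}{m}\right)c$ with optimal Fiedler vector $v^{\star}$. Lemma~\ref{lem:UpBound} then supplies the converse: any maximizer must satisfy~\eqref{eq:RegulCond0}, yielding the ``only attainable by regularity conditions'' clause.

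The main obstacle I anticipate is making the perturbation step airtight. Eigenvalue continuity is standard, but one must exclude the possibility that another branch of the spectrum dips below $\left(\tfrac{1}{n}+\tfrac{1}{m}\right)c$ before $c^*$; in other words, one has to argue that $c^*$ really marks the end of the linear regime rather than an earlier crossing. The footnote preceding the lemma reminds us that vanishing subgraph algebraic connectivities can destroy the transition altogether, so the argument implicitly uses that both layers have strictly positive algebraic connectivity; this should be promoted to an explicit nondegeneracy assumption, and the small-$c$ ordering of eigenvalues controlled via a Weyl-type perturbation estimate applied to $L(w) = L_1\oplus L_2 + c\,\Delta(w)$.
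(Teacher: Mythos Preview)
Your proposal is correct and follows essentially the same approach as the paper. In fact, the paper does not provide a formal proof of this lemma; it presents the argument informally in the paragraph immediately preceding the statement (verifying that $\left(\tfrac{1}{n}+\tfrac{1}{m}\right)c$ is an eigenvalue under regularity, invoking Lemma~\ref{lem:lam2bound1} for the upper bound, appealing to a perturbation/continuity argument for small $c$, and defining $c^*$ as the first coalescence of the second and third eigenvalues), and then states that ``Lemma~\ref{lem:Opt0} summarizes this.'' Your write-up is a more careful and explicit version of exactly that reasoning, including the nondegeneracy caveat about nonvanishing subgraph algebraic connectivities that the paper relegates to a footnote.
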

Further bounds can be derived based on the algebraic connectivity of the individual layers and their corresponding eigenvectors. Specifically, denoting by $\lambda_2^{(1)}$ and $\lambda_2^{(2)}$ the second smallest eigenvalue of $L_1$ and $L_2$, respectively, and by $u_2^{(1)}$ and $v_2^{(2)}$ the corresponding Fiedler vectors we have (see Appendix \ref{AppBounds}):

\begin{align}\label{eq:lam2bound1-2}
\lambda_2<\lambda_2^{(1)}+{u_2^{(1)}}^T\text{diag}\left(W\boldsymbol 1_m\right)u_2^{(1)}\\
\lambda_2<\lambda_2^{(2)}+{v_2^{(2)}}^T\text{diag}\left(W^T\boldsymbol 1_n\right)v_2^{(2)}
\end{align}

Attempting to maximize the minimum of the right hand sides of \eqref{eq:lam2bound1-2}, suggests assigning most weight to the node with largest entries in the corresponding Fiedler vectors. This signifies the importance of Fiedler vectors of the layers in determining the optimal weights that will be discussed further in the paper.

\section{Primal and dual semidefinite programmings}\label{sec:SDP}

The problem in \eqref{eq:maxLam} is a convex optimization problem with a concave objective \citep{Boyd2006FastestMP} and linear constraints. For $c\le c^*$, with regularity feasible, the solution to \eqref{eq:maxLam} is determined analytically. However, for $c>c^*$, the optimal weights are generally nonregular and  numerical solutions of are required. 
Our approach in this section closely follows \citep{Boyd2006FastestMP, GoringShadowSeperator, shakeri2020designing}. 

\subsection{Primal SDP}\label{sec:Primal}
We recast  \eqref{eq:maxLam} as a semidefinite programming (SDP) problem \citep{GoringShadowSeperator,goring2011rotational}, 
\begin{equation}\label{eq:lambda_2Primal}
\begin{aligned}
& \underset{w_{ij}, \lambda_2, \mu}{\text{maximize}}
& & \lambda_2 \\
& \text{subject to}
& & \sum_{ij\in E_3}w_{ij}B_{ij}+L_0+\mu \boldsymbol{1} \boldsymbol{1}^T - \lambda_2 I\succeq 0 \\
& & & \sum_{ij\in E_3}w_{ij}= c \\
& & & w_{ij}\geq 0 
\end{aligned}
\end{equation}
where $L_0 = \sum_{ij\in E_1\cup E_2}B_{ij}$ is the Laplacian for the disjoint union of the layers. The semidefinite constraint ensures when the optimal solution $ (w_{ij}^*, \lambda_2^*, \mu^*) $ is attained, $\lambda_2^*$ is the smallest eigenvalue of $\sum_{ij\in E_3}w_{ij}^*B_{ij}+L_0+\mu^* \boldsymbol{1} \boldsymbol{1}^T $ or equivalently the second smallest eigenvalue of $\sum_{ij\in E_3}w_{ij}^*B_{ij}+L_0 $. For nonnegative budget $c$, the feasible set of the primal problem is not empty, and the primal problem attains its optimal solution \citep{shakeri2020designing}.

\subsection{Dual SDP and  equivalent graph embeddings}\label{sec:Dual}
The primal problem \eqref{sec:Primal} maximizes the diffusion speed over the multilayer network. We study the  Fiedler eigenspace to explore the route to this fastest spread over the network. The dual of \eqref{eq:lambda_2Primal} provides valuable information about the fastest diffusion mode. The dual  is obtained by the usual Lagrangian approach \citep{Boyd96SDP, boyd2004convex}:
\begin{equation}\label{lambda_2Dual}
\begin{aligned}
& \underset{\xi\in\mathbb{R}, X\in\mathbb{R}^{n\times n}}{\text{maximize}}
& & c\xi -\langle X, L_0\rangle  \\
& \text{subject to}
& & \langle X,I\rangle = 1 \\
& & & \langle X,ee^T\rangle = 0\\
& & & \langle X,B_{ij}\rangle \leq -\xi ~~\quad\forall\lbrace i,j\rbrace\in E_3\\
& & & X\succeq 0 
\end{aligned}
\end{equation}
where  $\langle X, L_0\rangle =\Tr(L_0^TX) = \sum_{\lbrace i,j\rbrace \in E_1\cup E_2}x_{ii}+x_{jj}-2x_{ij}$. The feasible set of the dual problem is not empty, and strong duality holds for the primal and dual problems \eqref{eq:lambda_2Primal} and \eqref{lambda_2Dual}\citep{shakeri2020designing}. Therefore the dual problem attains its optimal solution, and optimal values of the primal and dual problems are the same. 

\citet{Boyd2006FastestMP} use the Gram representation of the dual matrix $X = U^TU$, where $U\in \mathbb{R}^{n\times n}$ to rewrite \eqref{lambda_2Dual} as:
\begin{equation}\label{Lambda2Embedding}
\begin{aligned}
& \underset{\xi\in \mathbb{R}, u_i\in \mathbb{R}^n}{\text{maximize}}
& & c\xi -\sum_{\lbrace i,j\rbrace\in E_1\cup E_2}\| u_i-u_j \|^2 \\
& \text{subject to}
& & \sum_{i\in V}\|u_i\|^2 = 1\\
& & & \sum_{i\in V}u_i = 0\\
& & &  \| u_i - u_j\|^2\leq -\xi \quad \ \forall \lbrace i,j\rbrace\in E_3\\
\end{aligned}
\end{equation}

That is a realization of the graph in $\mathbb R^n$ such that the distances between nodes are minimized and the barycenter is in the origin, but since the sum of the squared norms equals one, not all nodes can be embedded in the origin. By the following lemma some main structural properties of optimal graph can be derived from the geometric dual problem.

\begin{lemma}\label{lem:EmbProj}
	The projections of optimal embedding onto (nonzero) one-dimensional subspaces yield eigenvectors for the algebraic connectivity.
\end{lemma}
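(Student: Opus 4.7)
The plan is to exploit primal-dual complementary slackness for the SDP pair in Section 4.2 and then translate a range statement about the optimal dual matrix $X^*$ into a statement about projections of the embedding. The key observation is that the Gram factor $U^*$ and the dual matrix $X^*$ carry the same range, so any linear functional of the embedded points sits inside the $\lambda_2^*$-eigenspace of the optimal Laplacian.

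First I would write the primal slack matrix as
\[
S^* \;=\; L(w^*) + \mu^*\, \mathbf{1}\mathbf{1}^T - \lambda_2^*\, I,
\]
and note that its eigenstructure is inherited from $L(w^*)$: the direction $\mathbf{1}$ has eigenvalue $N\mu^* - \lambda_2^*$, the $\lambda_2^*$-eigenspace of $L(w^*)$ has eigenvalue $0$, and the remaining eigenspaces of $L(w^*)$ give strictly positive eigenvalues. Since strong duality for (6)--(7) was stated above, complementary slackness yields $S^* X^* = 0$, so $\mathrm{range}(X^*) \subseteq \ker S^*$. The dual constraint $\langle X^*, \mathbf{1}\mathbf{1}^T\rangle = 0$ together with $X^* \succeq 0$ forces $X^* \mathbf{1} = 0$, hence $\mathrm{range}(X^*) \perp \mathbf{1}$. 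This narrows the range of $X^*$ to the $\lambda_2^*$-eigenspace of $L(w^*)$, independently of any possible degeneracy $\mu^* = \lambda_2^*/N$.

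Next I would use the Gram factorization $X^* = U^{*T} U^*$ underlying the embedding problem (8), whose columns $u_i^*$ are the optimal embedded points. The elementary identity $\ker(U^{*T} U^*) = \ker(U^*)$ gives $\mathrm{range}(X^*) = \mathrm{range}(U^{*T})$. Therefore every vector of the form $p = U^{*T} h$, that is, the vector
\[
p \;=\; \bigl(h^T u_1^*,\; h^T u_2^*,\; \ldots,\; h^T u_N^*\bigr)^T
\]
obtained by projecting the optimal embedding onto the line spanned by $h$, lies in $\mathrm{range}(X^*)$ and hence in the $\lambda_2^*$-eigenspace of $L(w^*)$. The barycenter constraint $\sum_i u_i^* = 0$ from (8) also gives $\mathbf{1}^T p = h^T \sum_i u_i^* = 0$, so $p$ is orthogonal to the trivial eigenvector and, if nonzero, is genuinely a Fiedler-type eigenvector at eigenvalue $\lambda_2^*$.

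The conceptual core is the identification $\mathrm{range}(U^{*T}) = \mathrm{range}(X^*)$, which converts a geometric projection statement into an algebraic range statement; once that is in place, the rest is routine SDP duality together with the centering constraint. The one subtlety I would flag explicitly is the possibility that $p = 0$ for directions $h$ orthogonal to the affine hull of the embedding, which is why the statement restricts to nonzero one-dimensional subspaces of the embedding span.
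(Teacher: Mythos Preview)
Your argument is correct and follows exactly the approach the paper indicates: the paper does not write out a proof of this lemma but states that it ``is a result of the complementary condition in the primal-dual formulation'' and cites \citet{Boyd2006FastestMP} and \citet{shakeri2020designing}. Your write-up is precisely the complementary-slackness argument those references supply---using $S^*X^*=0$ to force $\mathrm{range}(X^*)$ into $\ker S^*$, eliminating the $\mathbf{1}$-direction via $\langle X^*,\mathbf{1}\mathbf{1}^T\rangle=0$ and $X^*\succeq 0$, and then identifying $\mathrm{range}(X^*)=\mathrm{range}(U^{*T})$ through the Gram factorization---so there is nothing substantively different to compare.
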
 
The following proposition is a result of Lemmas \ref{lem:Opt0} and \ref{lem:EmbProj}.  
\begin{proposition}\label{lem:clump}
	Assume the regularity condition \eqref{eq:RegulCond0} is feasible. For budget values up to the threshold $c^*$, $c\leq c^*$, the optimal solution of the embedding problem is given as
	\begin{equation}\label{eq:UniformEmbed}
	\begin{aligned}
	u_i^*=\begin{cases}
	mh & \text{if}\ i \in V_1 \\ 
	-nh, & \text{if}\ i \in V_2
	\end{cases}
	\end{aligned}
	\end{equation} 
	where $\boldsymbol{h}=\langle h\rangle$ is a one-dimensional subspace.
\end{proposition}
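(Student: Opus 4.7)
The plan is to combine the two ingredients that have been set up: Lemma \ref{lem:Opt0}, which gives the explicit form of the Fiedler vector for $c\leq c^*$, and Lemma \ref{lem:EmbProj}, which says that every one-dimensional projection of the optimal embedding is an eigenvector associated with $\lambda_2^*$. Since strong duality holds, these optimal embedding vectors $u_i^*\in\mathbb{R}^n$ inherit structural constraints from the primal Fiedler eigenspace.

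First I would argue, at least for $c<c^*$, that the Fiedler eigenspace of $L(w^*)$ is exactly one-dimensional, spanned by $f=\begin{bmatrix} m\boldsymbol{1}_n \\ -n\boldsymbol{1}_m\end{bmatrix}$. By Lemma \ref{lem:Opt0} the optimal value is $\lambda_2^* = (1/n+1/m)c$ and $f$ is an eigenvector; the bound of Lemma \ref{lem:lam2bound1} together with the fact that $\lambda_3$ has not yet coalesced with $\lambda_2$ (this is the defining property of $c^*$) gives the one-dimensionality. Next, for any nonzero $h\in\mathbb{R}^n$, Lemma \ref{lem:EmbProj} implies that the vector of projections $\bigl(\langle u_i^*,h\rangle\bigr)_{i=1}^{N}$ is a scalar multiple $\alpha(h)f$ of the Fiedler vector.

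From this I would extract the claimed structure coordinate by coordinate. For any $i,j\in V_1$ and any direction $h$, one has $\langle u_i^*-u_j^*,h\rangle = (\alpha(h)m - \alpha(h)m) = 0$; letting $h$ range over $\mathbb{R}^n$ forces $u_i^*=u_j^*$, so all embedding vectors in $V_1$ coincide with a common vector $\tilde u^{(1)}$. The same argument gives $u_i^*=\tilde u^{(2)}$ for all $i\in V_2$. The ratio constraint $\langle \tilde u^{(1)},h\rangle/m = -\langle \tilde u^{(2)},h\rangle/n$ for every $h$ then forces $n\tilde u^{(1)} = -m\tilde u^{(2)}$. Setting $h^* := \tilde u^{(1)}/m$, we recover $\tilde u^{(1)} = m h^*$ and $\tilde u^{(2)} = -n h^*$, which is precisely \eqref{eq:UniformEmbed} with the one-dimensional subspace $\langle h^*\rangle$. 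A quick consistency check against the embedding normalization $\sum_i \|u_i^*\|^2=1$ and the barycenter constraint $\sum_i u_i^* = 0$ (the latter is automatic since $nm h^* - m n h^* =0$) pins down $\|h^*\|$.

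The main obstacle is the boundary case $c=c^*$, where by definition the second and third Laplacian eigenvalues coalesce and the Fiedler eigenspace becomes at least two-dimensional, so the projection argument no longer forces collapse onto a single line. I would address this by first establishing the result rigorously on $c<c^*$, and then extending to $c=c^*$ via a continuity/limit argument: the map $c\mapsto u_i^*(c)$ admits limit points (by compactness of the embedding constraints), each limit must be feasible and optimal at $c^*$, and it inherits the form \eqref{eq:UniformEmbed} from the strict case. A secondary subtlety is verifying that the \emph{one-dimensional} subspace asserted in the proposition is indeed uniquely determined up to sign (not that the embedding lies in some higher-dimensional space trivially), which is exactly what the step-by-step projection argument above secures.
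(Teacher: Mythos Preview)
Your proposal is correct and follows essentially the same approach as the paper, which simply states that the proposition ``is a result of Lemmas \ref{lem:Opt0} and \ref{lem:EmbProj}'' without further detail. You supply the missing argument---the one-dimensionality of the Fiedler eigenspace for $c<c^*$ combined with the projection lemma to force all $u_i^*$ in a layer to coincide---and you correctly flag and handle the boundary case $c=c^*$ via a limit, a point the paper leaves implicit.
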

The embedding \eqref{eq:UniformEmbed} implies each layer clumps together at the opposite sides with respect to the other layer, while distanced from the origin inversely proportional to the number of its nodes. In this case, the Fiedler cut distinguishes the individual layers \citep{Gregorio2014algebraic}. The condition in \eqref{eq:UniformEmbed} is similar to the momentum balance condition \citep
{Shames1996} of two masses attached to opposite sides of a rotating rigid uniform rod. 

Lemma \ref{lem:EmbProj} is a result of the complementary condition in the primal-dual formulation \citep{boyd2004convex}, and is first established by \citet{Boyd2006FastestMP} for single-layer networks and  extended to multilayer networks by \citet{shakeri2020designing}. It relates the solution of the dual problem to the eigenspace of maximum $\lambda_2$ and helps understanding how to improve diffusion speed \citep{GomezDiffusion2013}, and how synchronization \citep{Strogatz2001, Arena2008Sync} can happen faster. 
This is particularly of interest in multilayer networks where processes show  richer dynamics \citep{Buldyrev2010, GomezDiffusion2013, Radicchi2013, Zhang2015Explosive}. 

\begin{remark} \label{rem:EmbDim}
	The  multiplicity of $\lambda_2(L)$ sets an upper-bound on the dimension of realization \citep{Helmberg2010}. This can be understood from Proposition \ref{lem:EmbProj} and  the dimension of the eigenspace corresponding to $\lambda_2$.
\end{remark}

For connected single-layer networks, we scale the weights in \eqref{lambda_2Dual} by $c\lambda_2\neq0$ and obtain a scaled version of the primal-dual problem for multilayer networks \citep{GoringShadowSeperator}
\begin{equation}\label{eq:PrimalScaled}
\begin{aligned}
& \underset{\hat{w}_{i,j}\in\mathbb{R}^{E_3}}{\text{minimize}}
& & \sum_{\lbrace i,j\rbrace\in E_3}\hat{w}_{ij} \\
& \text{subject to}
& & c\sum_{i,j\in E_3}\hat{w}_{ij}B_{ij}+(\sum_{\lbrace i,j\rbrace\in E_3}\hat{w}_{ij})L_0+\hat{\mu} \boldsymbol{e} \boldsymbol{e}^T - I\succeq 0 \\
& & & \hat{w}_{ij}\geq 0 \ \ \forall\lbrace i,j\rbrace\in E_3
\end{aligned}
\end{equation}
where $ \hat{w}_{ij} =\frac{w_{ij}}{c\lambda_2} $ and the scaled dual (embedding) problem \eqref{Lambda2Embedding} is written as 
\begin{equation}\label{eq:EmbeddingScaled}
\begin{aligned}
& \underset{\hat{u}_i\in\mathbb{R}^n}{\text{maximize}}
& & \sum_{i\in V}\| \hat{u}_i \|^2 \\
& \text{subject to}
& & c\| \hat{u}_i - \hat{u}_j\|^2+\sum_{\lbrace k,l\rbrace \in E_1}\| \hat{u}_k - \hat{u}_l\|^2+\\
& & &  \sum_{\lbrace k,l\rbrace \in E_2}\| \hat{u}_k - \hat{u}_l\|^2\leq 1 \ \ \forall\lbrace i,j\rbrace\in E_3\\
& & & \sum_{i\in V}\hat{u}_i= 0
\end{aligned}
\end{equation}
\citet{GoringShadowSeperator} used the scaled primal-dual formulation to   prove the Separator-Shadow Theorem. Moreover, \citet{goring2011rotational} introduced the rotational dimension of a graph which is the maximal minimum dimension of an optimal graph realization, and proved it is bounded based on the  graph tree-width \citep{Diestel2017}. The importance of these theorems is that they establish the existence of low dimensional optimal realizations. 

\section{Multilayer networks with all-pairs interconnection possibility}\label{sec:all-pair}
In this section, we remove all topological constraints and allow all-to-all interconnection 
in  multilayer networks.
Firstly, we study the spectral properties for identical weights assigned to all interlinks and secondly, investigate the primal and dual problems. Furthermore,  we explore the conditions under which the optimal weights are nonuniform and illustrate that the optimal strategy assigns zero weights to some edges despite possible all-pairs interconnections. 
Several results of this section emphasize the role of algebraic connectivity divided by number of nodes in a layer, which we call specific algebraic connectivity, and Fiedler vectors of subgraphs in optimum diffusion determination. 
\subsection{Uniform weights}\label{Sec:UnifWeight}
Considering uniform  weights for the interlinks, $w_{ij}=c/nm$ and $W=\frac{c}{nm}J$ with $J$ the ${n\times m}$ all ones matrix, results in the following eigenvalue problem 
\[
L\begin{bmatrix} u \\ v \end{bmatrix}=\lambda\begin{bmatrix} u \\ v \end{bmatrix}
\]
with $n-1$ solutions:
\begin{equation}\label{eq:LamUnif1}
\begin{gathered}
\lambda=\lambda_i^{(1)}+\frac{c}{n}, \ \ \ \begin{bmatrix} u \\ v \end{bmatrix}=\begin{bmatrix} u_i^{(1)} \\ 0 \end{bmatrix}
\end{gathered}
\end{equation}
where $\lambda_i^{(1)}$, $i=2,...,n$, are nonzero eigenvalues of $L_1$ and $u_i^{(1)}$ are the corresponding eigenvectors. 
Indeed, for uniform weights $W=\frac{c}{nm}J=\frac{c}{nm}\boldsymbol{1}_n\boldsymbol{1}_m^T$ and by the Laplacian \eqref{eq:LaplacMatrix} we have 
\[
L\begin{bmatrix} u_i^{(1)} \\ 0 \end{bmatrix}=\begin{bmatrix} L_1+\frac{c}{n}I_n &&& -\frac{c}{nm}\boldsymbol{1}_n\boldsymbol{1}_m^T \\ -\frac{c}{nm}\boldsymbol{1}_m\boldsymbol{1}_n^T &&& L_2+\frac{c}{m}I_m \end{bmatrix}\begin{bmatrix} u_i^{(1)} \\ 0\end{bmatrix}=\begin{bmatrix} L_1u_i^{(1)}+\frac{c}{n}u_i^{(1)} \\ -\frac{c}{nm}\boldsymbol{1}_m\boldsymbol{1}_n^Tu_i^{(1)}\end{bmatrix}=\left(\lambda_i^{(1)}+\frac{c}{n}\right)\begin{bmatrix} u_i^{(1)} \\ 0 \end{bmatrix}
\]
where $I_n$ and $I_m$ are respectively $n\times n$ and $m\times m$ identity matrices and we know that $\boldsymbol{1}_n^Tu_i^{(1)}=0$ for any eigenvector $u_i^{(1)}$ associated with a nonzero Laplacian eigenvalue in a connected graph. Similarly, $m-1$ nonzero of $L$ are given as
\begin{equation}\label{eq:LamUnif2}
\begin{gathered}
\lambda=\lambda_j^{(2)}+\frac{c}{m}, \ \ \ \begin{bmatrix} u \\ v \end{bmatrix}=\begin{bmatrix} 0 \\ v_j^{(2)} \end{bmatrix}
\end{gathered}
\end{equation}
where $\lambda_j^{(2)}$ is the $j$-th eigenvector of $L_2$ and $v_j^{(2)}$ the corresponding eigenvector. 

The remaining nonzero eigenvalue with its corresponding eigenvector are
\begin{equation}\label{eq:LamUnif0}
\begin{gathered}
\lambda=\left(\frac{1}{n}+\frac{1}{m}\right)c, \ \ \ \begin{bmatrix} u \\ v \end{bmatrix}=\frac{1}{\sqrt{2nm}}\begin{bmatrix} m\boldsymbol{1}_n \\ -n\boldsymbol{1}_m \end{bmatrix}
\end{gathered}
\end{equation}
hence, all eigenvalues increase linearly with $c$. The second largest eigenvalue is obtained as
\begin{equation}\label{lambda2_unif}
\begin{aligned}
\lambda_2\left[L\right]=\text{min}\left[\left(\frac{1}{n}+\frac{1}{m}\right)c, \ \lambda_2^{(1)}+\frac{c}{n}, \ \lambda_2^{(2)}+\frac{c}{m}\right]
\end{aligned}
\end{equation}
Varying $c$ results in  transitions  among the three linear functions in \eqref{lambda2_unif}. Figure \ref{fig:UniformCase1} illustrates a case, where $n>m$, $\lambda_2\left[L_1\right]>\lambda_2\left[L_2\right]$, and $\lambda_2\left[L_1\right]/n>\lambda_2\left[L_2\right]/m$, with two transitions occurring in
\begin{equation}\label{eq:thresholds}
\begin{aligned}
c^*=n\lambda_2^{(2)}, \ \ \ c^{**}=\left(\frac{1}{m}-\frac{1}{n}\right)^{-1}\left(\lambda_2^{(1)}-\lambda_2^{(2)}\right)
\end{aligned}
\end{equation}

Figure \ref{fig:UniformWeights} also illustrates Cases 2 and 3 where each holds only one transition. The transition in Case 3 is $c^*$ in \eqref{eq:thresholds}, and the only transition in Case 2 is
\begin{equation}\label{eq:threshold2}
\begin{aligned}
c^*=m\lambda_2^{(1)}
\end{aligned}
\end{equation} 

For the special case of subgraphs with the same number of nodes, $n=m$, the algebraic connectivity before $c^*$ is $2c/n$, and after $c^*$  is equal to $\lambda_2^{(0)}+c/n$ where $\lambda_2^{(0)}$ denotes the minimum algebraic connectivity of subgraphs, $\lambda_2^{(0)}=\text{min}(\lambda_2^{(1)},\lambda_2^{(2)})$. The threshold is $c^*=n\lambda_2^{(0)}$, in this special case. \\
  
\begin{figure*}
	\centering
	\subfloat[\label{fig:UniformCase1}]{\includegraphics[clip,width=.3\columnwidth]{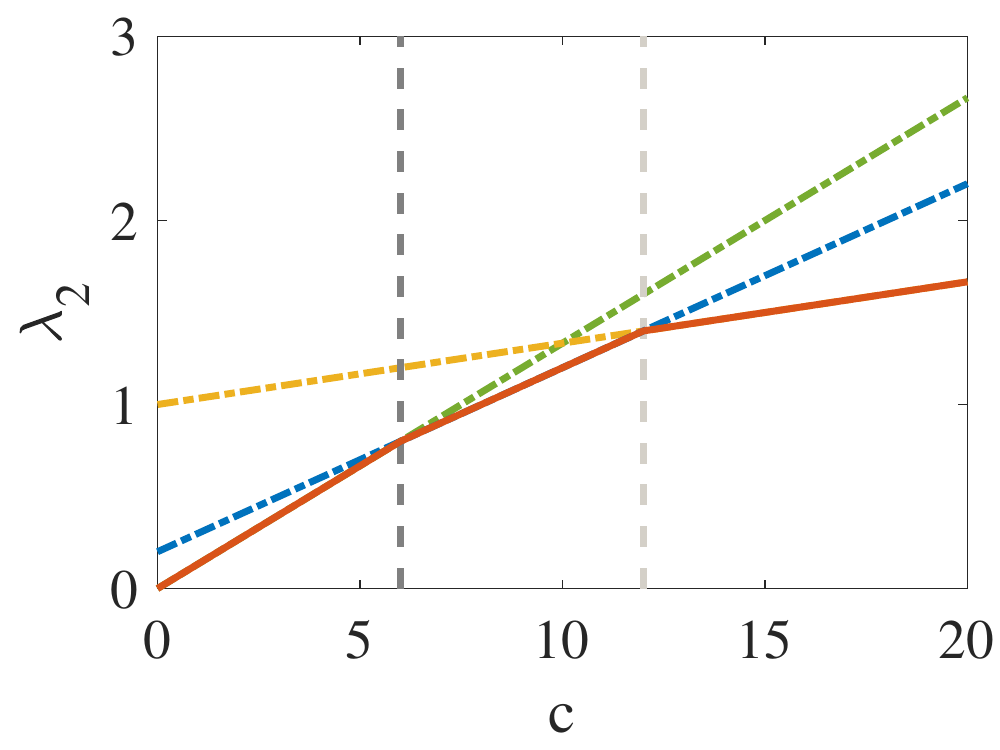}} 
	\subfloat[\label{fig:UniformCase2}]{\includegraphics[clip,width=.3\columnwidth]{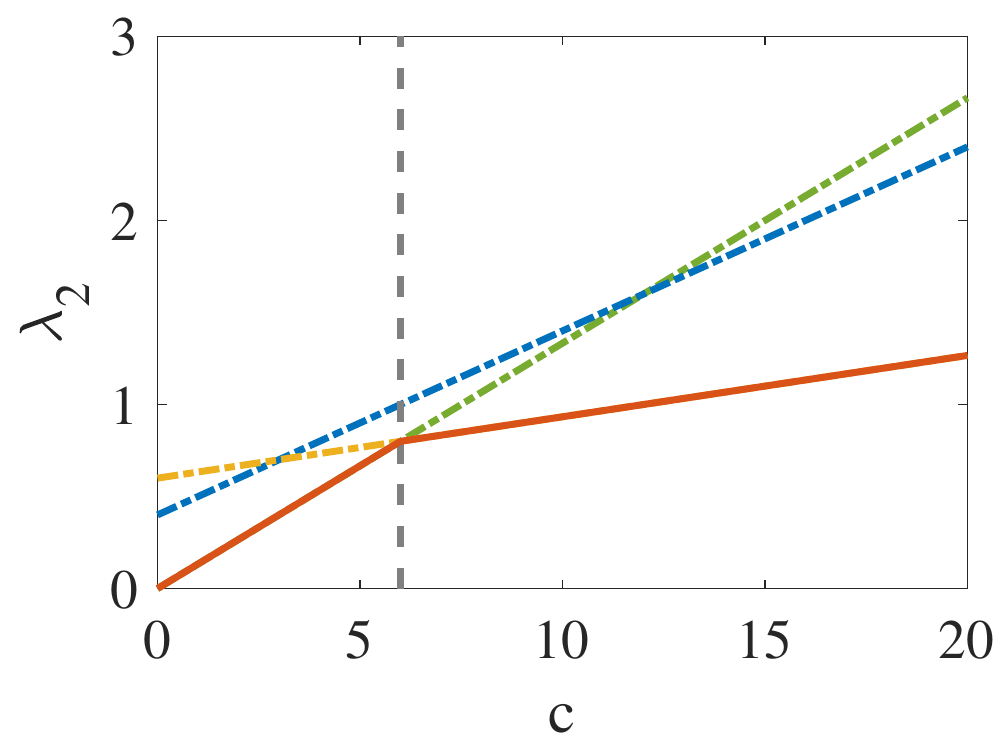}}
	\subfloat[\label{fig:UniformCase3}]{\includegraphics[clip,width=.3\columnwidth]{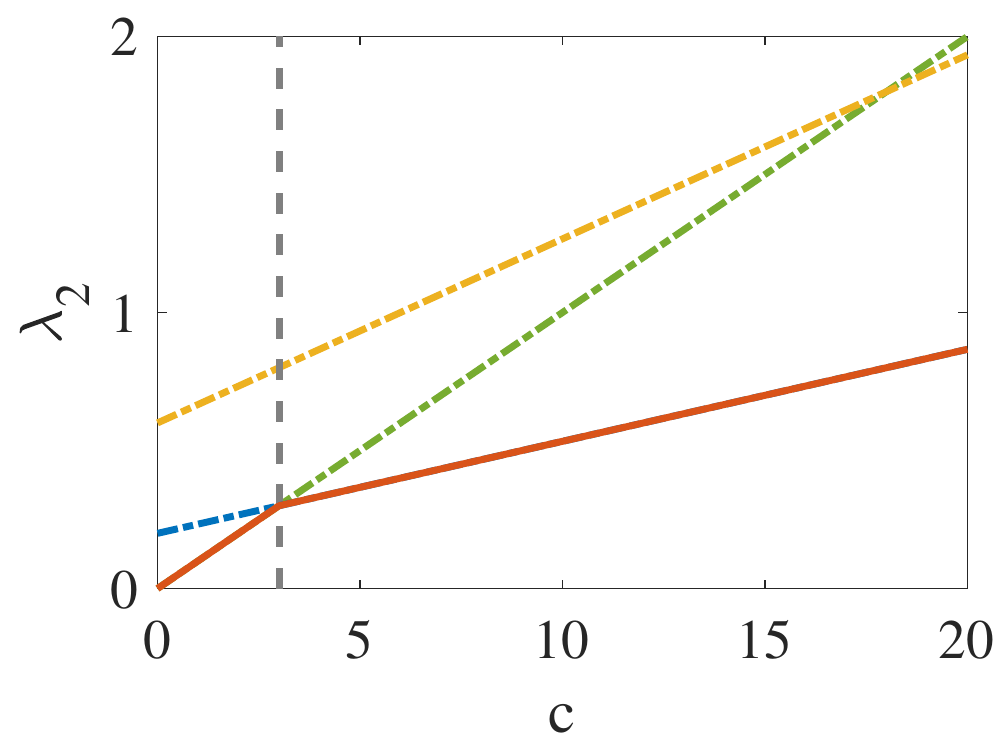}}\\
	\subfloat[\label{}]{\includegraphics[clip,width=.3\columnwidth]{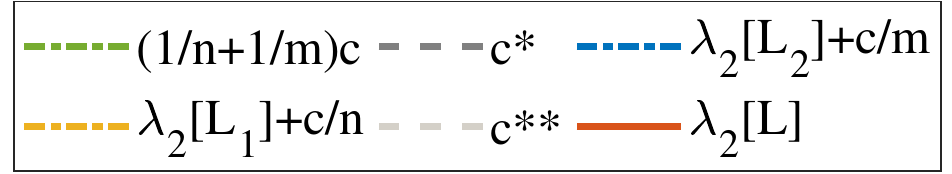}}
	\caption{Algebraic connectivity of supra-Laplacian $L$ as function of total budget $c$ for uniform weight distribution. (a) Case 1: $n>m$,  $\lambda_2\left[L_1\right]/n>\lambda_2\left[L_2\right]/m$ (thus $\lambda_2\left[L_1\right]>\lambda_2\left[L_2\right]$), (b) Case 2: $n>m$, $\lambda_2\left[L_1\right]>\lambda_2\left[L_2\right]$,  $\lambda_2\left[L_1\right]/n<\lambda_2\left[L_2\right]/m$, (c) Case 3 : $n<m$, $\lambda_2\left[L_1\right]>\lambda_2\left[L_2\right]$ (thus $\lambda_2\left[L_1\right]/n>\lambda_2\left[L_2\right]/m$).}
	\label{fig:UniformWeights} 
\end{figure*}

In all three cases illustrated in Figure \ref{fig:UniformWeights}, individual network components play no role on supra-Laplacian algebraic connectivity for  $c\leq c^*$. By increasing $c$ beyond the threshold $c> c^*$, the algebraic connectivities of  individual layers $G_1$ and $G_2$ per unit node, i.e. $\lambda_2\left[L_1\right]/n$ and $\lambda_2\left[L_2\right]/m$ are the main parameters characterizing the algebraic connectivity of the whole network.
We call the algebraic connectivity per unit node the \textit{specific algebraic connectivity}. The term ``specific'' is motivated by its use in thermodynamics \citep{Sonntag2009} where it refers to a quantity per unit mass. A specific quantity is an intensive property because it does not depend on substance amount. Here, for instance, a complete graph with $n$ nodes has algebraic connectivity equal to $n$, while its specific algebraic connectivity is unity, thus independent of graph number of nodes $n$. As a feature of three cases of Figure \ref{fig:UniformWeights}, the subgraph with larger specific algebraic connectivity determines the supra-Laplacian algebraic connectivity only in Case 1 when $c>c^{**}$, and under all other circumstances of $c>c^*$ it is the subgraph with smaller specific connectivity that determines the supra-Laplacian algebraic connectivity.

\emph{Super-diffusion} happens when diffusion in the interconnected network spreads faster than in each individual network if isolated. In multilayer networks with one-to-one interconnection, super-diffusion is possible only if the algebraic connectivity of the average Laplacian $L_{ave}=\frac{1}{2}\left(L_1+L_2\right)$ is greater than the algebraic connectivity values of individual networks \citep{gomez2013diffusion}. This is possible provided the Fiedler vectors of $G_1$ and $G_2$ are far from being parallel (close-to-orthogonal) \citep{sahneh2014exact}. In contrast, in all-to-all interconnection configurations, since algebraic connectivity increases linearly with $c$, super-diffusion always occurs for sufficiently large budgets, and no restriction is imposed on Fiedler vectors of individual networks. In fact, having close-to-orthogonal Fiedler vectors in one-to-one interconnected networks means that links of $G_2$ connect those nodes that are far from each other in $G_1$, and vice versa \citep{sahneh2014exact}. Since this condition is satisfied in all-to-all interconnection regime, super-diffusion is always feasible. However, to explore  whether  super-diffusion is possible before the threshold $c^*$, we investigate if the following inequality is satisfied: $$\lambda_2\left[L\right]>\text{max}\left(\lambda_2^{\left(1\right)},\lambda_2^{\left(2\right)}\right)$$ 
Figure \ref{fig:UniformWeights} shows that in cases 1 and 3,
\begin{equation}\label{eq:SupDifuCase1}
\begin{gathered}
\frac{\lambda_2^{\left(2\right)}}{m}<\frac{\lambda_2^{\left(1\right)}}{n}<\left(\frac{1}{n}+\frac{1}{m}\right)\lambda_2^{\left(2\right)}
\end{gathered}
\end{equation}
and for Case 2, 
\begin{equation}\label{eq:SupDifuCase2}
\begin{gathered}
\frac{\lambda_2^{\left(1\right)}}{n}<\frac{\lambda_2^{\left(2\right)}}{m}<\left(\frac{1}{n}+\frac{1}{m}\right)\lambda_2^{\left(1\right)}
\end{gathered}
\end{equation}

The conditions \eqref{eq:SupDifuCase1} and \eqref{eq:SupDifuCase2} set restriction on the difference between the algebraic connectivity values of the network components, and thus  correlation between structural properties of the layers is required to allow super-diffusion for  $c\le c^*$. 
Presence of significant difference between the algebraic connectivity values of individual networks postpones super-diffusion until large interconnection strengths. Conversely, for close values super-diffusion can occur for values of $c<c^*$, where the network components function distinctly, thus allowing advantage of interconnections while preserving the autonomy of each subsystem \citep{sahneh2014exact}. \\

\subsection{Maximum algebraic connectivity: a perturbation analysis}\label{Sec:Perturb}
Lemmas \ref{lem:lam2bound1} and \ref{lem:UpBound} explain that the maximum algebraic connectivity is upper-bounded and the upperbound \eqref{eq:lam2bound}  is attainable subject to the regularity conditions \eqref{eq:RegulCond0}. 
Section \ref{Sec:UnifWeight} shows that with uniform weights--which satisfy the regularity conditions--the algebraic connectivity can reach its upper-bound \eqref{eq:lam2bound} in $c\leq c^*$. 
Therefore, for $c\le c^*$, the uniform weight distribution is the solution of maximum eigenvalue problem \eqref{eq:maxLam}. Although,  Lemma \ref{lem:Opt0} gives the solution of maximum algebraic connectivity analytically when $c\leq c^*$ and regularity is feasible, we resort to the solution of primal and dual SDP formulations discussed in Section \ref{sec:SDP} for other situations. To this end, we conduct a perturbation analysis to shed some light on how the maximum algebraic connectivity behaves after $c>c^*$ and then investigate the SDP results for all-to-all interconnection possibility in sections \ref{sec:PrimAll} and \ref{sec:DuAll}.

For the perturbation analysis, we consider a nominal budget $c_0$ and the associated eigenvalue problem  $L_0x_0=\lambda_0x_0$ where $L_0$ is the nominal Laplacian with an eigenvalue $\lambda_0$ and corresponding eigenvector $x_0$. Then, consider the perturbed quantities $c=c_0+\epsilon c', L=L_0+\epsilon L', \lambda=\lambda_0+\epsilon \lambda', x=x_0+\epsilon x'$, where $\epsilon$ is sufficiently small and the quantities with prime show the levels of perturbations. Now, the eigenvalue problem $Lx=\lambda x$ is written as
\begin{equation}\label{eq:PerturbEigenProb}
\begin{gathered}
\left(L_0+\epsilon L'\right)\left(x_0+\epsilon x'\right)=\left(\lambda_0+\epsilon \lambda'\right)\left(x_0+\epsilon x'\right)
\end{gathered}
\end{equation} 
Equating the coefficients of $\epsilon$ in both sides of \eqref{eq:PerturbEigenProb}, we have
\begin{equation}\label{eq:PerturbEps1}
\begin{gathered}
L_0x'+L'x_0=\lambda_0x'+\lambda'x_0
\end{gathered}
\end{equation}
Inner product of \eqref{eq:PerturbEps1} by $x_0$ yields the following expression for the perturbed value $\lambda'$
\begin{equation}\label{eq:PerturbLam}
\begin{gathered}
\lambda'=\frac{x_0^TL'x_0}{\|x_0\|^2}
\end{gathered}
\end{equation}

Figure \ref{fig:UniformCase1} shows the above analysis for Case 1 at the threshold budget, $c_0=c^*$. We know that before $c^*$ the optimal weights are the uniform ones and that at threshold $c^*$ the second and third eigenvalues coalesce and the third eigenvalue (before $c^*$) becomes the algebraic connectivity (right after $c^*$). Therefore, we repeat the perturbation analysis for the third eigenvalue $\lambda_3$ at $c^*$; using the results of Section \ref{Sec:UnifWeight} for Case 1, for $c_0=c^*$,
\[
\lambda_0=\lambda_3=\lambda_2^{\left(2\right)}+\frac{c^*}{m}, x_0=\begin{bmatrix}
0 \\ v_2^{\left(2\right)}
\end{bmatrix}, L'=\begin{bmatrix}
\text{diag}\left(W'\boldsymbol{1}_m\right) & -W' \\
-{W'}^T & \text{diag}\left({W'}^T\boldsymbol{1}_n\right)
\end{bmatrix}.
\]
where $W'$ is the perturbed weight matrix due to the perturbed budget $c'$.
Substituting the above $x_0$ and $L_0$ into \eqref{eq:PerturbLam} and after normalizing the eigenvector $\|v_2^{\left(2\right)}\|=1$, we obtain the following expression for the increment in algebraic connectivity beyond $c^*$
\begin{equation}\label{eq:PerturbLam2}
\begin{gathered}
\lambda'_2={v_2^{\left(2\right)}}^T\text{diag}\left({W'}^T\boldsymbol{1}_n\right)v_2^{\left(2\right)}
\end{gathered}
\end{equation}
Considering $\lambda=\lambda_0+\epsilon\lambda'$, we can approximate the algebraic connectivity for budget values just above $c^*$. Therefore, maximizing the perturbed value \eqref{eq:PerturbLam2} can be regarded as an equivalent problem for maximizing the algebraic connectivity for sufficiently small increments of $c$ beyond $c^*$.

	We mention two highlights in maximizing \eqref{eq:PerturbLam2}; first, diagonal  elements of $\text{diag}\left({W'}^T\boldsymbol{1}_n\right)$ represent the total weight assigned to each node in Layer 2 and thus $\lambda'_2$ only depends on the interlink weights of nodes in Layer 2. 
	Thus, the optimization mechanism will make no difference between nodes in Layer 1, and continue allocating equal interlink weights to all nodes in this layer
	and the nodes in Layer 1 still experience uniform total interlink weights. Second, the increment in algebraic connectivity in \eqref{eq:PerturbLam2} is correlated with the Fiedler vector of $L_2$. Consequently, maximizing \eqref{eq:PerturbLam2}, require assigning the weights in vector ${W'}^T\boldsymbol 1_n$ to nodes with larger $\left(v_2^{\left(2\right)}\left(i\right)\right)^2$, or  larger absolute value $|v_2^{\left(2\right)}\left(i\right)|$.\\ 

In summary, for sufficiently small increments of budget beyond $c^*$, while the optimal weights in the Layer with larger specific connectivity, remain uniform, the optimal weights in the other Layer  are positively correlated with Fiedler vector. However, this situation turns conversely for larger budget values above the second threshold $c^{**}$. This will be illustrated in next subsection. \\

We can get similar results for Cases 2 and 3 in Figures \ref{fig:UniformCase2} and \ref{fig:UniformCase3}. In particular for Case 2, note that for $c_0=c^*=m\lambda_2^{\left(1\right)}$ the third eigenvalue is $\lambda_2^{\left(1\right)}+\frac{c}{n}$. The corresponding eigenvector is $x_0=\begin{bmatrix}
u_2^{\left(1\right)} \\ 0
\end{bmatrix}$. Using the above perturbation approach, we get the following increment for algebraic connectivity just above $c^*$
\begin{equation}\label{eq:PerturbLam2Case2}
\begin{gathered}
\lambda'_2={u_2^{\left(1\right)}}^T\text{diag}\left(W'\boldsymbol{1}_m\right)u_2^{\left(1\right)}
\end{gathered}
\end{equation}
Likewise, by increasing the total budget $c$ beyond $c^*$ in Case 2, the optimal weight distribution for nodes in the layer with smaller specific connectivity (Layer 1) will be positively correlated with Fiedler vector, and optimal weights for nodes in layer with larger specific connectivity (Layer 2) will  remain uniform. For Case 3, we get the same relation given in \eqref{eq:PerturbLam2}.\\


\subsection{Primal problem}\label{sec:PrimAll}

The SDP \eqref{eq:lambda_2Primal} can be  solved efficiently using convex solvers, e.g. CVX package by \citet{grant2009cvx}. Figure \ref{fig:Primal1} shows the optimal results for an example of Case 1. Figure \ref{fig:Primal1Lam} compares the maximum algebraic connectivity with uniform weights case.
In Figures \ref{fig:Primal1W1} and \ref{fig:Primal1W2}, we observe while the optimal weights assigned to nodes in Layer 1 remain uniform for budgets up to $c^{**}$, they start to become inhomogeneous in Layer 2 right above $c^*$. Here, Layer 1 holds the larger specific algebraic connectivity and Layer 2 holds
 the smaller one. Figures \ref{fig:Primal1W2Case1c5} and \ref{fig:Primal1W2Case1c20},  further clarify that, after $c^*$, the optimal weights in Layer 2 are positively correlated with Fiedler vector components of this layer--these results are in accordance with our perturbation results in Subsection \ref{Sec:Perturb}. \\

For $c>c^{**}$, Figure \ref{fig:Primal1W1} illustrates that weight distribution in Layer 1 becomes heterogeneous and many nodes experience zero interlink weights. Investigation of optimum weights in Figure \ref{fig:Primal1W1Case1c30} reveals that the nodes with no interlinks in Layer 1 are those with smallest components in Fiedler vector. 
	On the contrary, optimal weights in Layer 2 again become (almost) uniform for some budget $c>c^{**}$ (see Figure \ref{fig:Primal1W2}) where for strong coupling strength $c$, the intralinks in the layer with smaller specific algebraic connectivity (here Layer 2) lose their significance against very strong interlinks. This causes the nodes in this layer to be not effectively discriminated by an optimal mechanism, and uniform weights are expected.

	Figure \ref{fig:Primal1Case2} shows the SDP results for Case 2.
We observe that, while the weights in Layer 1 with smaller specific connectivity become inhomogeneous for  budgets $c>c^*$, they remain (almost) uniform in Layer 2 with larger specific connectivity. Figure \ref{fig:Primal1W1Case2} reports a positive correlation between the optimal weights in Layer 1 and the corresponding Fiedler vector components. Figures \ref{fig:Primal1Case3} and \ref{fig:Primal1W2Case3} show the results corresponding to Case 3. These results are the same of Case 2 by changing Layers 1 and 2.  \\
\begin{figure*}
	\centering
	\subfloat[\label{fig:Primal1Lam}]{\includegraphics[clip,width=.3\columnwidth]{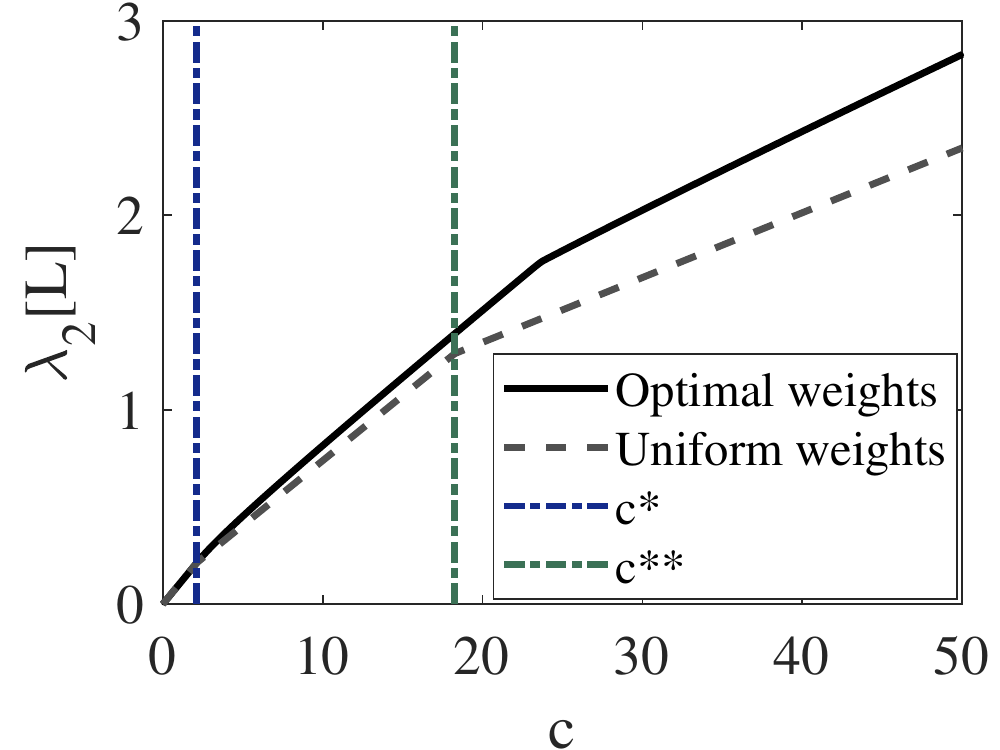}} \ \ \ \ \
	\subfloat[\label{fig:Primal1W1}]{\includegraphics[clip,width=.3\columnwidth]{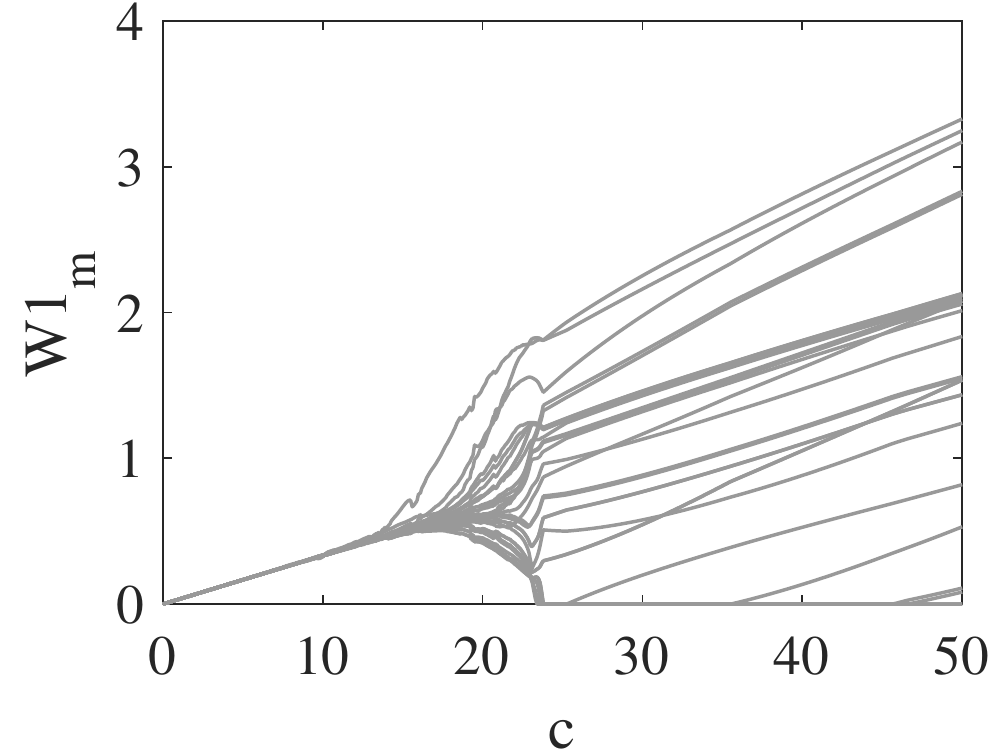}} \ \ \ \ \
	\subfloat[\label{fig:Primal1W2}]{\includegraphics[clip,width=.3\columnwidth]{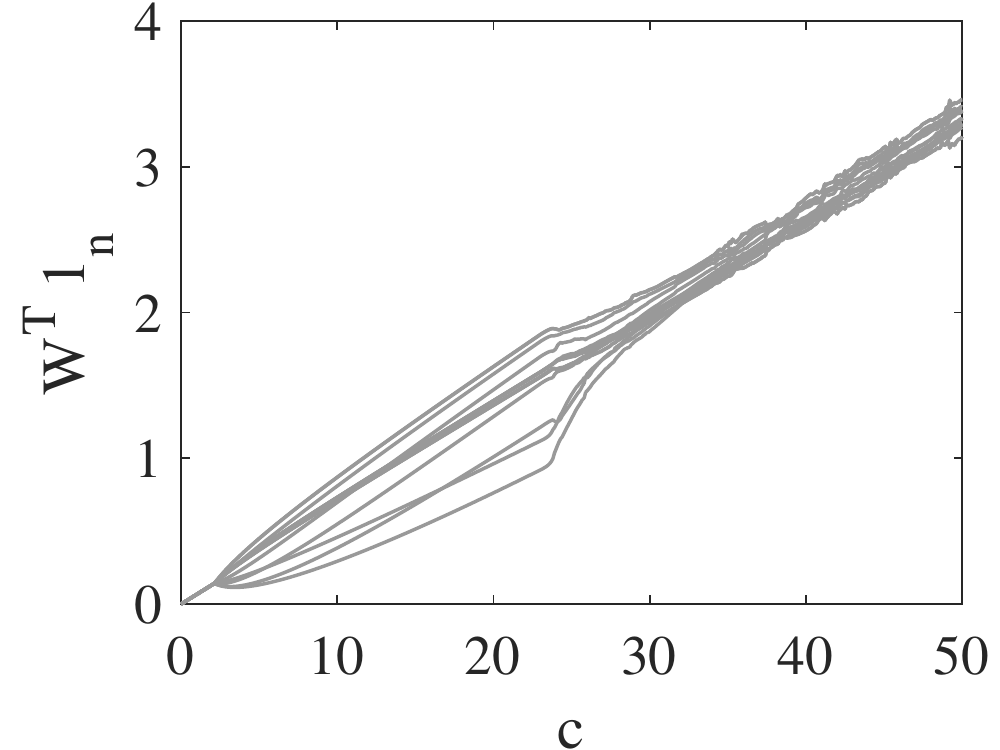}} 
	\caption{SDP results for an example of Case 1 in Figure \ref{fig:UniformWeights}: (a) Algebraic connectivity of supra-Laplacian $L$ as function of total budget $c$, (b) optimal interlayer weights assigned to nodes in Layer 1, and (c) optimal interlayer weights assigned to nodes in Layer 2, for two Geo networks with $n=30, m=15, \lambda_2^{(1)}=0.6798,\lambda_2^{(2)}=0.0712, c^*=2.1373, c^{**}=18.2554$.}
	\label{fig:Primal1} 
\end{figure*}
\begin{figure*}
	\centering
	\subfloat[\label{fig:Primal1W2Case1c5}]{\includegraphics[clip,width=.3\columnwidth]{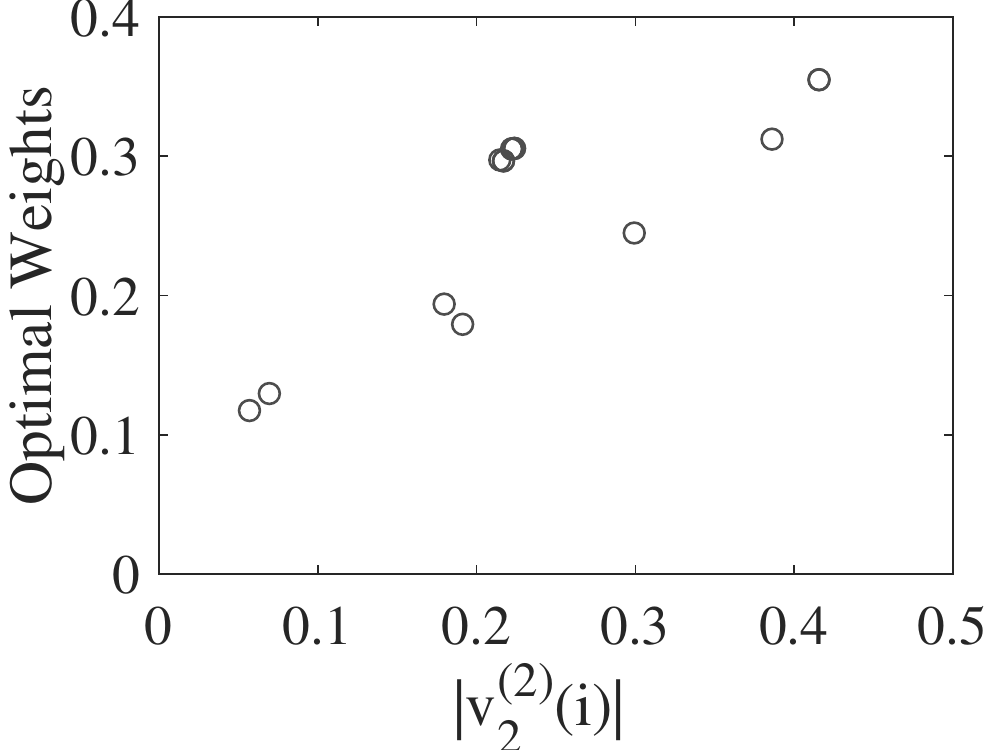}} \ \ \ \ \
	\subfloat[\label{fig:Primal1W2Case1c20}]{\includegraphics[clip,width=.3\columnwidth]{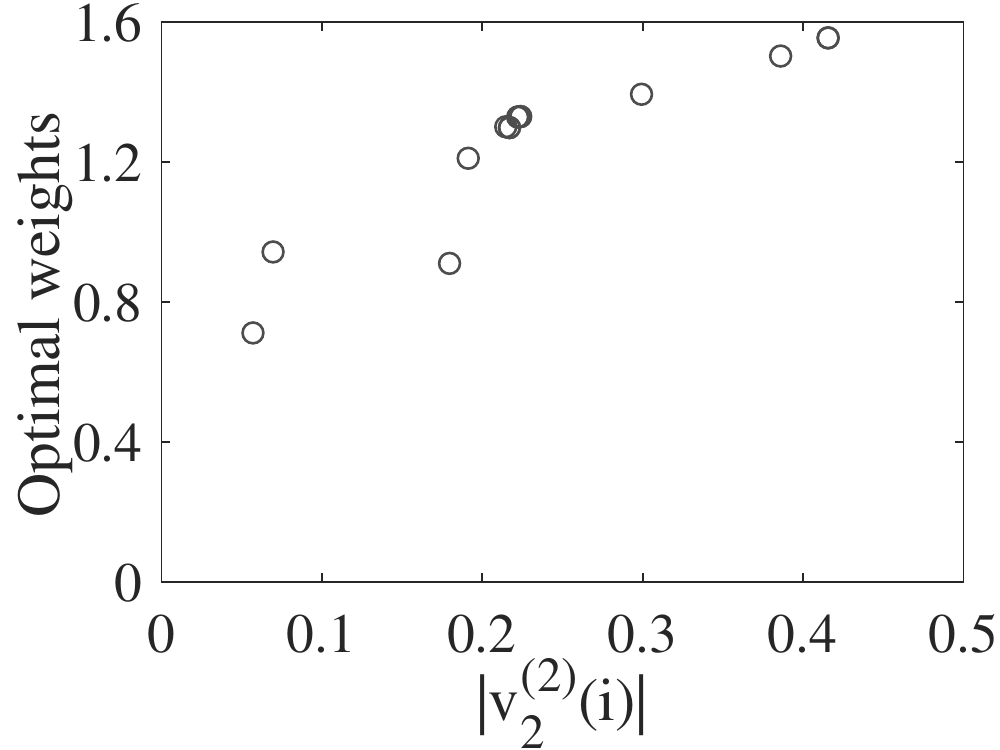}} \ \ \ \ \
	\subfloat[\label{fig:Primal1W1Case1c30}]{\includegraphics[clip,width=.3\columnwidth]{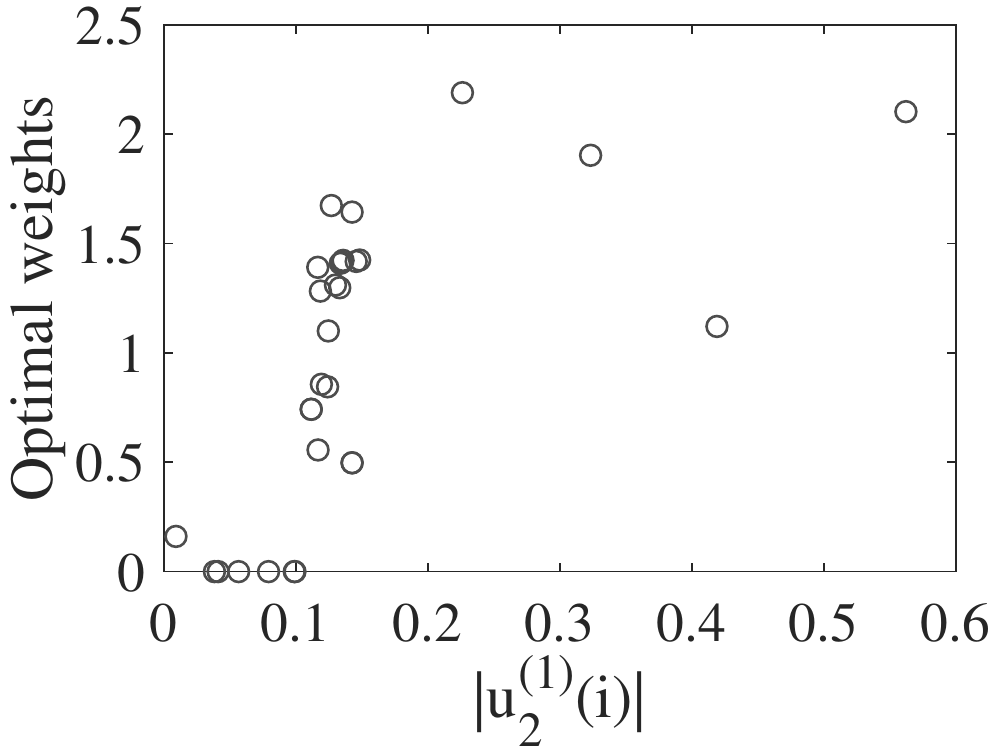}} 
	\caption{Optimal weights as function of Fiedler vector components corresponding to Figure \ref{fig:Primal1} in (a) Layer 2 for $c=5$, (b) Layer 2 for $c=20$, and (c) Layer 1 for $c=30$.}
	\label{fig:Primal1W2Case1} 
\end{figure*}
\begin{figure*}
	\centering
	\subfloat[\label{fig:Primal1LamCase2}]{\includegraphics[clip,width=.3\columnwidth]{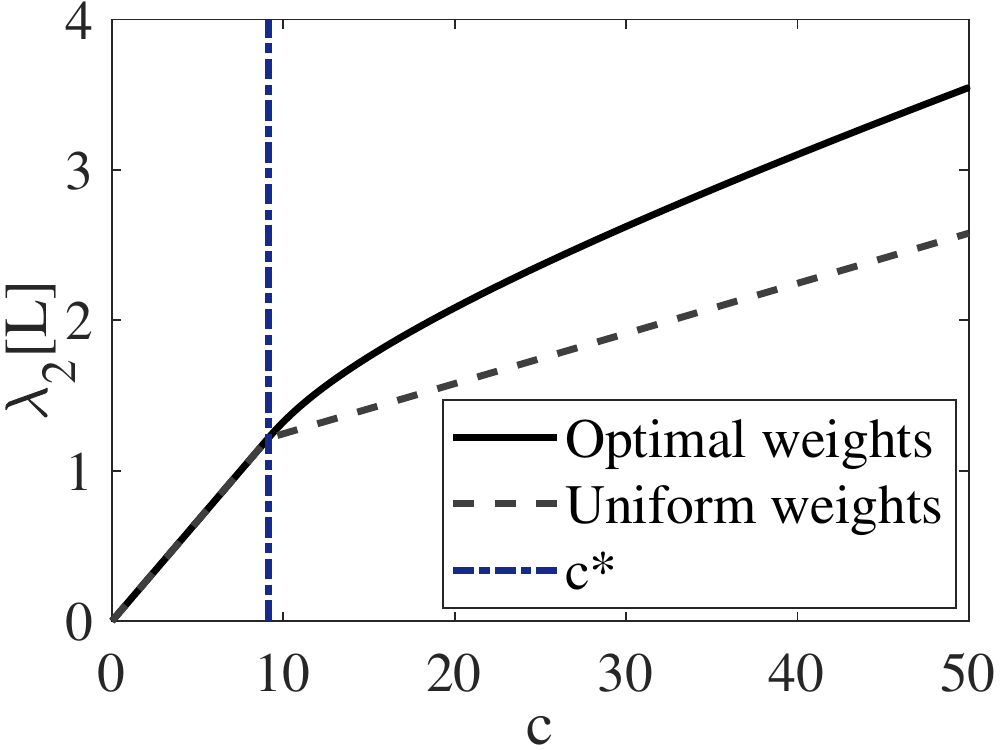}} \ \ \ \ \
	\subfloat[\label{fig:Primal1W1Case2}]{\includegraphics[clip,width=.3\columnwidth]{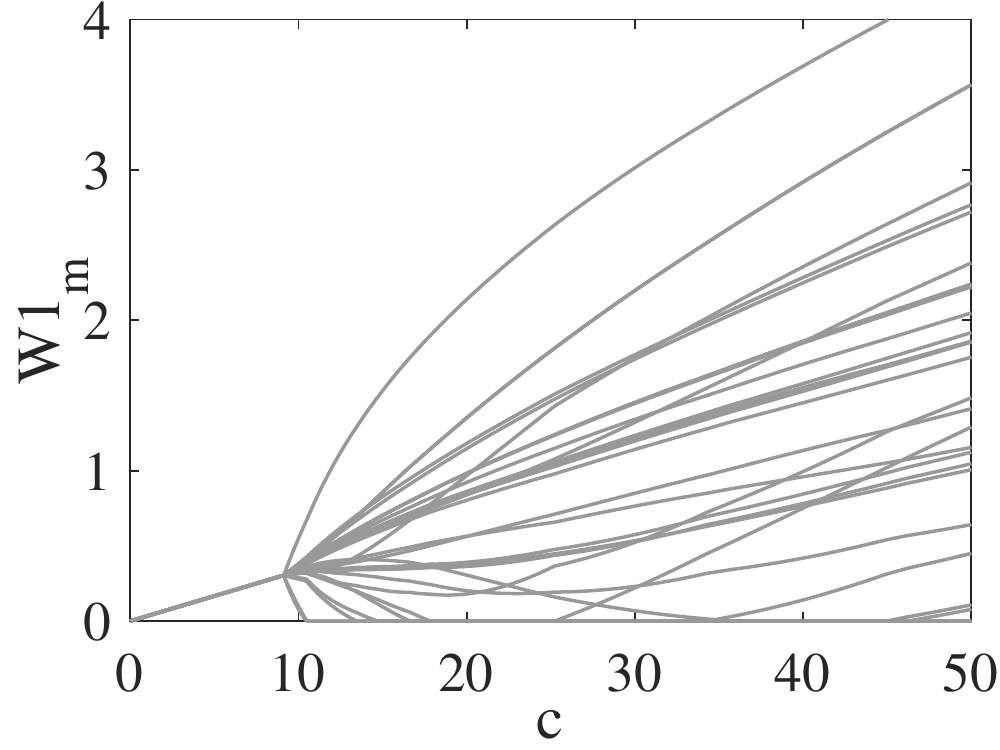}} \ \ \ \ \
	\subfloat[\label{fig:Primal1W2Case2}]{\includegraphics[clip,width=.3\columnwidth]{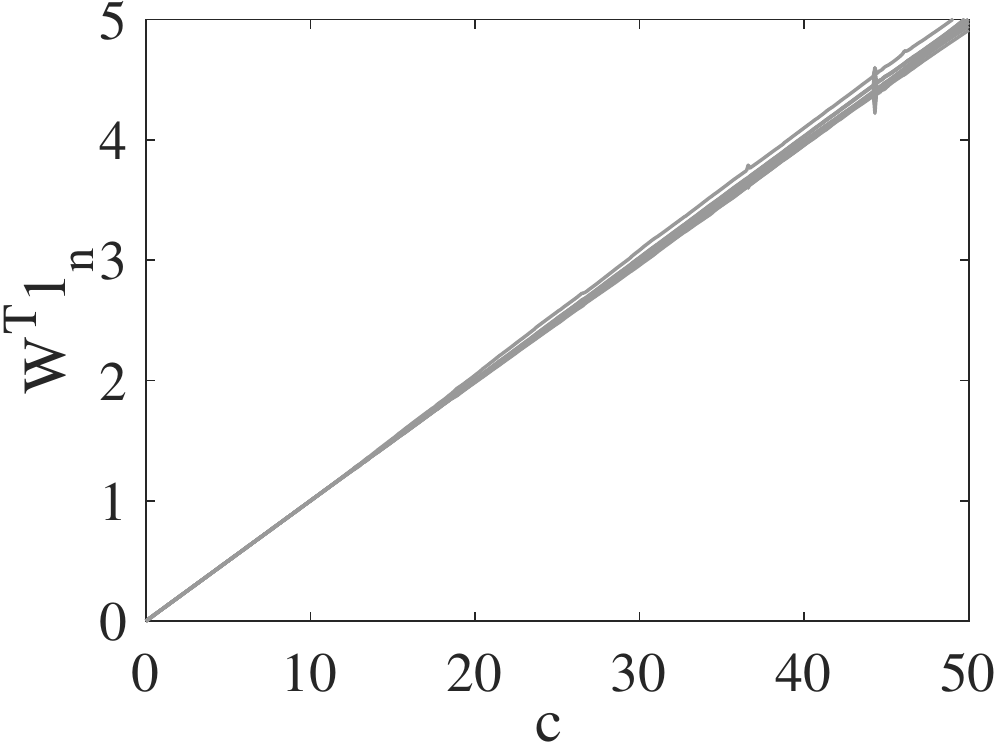}}
	\caption{SDP results for an example of Case 2 in Figure \ref{fig:UniformWeights}: (a) Algebraic connectivity of supra-Laplacian $L$ as function of total budget $c$, (b) optimal interlayer weights assigned to nodes in Layer 1, and (c) optimal interlayer weights assigned to nodes in Layer 2, for two Geo networks with $n=30, m=10, \lambda_2^{(1)}=0.9123,\lambda_2^{(2)}=0.6546$, $c^*=9.1235$.}
	\label{fig:Primal1Case2} 
\end{figure*}
\begin{figure*}
	\centering
	\subfloat[\label{fig:Primal1W1Case2c10}]{\includegraphics[clip,width=.3\columnwidth]{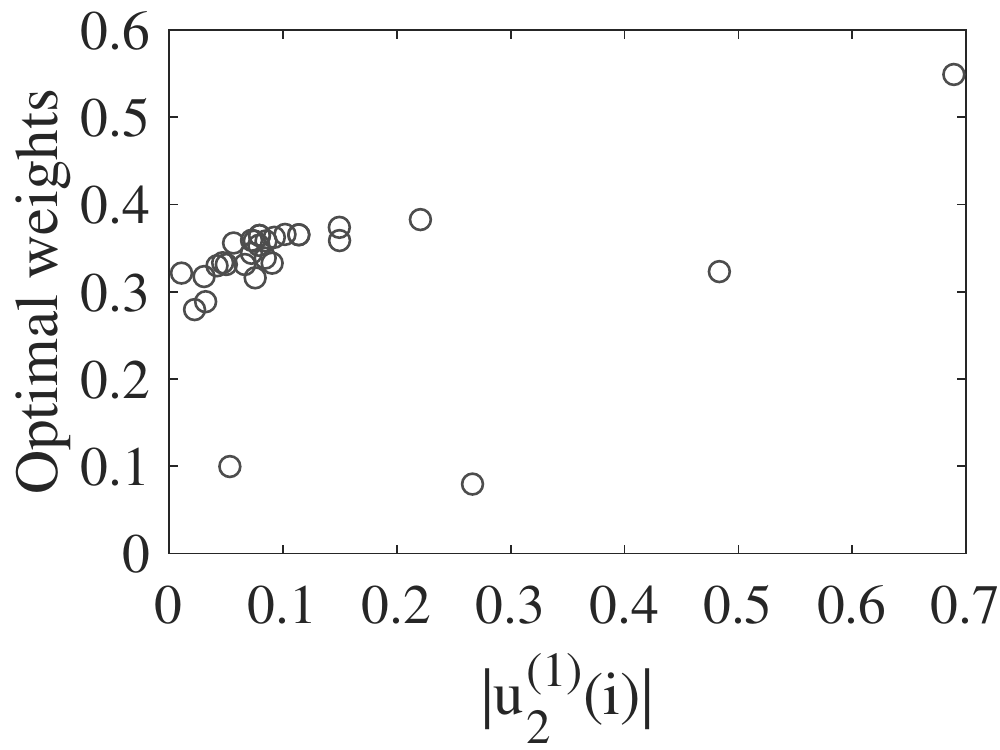}} \ \ \ \ \
	\subfloat[\label{fig:Primal1W1Case2c20}]{\includegraphics[clip,width=.3\columnwidth]{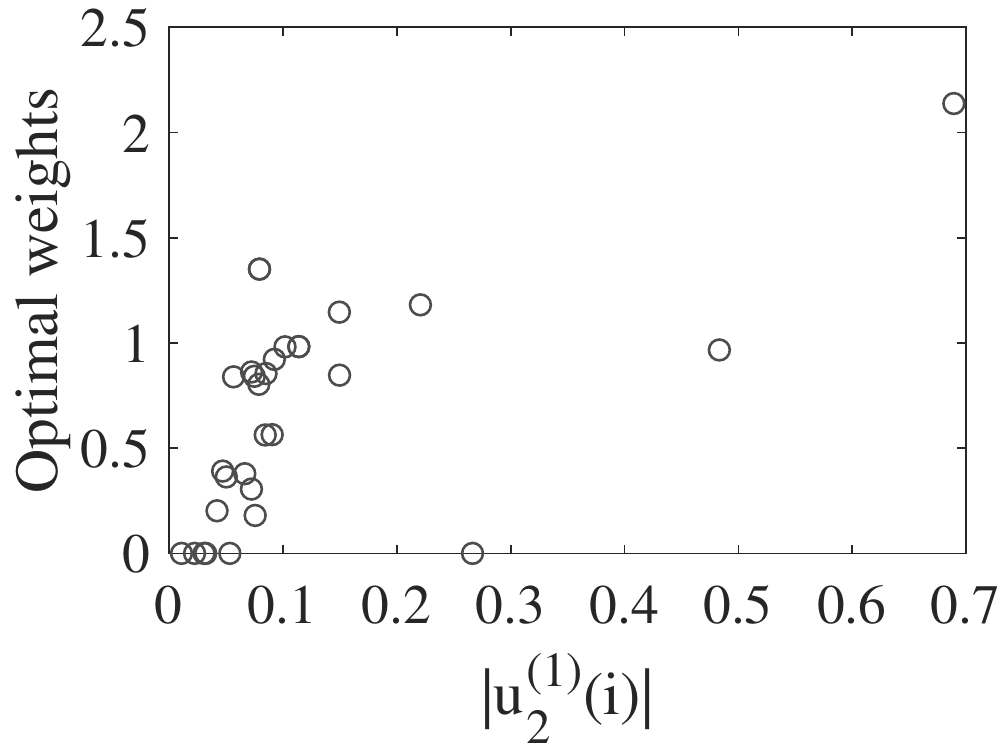}} \ \ \ \ \
	\subfloat[\label{fig:Primal1W1Case2c50}]{\includegraphics[clip,width=.3\columnwidth]{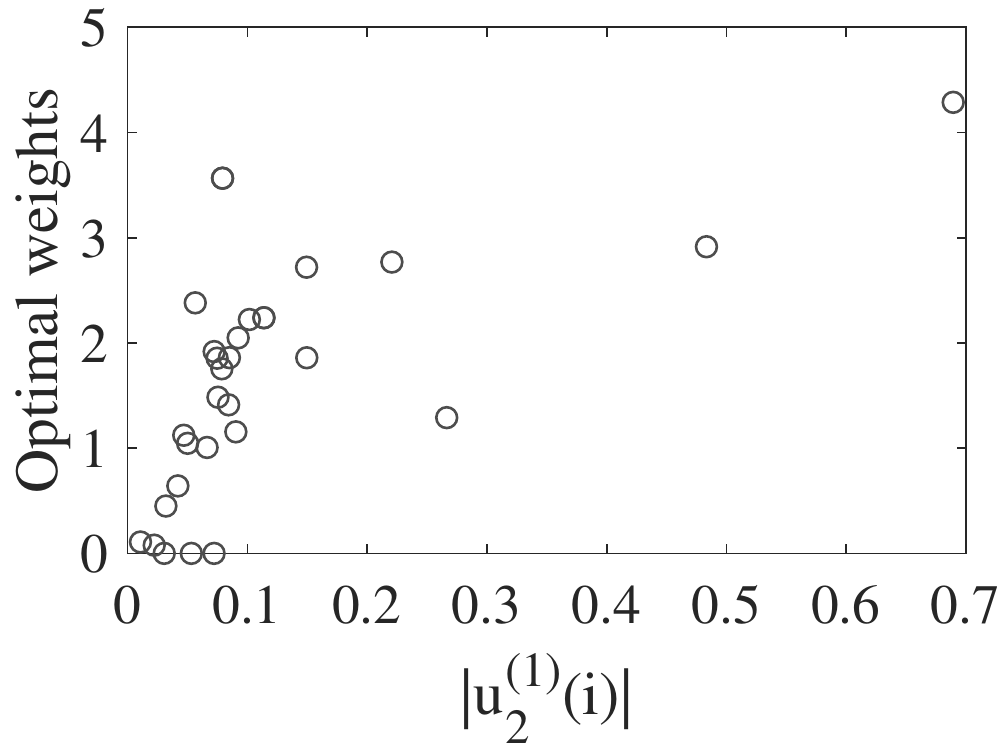}}
	\caption{Optimal weights in Layer 1 as function of Fiedler vector components corresponding to Figure \ref{fig:Primal1Case2} for (a) $c=10$, (b) $c=20$, and (c) $c=50$.}
	\label{fig:Primal1W1Case2} 
\end{figure*}
\begin{figure*}
	\centering
	\subfloat[\label{fig:Primal1LamCase3}]{\includegraphics[clip,width=.3\columnwidth]{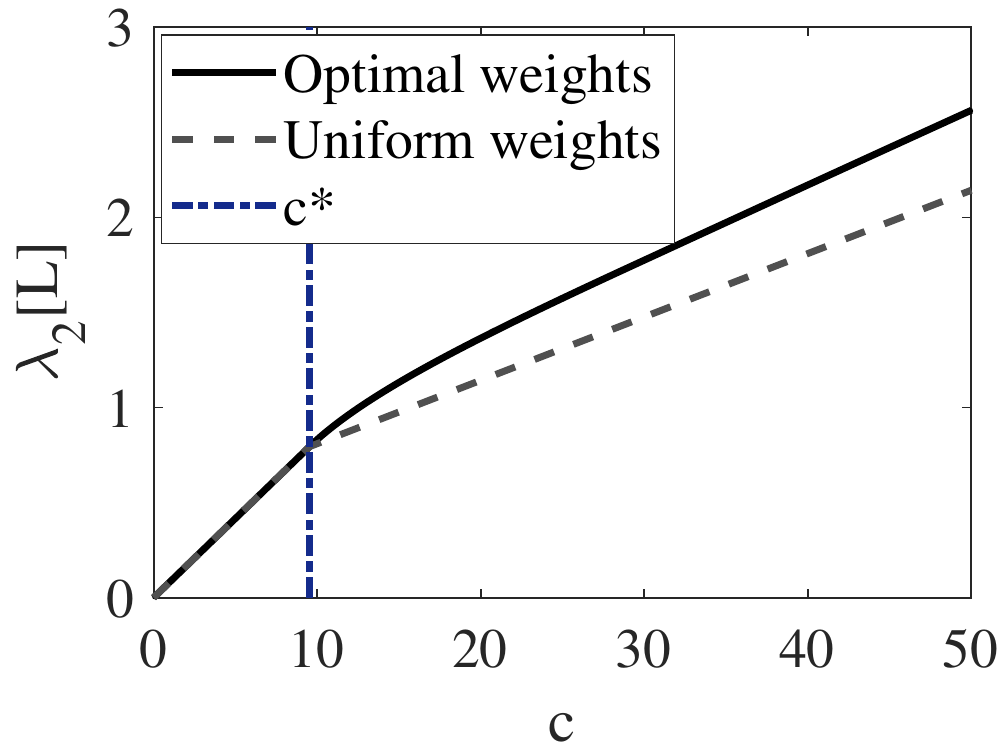}} \ \ \ \ \
	\subfloat[\label{fig:Primal1W1Case3}]{\includegraphics[clip,width=.3\columnwidth]{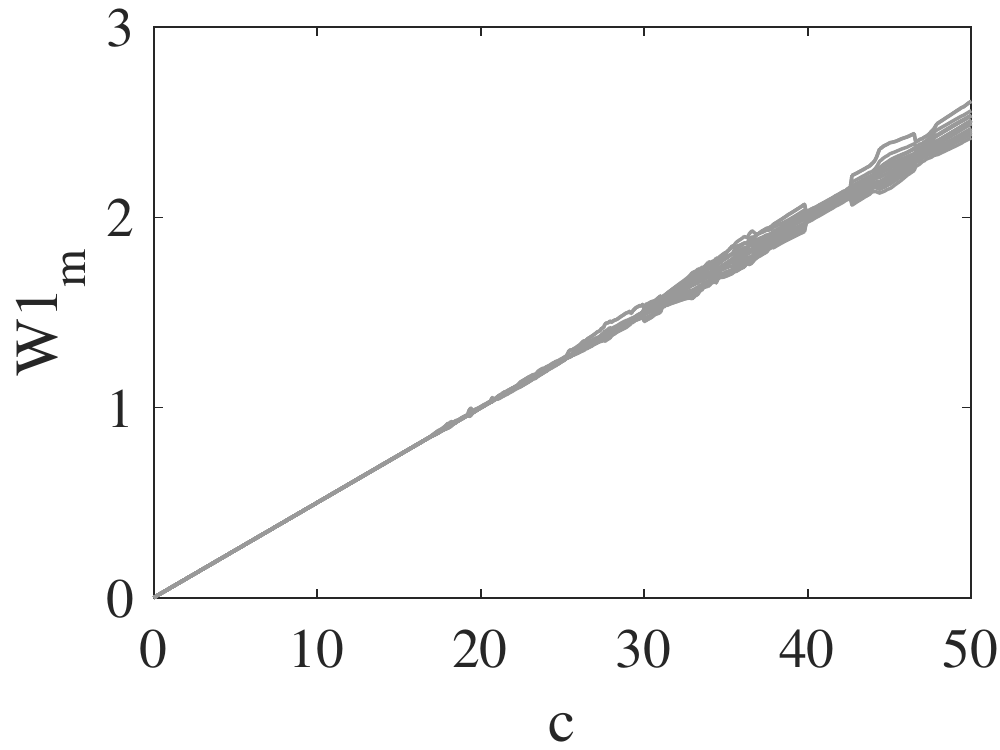}} \ \ \ \ \
	\subfloat[\label{fig:Primal1W2Case3}]{\includegraphics[clip,width=.3\columnwidth]{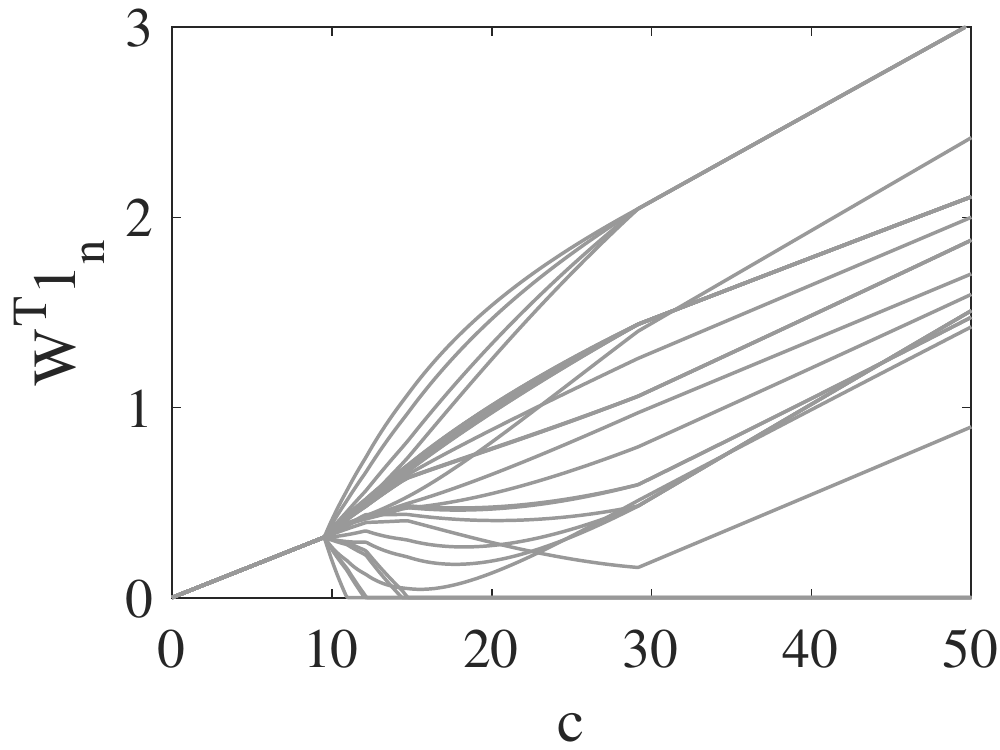}}
	\caption{SDP results for for an example of Case 3 in Figure \ref{fig:UniformWeights}: (a) Algebraic connectivity of supra-Laplacian $L$ as function of total budget $c$, (b) optimal interlayer weights assigned to nodes in Layer 1, and (c) optimal interlayer weights assigned to nodes in Layer 2, for two Geo networks  with $n=20, m=30, \lambda_2^{(1)}=1.3902, \lambda_2^{(2)}=0.4766$, $c^*=9.5320$.}
	\label{fig:Primal1Case3} 
\end{figure*}
\begin{figure*}
	\centering
	\subfloat[\label{fig:Primal1W2Case3c10}]{\includegraphics[clip,width=.3\columnwidth]{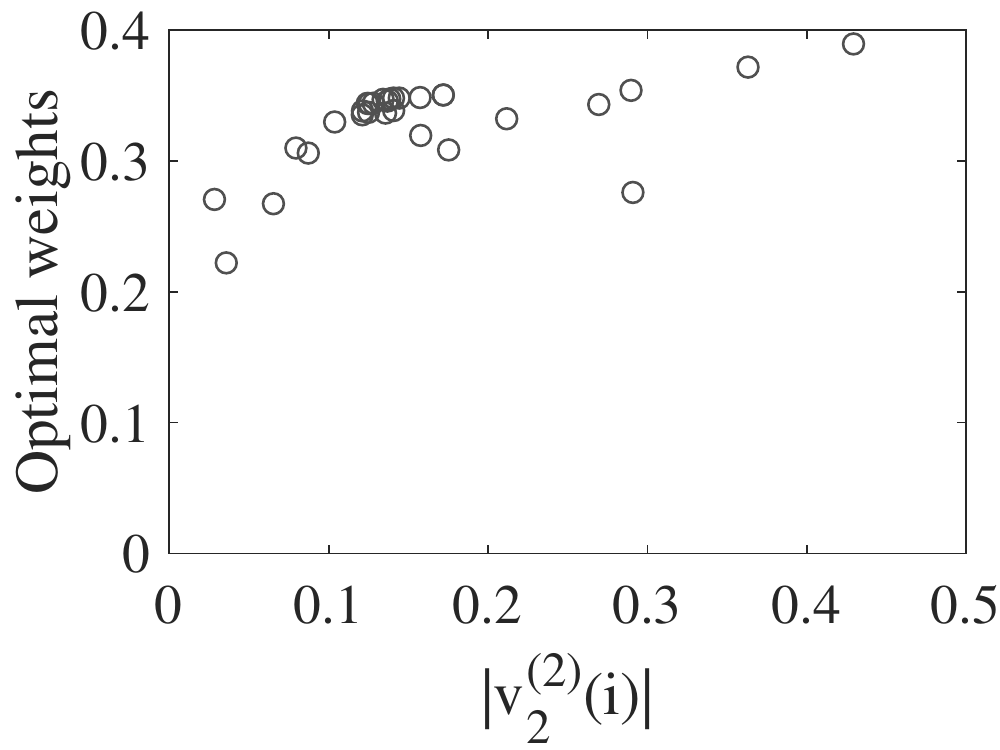}} \ \ \ \ \
	\subfloat[\label{fig:Primal1W2Case3c20}]{\includegraphics[clip,width=.3\columnwidth]{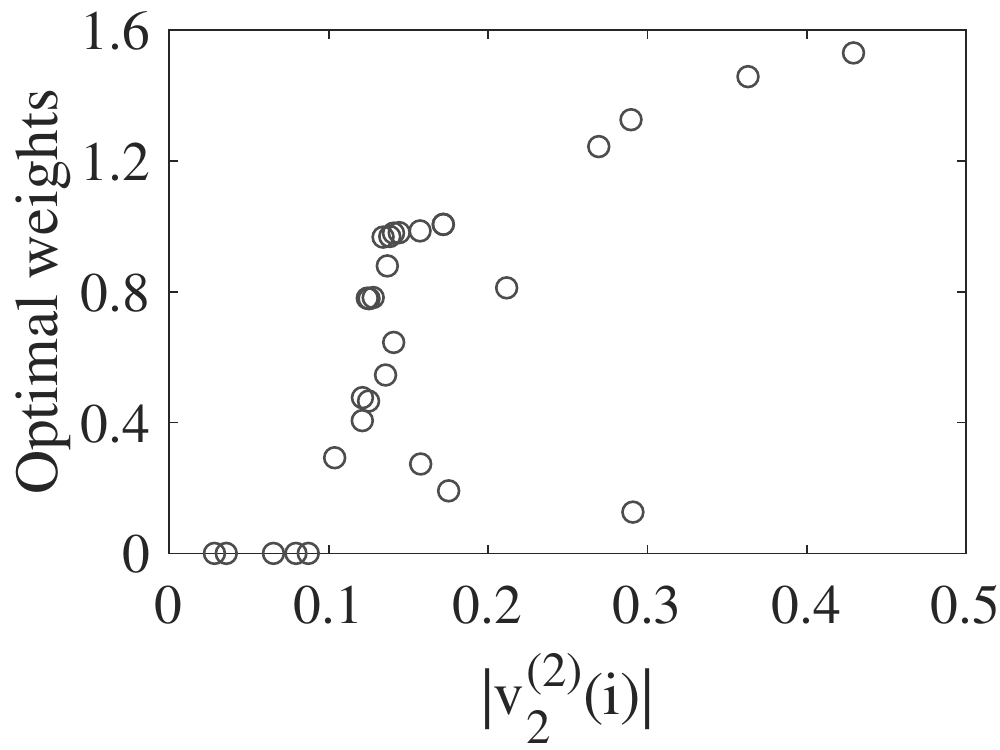}} \ \ \ \ \
	\subfloat[\label{fig:Primal1W2Case3c30}]{\includegraphics[clip,width=.3\columnwidth]{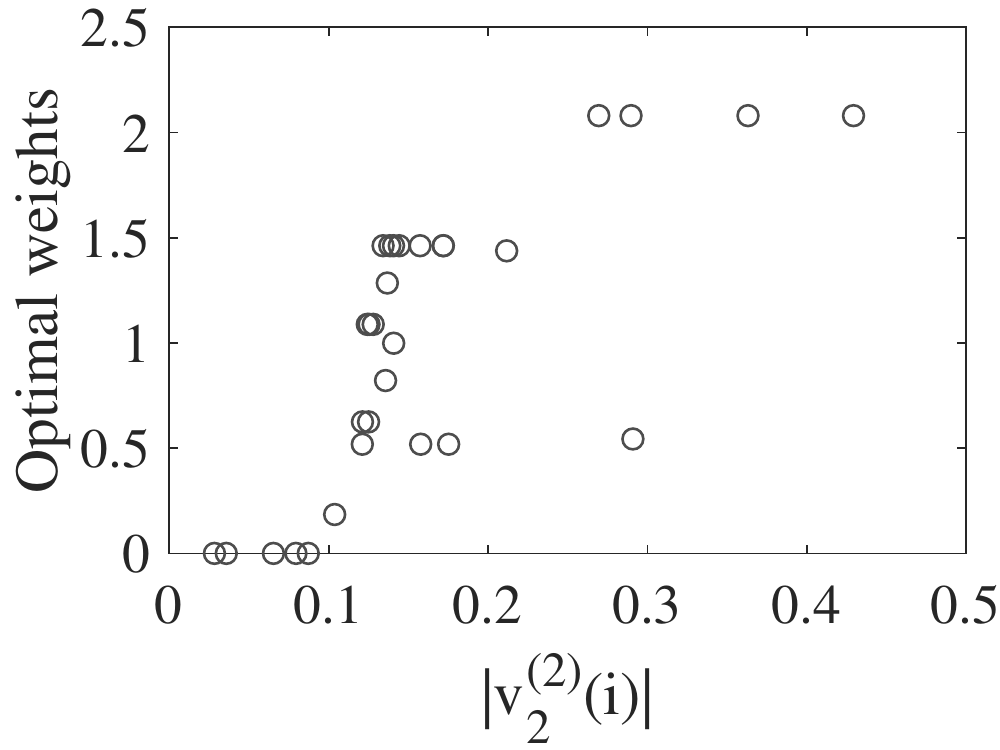}} \ \ \ \ \
	\caption{Optimal weights in Layer 2 as function of Fiedler vector components corresponding to Figure \ref{fig:Primal1Case3} for (a) $c=10$, (b) $c=20$, and (c) $c=30$.}
	\label{fig:Primal1W2Case3} 
\end{figure*}

\subsection{Dual problem and embedding}\label{sec:DuAll}
We investigate  geometric dual problems of each three cases in Figure \ref{fig:UniformWeights} to show different diffusion phases in various regions of optimal weights. For Case 1, Figure \ref{fig:EmbGeoCase1} shows the  embedding of the examples that previously analyzed by solving the primal problem (Figure \ref{fig:Primal1W2Case1} and  Figure \ref{fig:UniformCase1}). For $c<c^*$ in Figure \ref{fig:EmbGeoCase1c1}, we observe the clumped pattern predicted by Lemma \ref{lem:clump}. For the intermediate value $c^*<c<c^{**}$ in Figure \ref{fig:EmbGeoCase1c10}, we note that while the nodes in Layer 1 with larger specific connectivity keep their clumped pattern, the nodes in Layer 2 with smaller specific connectivity become distributed around the clumped nodes. The conditions is reversed for larger $c>c^{**}$ in Figure \ref{fig:EmbGeoCase1c30} where nodes in Layer 2 are clumped together while in Layer 1 are distributed.  Different embedding patterns illustrate different diffusion phases as discussed in Appendix \ref{AppDiffusion}. The results are as follows. For small $c<c^*$ in Figure \ref{fig:EmbGeoCase1c1}, we observe an intralayer phase of optimal diffusion in both layers. For intermediate budgets $c^*<c<c^{**}$ in Figure \ref{fig:EmbGeoCase1c10}, while optimal diffusion in Layer 1 is still in intralayer phase, it is through both intralinks and interlinks in Layer 2. For large budget values $c>c^{**}$ in Figure \ref{fig:EmbGeoCase1c30}, the situation is converse; while Layer 1 undergoes a combination of intralayer and interlayer diffusion phases, Layer 2 undergoes a single phase of interlayer.   \\

Figure \ref{fig:EmbWSCase1} shows embedding results for another example of Case 1 in Figure \ref{fig:UniformCase1}. By Figure \ref{fig:EmbWSCase1Lam}, it is observed the embedding dimension at each $c$ is equal to corresponding multiplicity of supra-Laplacian algebraic connectivity. Embedding for $c\leq c^*$ has a clumped pattern similar to Figure \ref{fig:EmbGeoCase1c1}, thus one dimensional. \\

Examples of embedding results for Cases 2 and 3 in Figure \ref{fig:UniformWeights} are shown in Figures \ref{fig:EmbGeoCase2} and \ref{fig:EmbGeoCase3}, respectively. When $c>c^*$, in both cases, the optimal diffusion in subgraph with larger specific algebraic connectivity is prominently through intralinks, while through interlinks and intralinks in subgraph with smaller specific connectivity.


\begin{figure*}
	\centering
	\subfloat[\label{fig:EmbGeoCase1c1}]{\includegraphics[clip,width=.25\columnwidth]{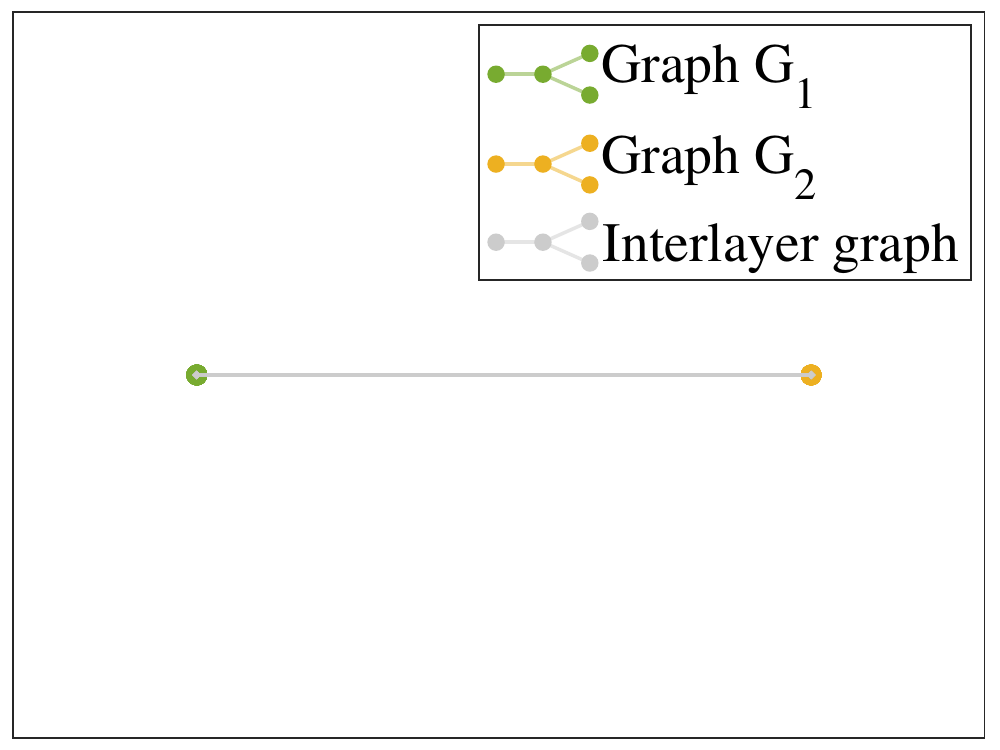}} \ \ \ \ \
	\subfloat[\label{fig:EmbGeoCase1c10}]{\includegraphics[clip,width=.25\columnwidth]{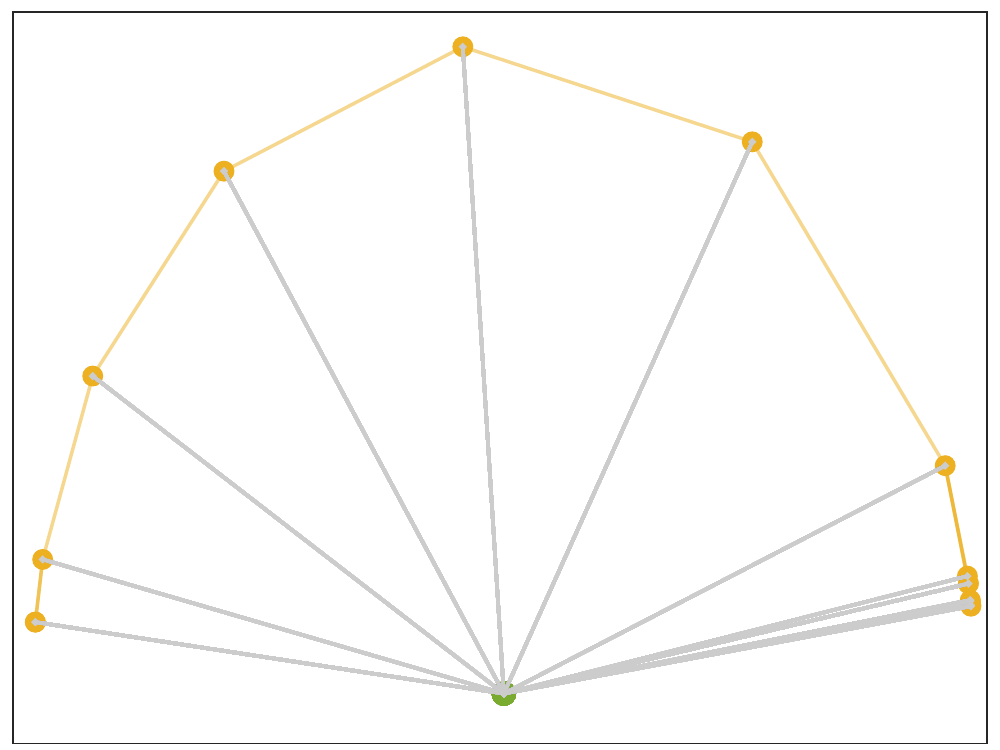}} \ \ \ \ \
	\subfloat[\label{fig:EmbGeoCase1c30}]{\includegraphics[clip,width=.25\columnwidth]{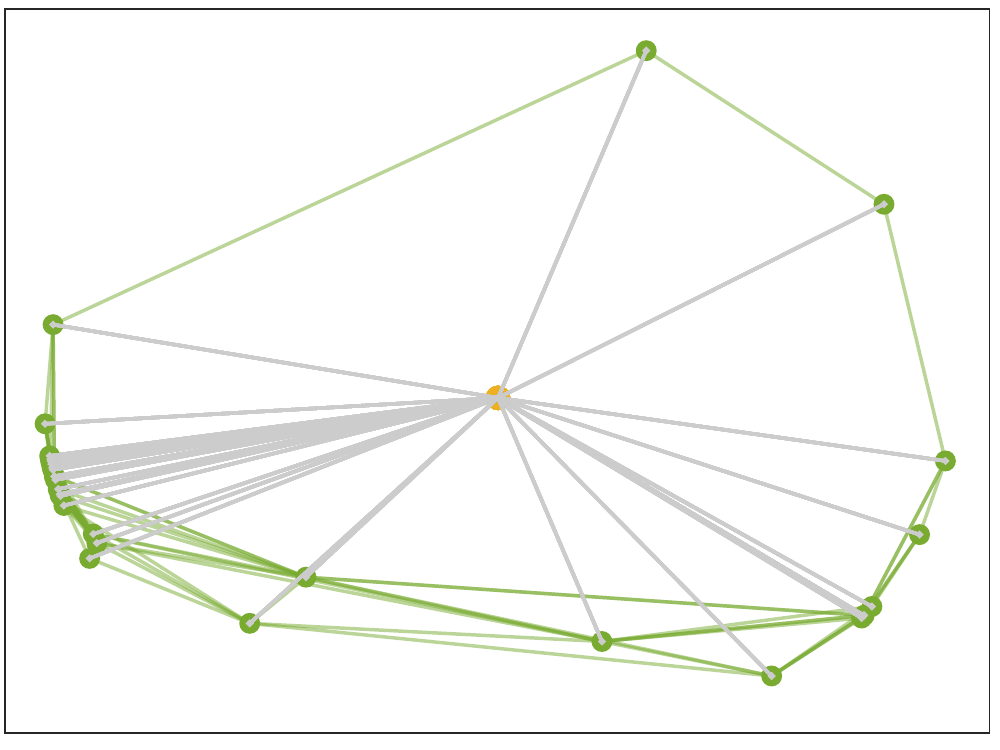}}
	\caption{Graph embeddings corresponding to Figure \ref{fig:Primal1} for (a) $c=1$, (b) $c=10$, and (c) $c=30$.}
	\label{fig:EmbGeoCase1} 
\end{figure*}

\begin{figure*}
	\centering
	\subfloat[\label{fig:EmbWSCase1Lam}]{\includegraphics[clip,width=.25\columnwidth]{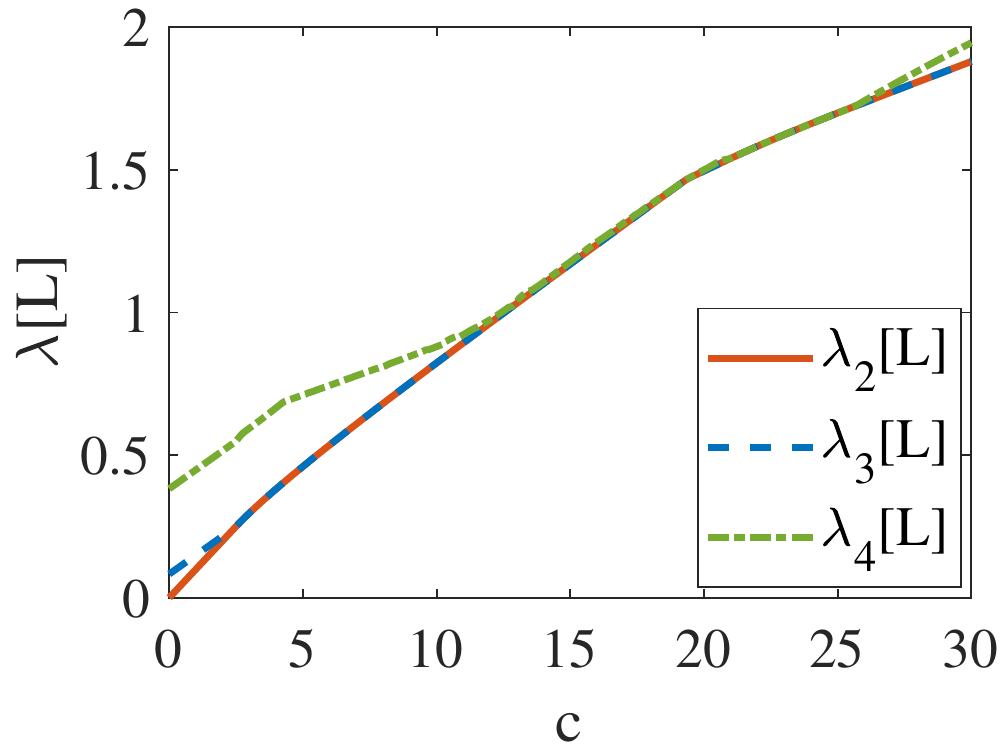}} \ \ \ \ \
	\subfloat[\label{fig:EmbWSCase1c10}]{\includegraphics[clip,width=.25\columnwidth]{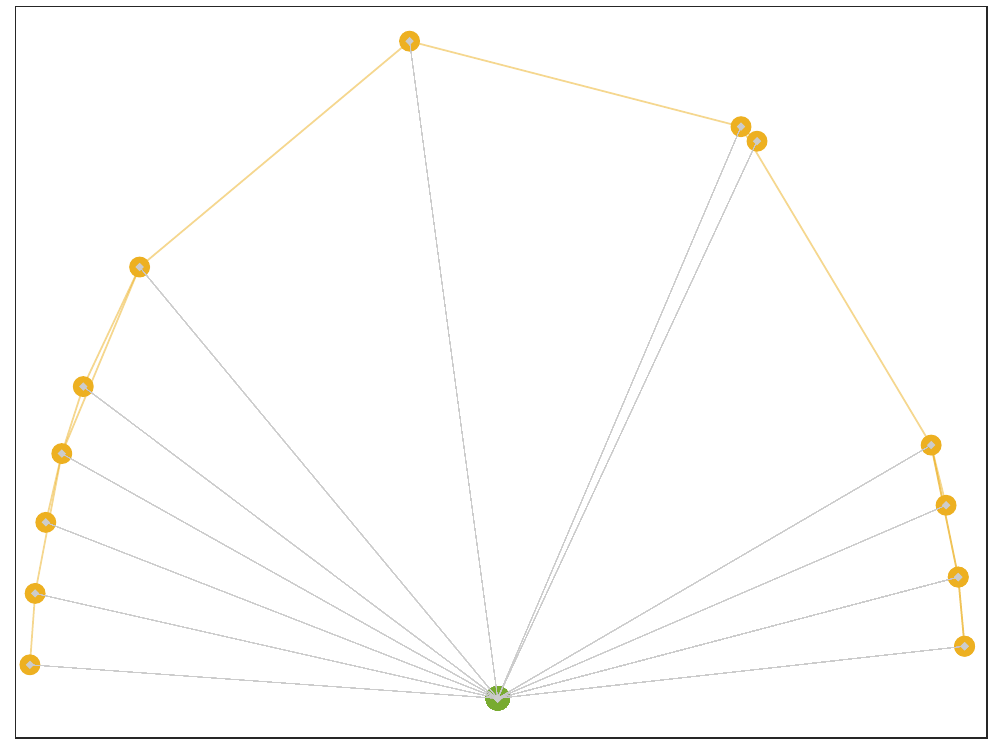}} \ \ \ \ \
	\subfloat[\label{fig:EmbWSCase1c20}]{\includegraphics[clip,width=.25\columnwidth]{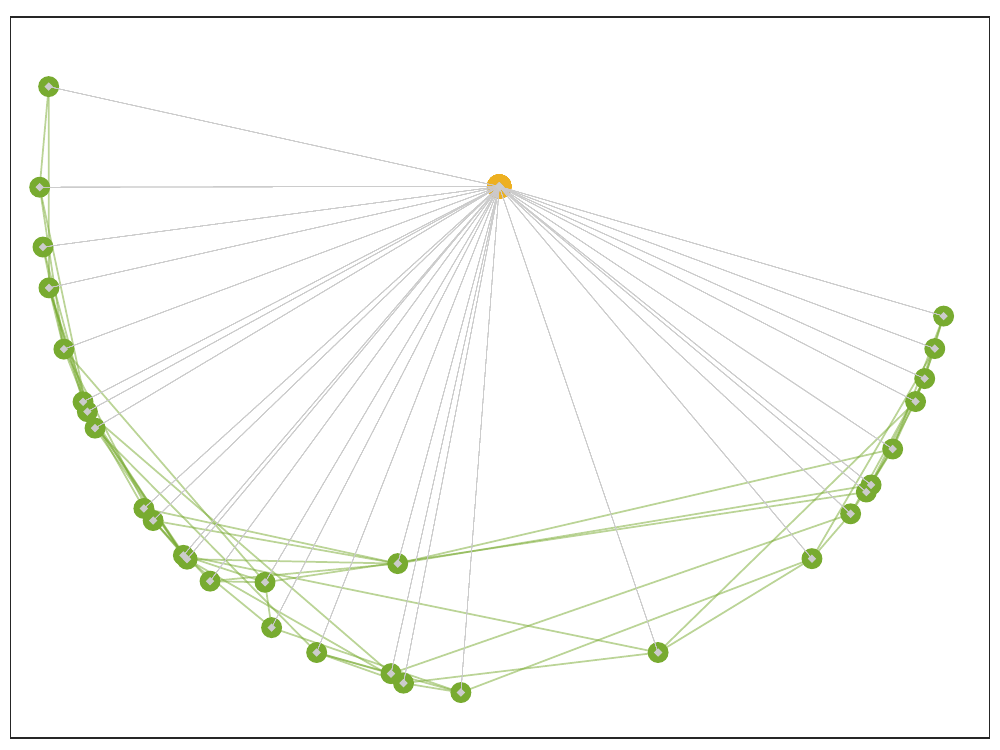}} \\
	\subfloat[\label{fig:EmbWSCase1c24}]{\includegraphics[clip,width=.25\columnwidth]{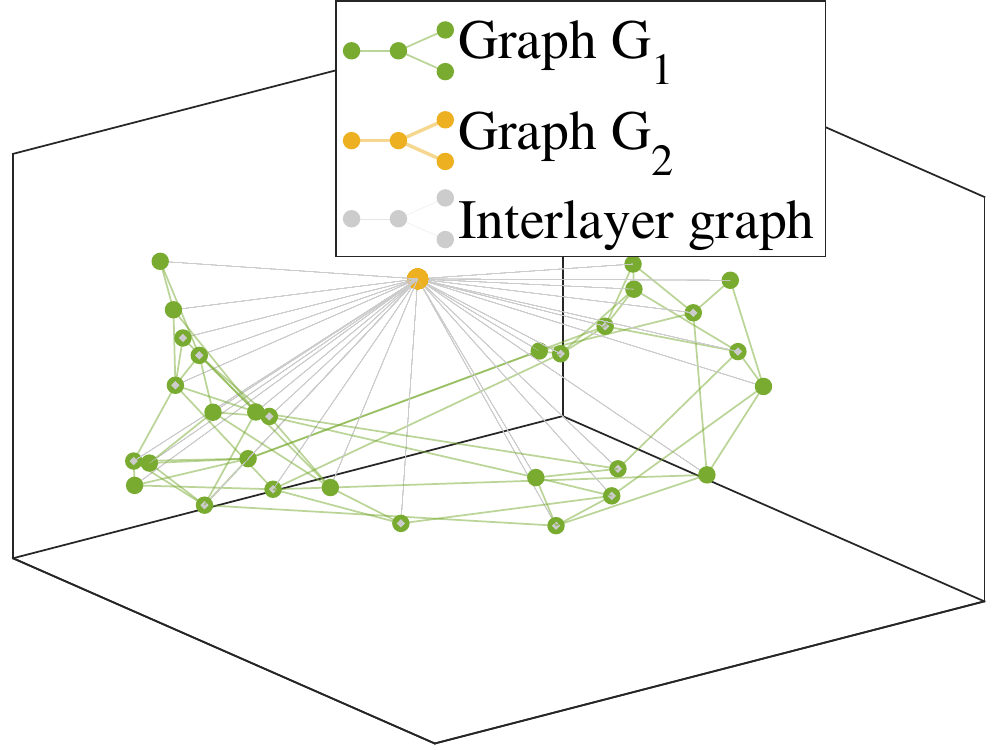}} \ \ \ \ \
	\subfloat[\label{fig:EmbWSCase1c50}]{\includegraphics[clip,width=.25\columnwidth]{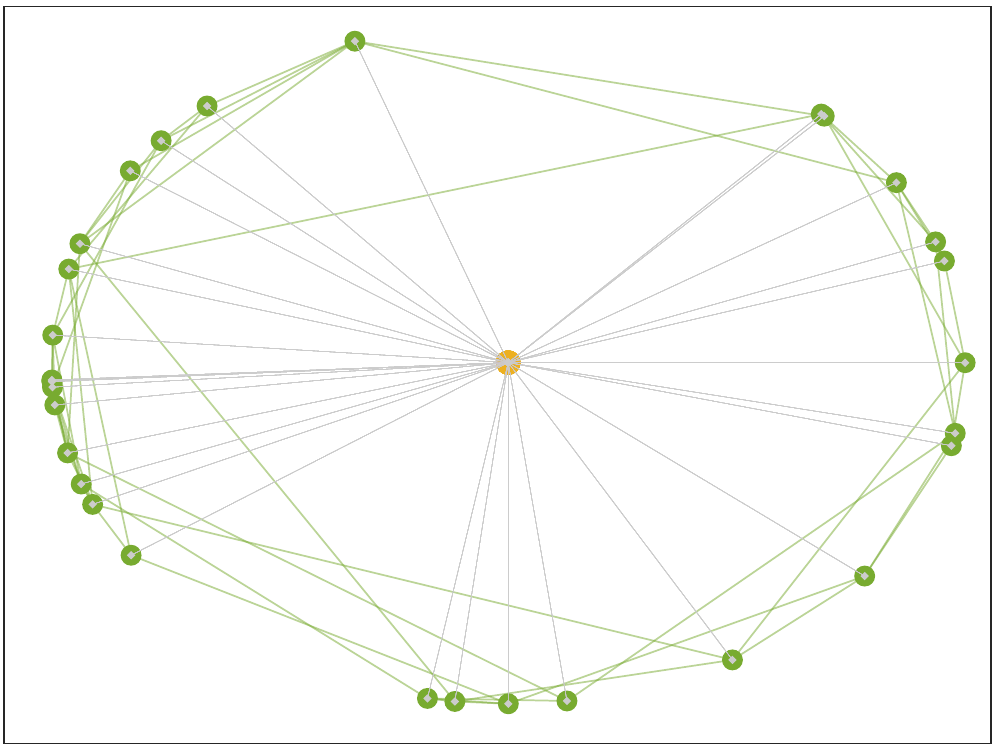}}
	\caption{Graph embedding results for two WS networks as an example of Case 1 in Figure \ref{fig:UniformWeights} with $n=30, m=15, \lambda_2^{(1)}=0.5444, \lambda_2^{(2)}=0.0828$, $c^*=2.4834, c^{**}=13.8486$: (a) the first three eigenvalues for optimal weights, and embedding for (b)  $c=10$, (b) $c=20$, (d) $c=24$, and (e) $c=50$.}
	\label{fig:EmbWSCase1} 
	
\end{figure*}
\begin{figure*}
	\centering
		\subfloat[\label{fig:EmbGeoCase2c10}]{\includegraphics[clip,width=.25\columnwidth]{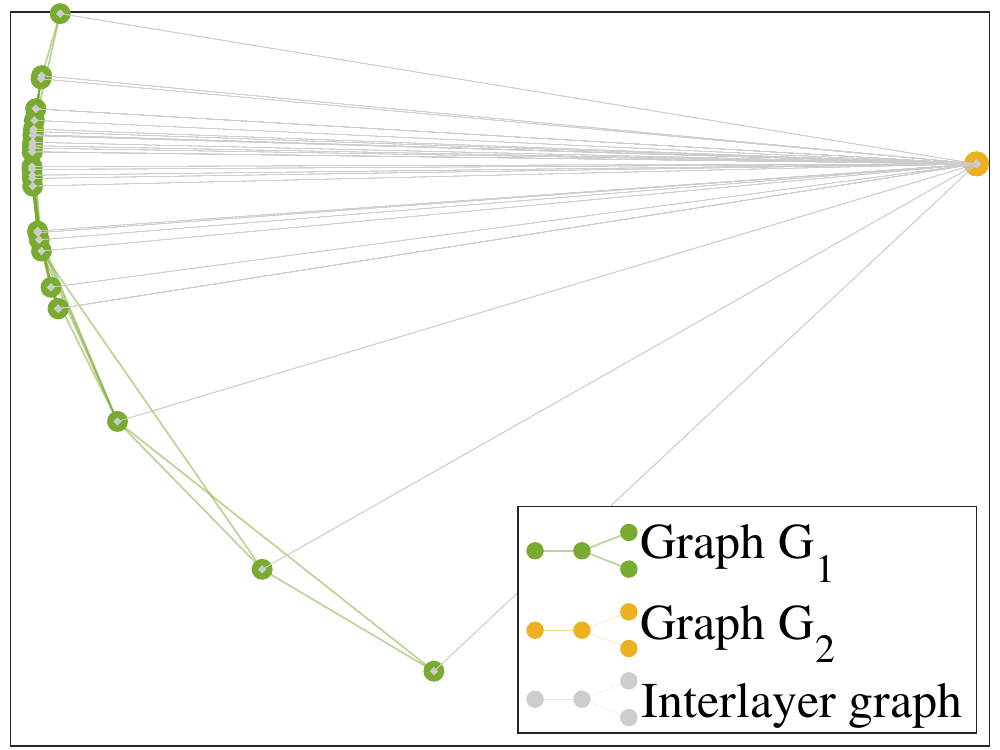}} \ \ \ \ \
		\subfloat[\label{fig:EmbGeoCase2c50}]{\includegraphics[clip,width=.25\columnwidth]{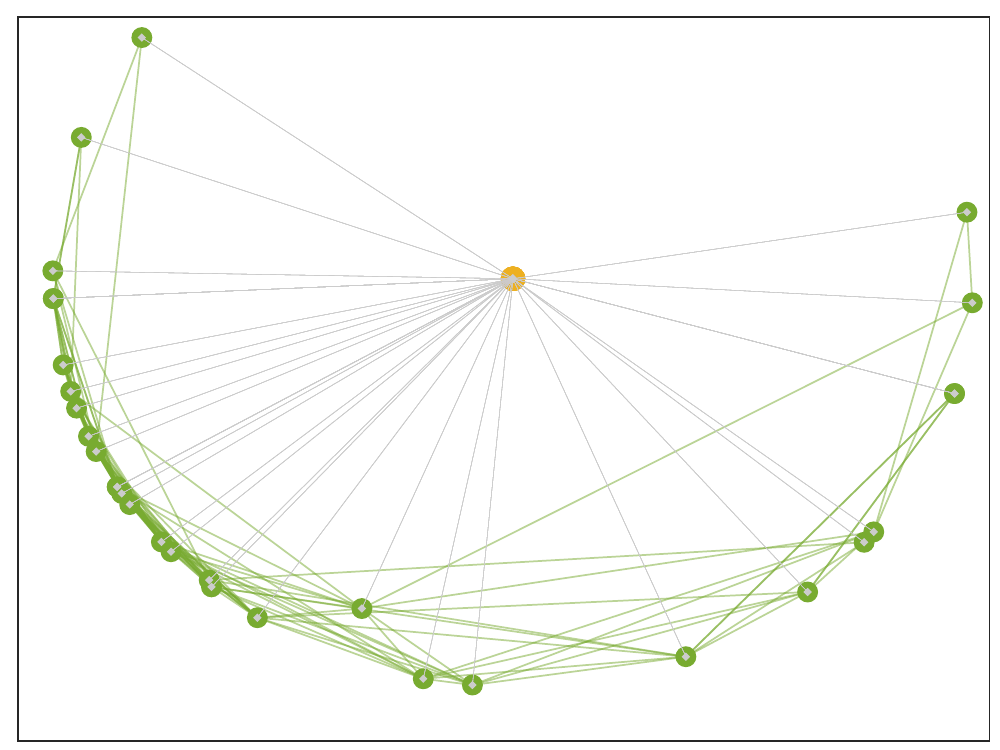}} \ \ \ \ \
		\subfloat[\label{fig:EmbGeoCase2c200}]{\includegraphics[clip,width=.25\columnwidth]{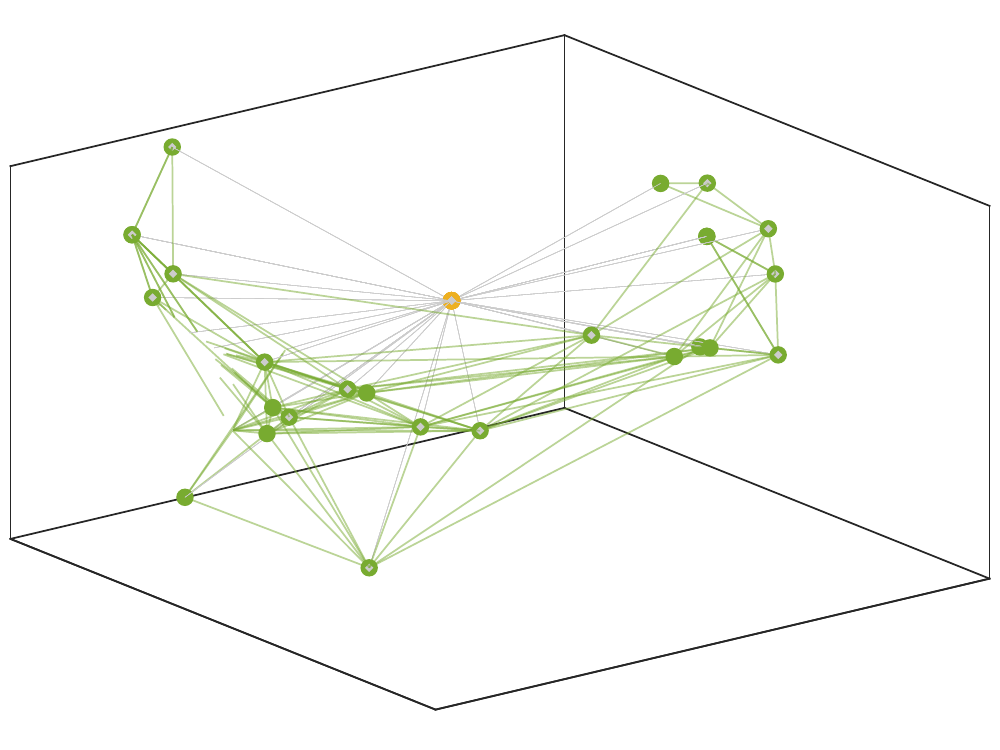}}
		\caption{Graph embeddings corresponding to Figure \ref{fig:Primal1Case2} for (a) $c=10$, (b) $c=50$, and (c) $c=200$.}
		\label{fig:EmbGeoCase2} 
	
\end{figure*}
\begin{figure*}
	\centering
		\subfloat[\label{fig:EmbGeoCase3c10}]{\includegraphics[clip,width=.25\columnwidth]{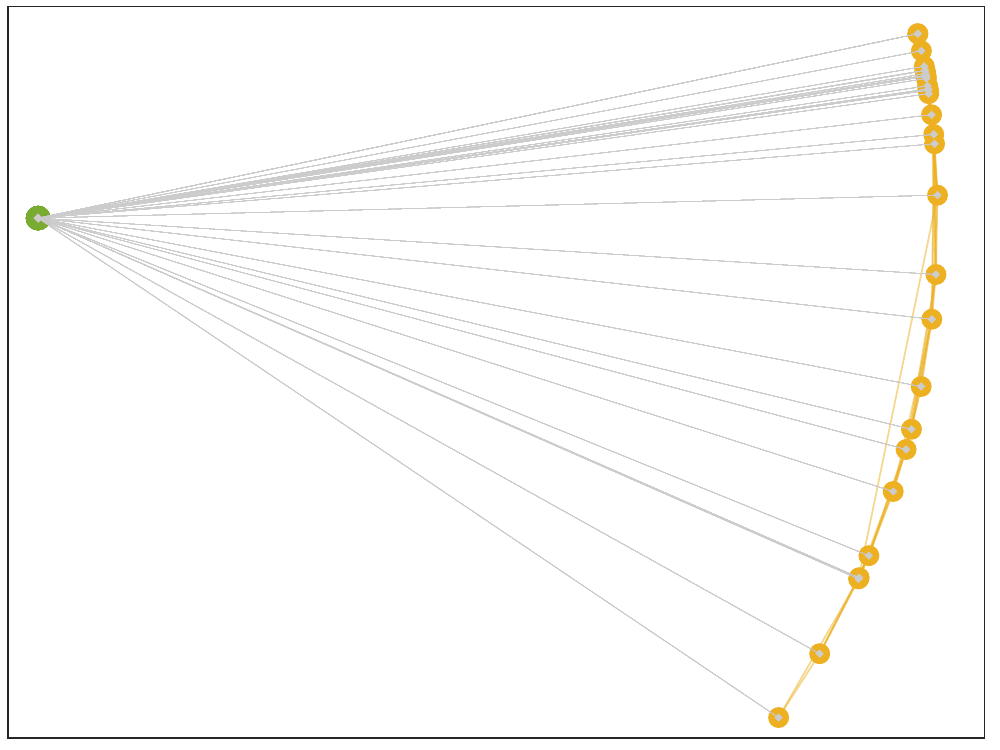}} \ \ \ \ \
		\subfloat[\label{fig:EmbGeoCase3c50}]{\includegraphics[clip,width=.25\columnwidth]{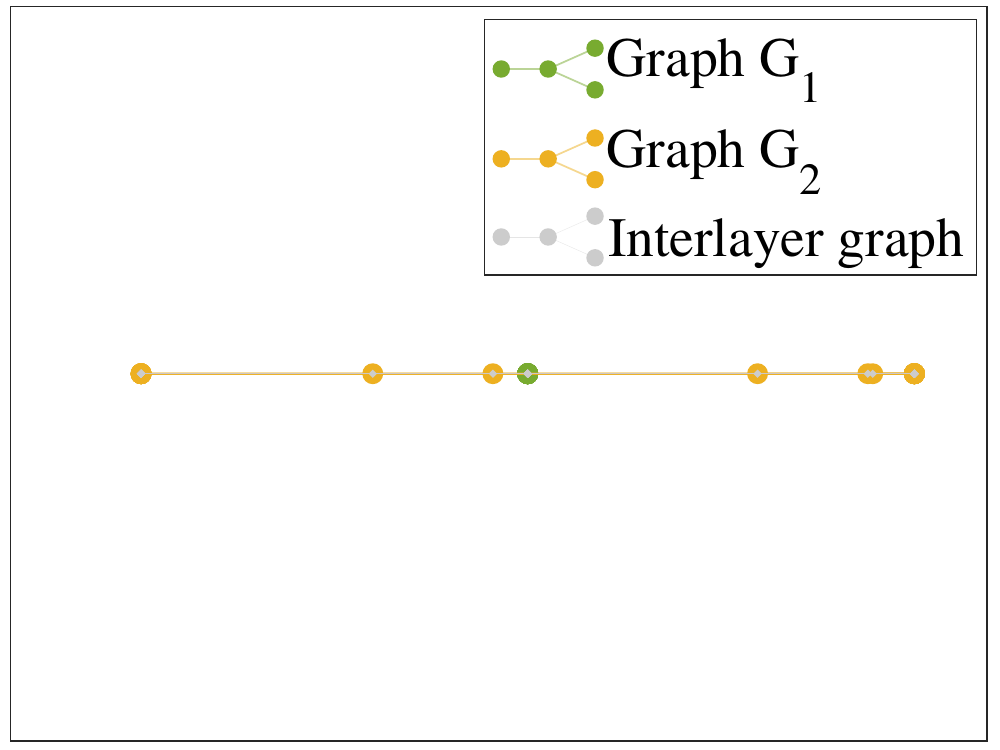}} \ \ \ \ \
		\subfloat[\label{fig:EmbGeoCase3c100}]{\includegraphics[clip,width=.25\columnwidth]{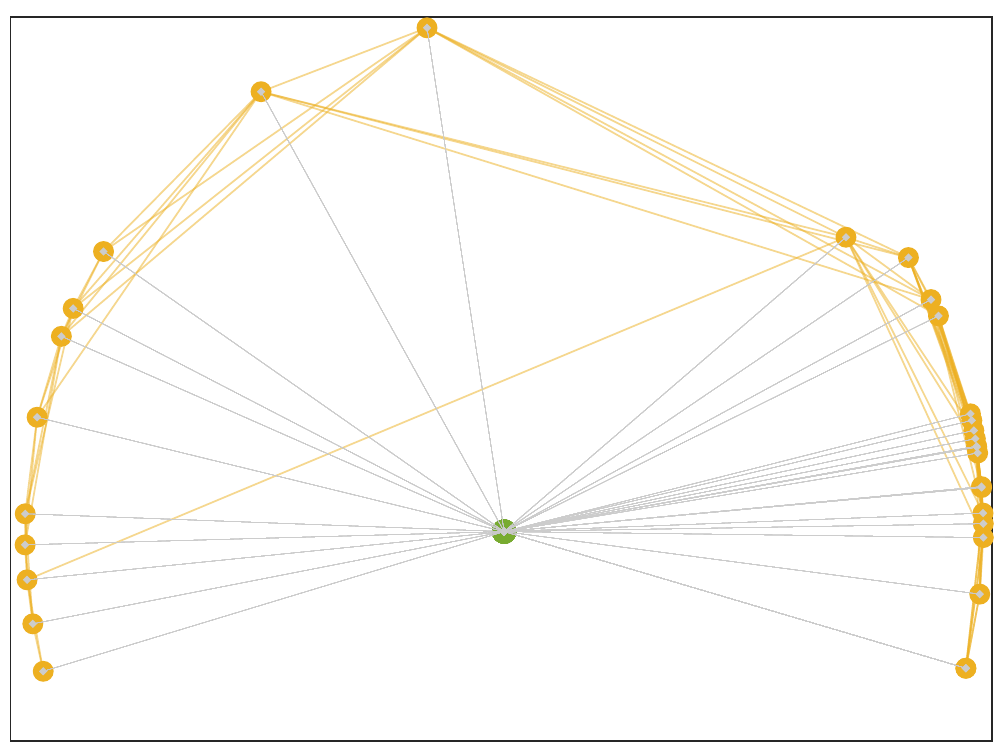}}
		\caption{Graph embeddings corresponding to Figure \ref{fig:Primal1Case3} for (a) $c=10$, (b) $c=50$, and (c) $c=100$.}
		\label{fig:EmbGeoCase3} 
	
\end{figure*}

\section{Interconnections constrained to an admissible set}\label{sec:admissible}
In Section \ref{sec:all-pair}, we had constraint only on the total budget $c$ and no restriction on the number or patterns of interlinks. In this section, we consider a situation where the number of interlinks is limited and we can assign weights only to an admissible set $\left(i,j\right)\in E_a\subset E_3$ and all other edges $E_3/E_a$ have zero weight. \\

For $c\leq c^*$ and following Lemma \ref{lem:Opt0}, we know that regular weight distribution leads to maximum algebraic connectivity. The regularity condition \eqref{eq:RegulCond0} is generally not feasible for all interconnection patterns. When regular interconnection is not feasible for a given set of admissible interlinks, the bound $\lambda^*_2=\left(\frac{1}{n}+\frac{1}{m}\right)c$ is not attainable and there is generally no region of optimal uniform weights. In such conditions, the region $c\leq c^*$ corresponding to uniform weights, and hence the associated transition fades away  (see Figure \ref{fig:RegulRandInterConec}). This means, with no regularity, there is no region for unified operation of individual network components--Figure \ref{fig:EmbdRegRand} shows the absence of the clumped pattern embedding. As a consequence, interlinks start to contribute to diffusion within each individual layer from small values of coupling strength $c$. 
It should be reemphasized that regularity and uniform weights are different;  Figure \ref{fig:WeightsRegNonuni} illustrates an example with regularity in optimal interlayer weights but nonuniform weights and thus, the bound $\lambda^*_2=\left(\frac{1}{n}+\frac{1}{m}\right)c$ is attained. \\ 
\begin{figure*}
	\centering
	\subfloat[\label{fig:ReguLinkWeights}]{\includegraphics[clip,width=.3\columnwidth]{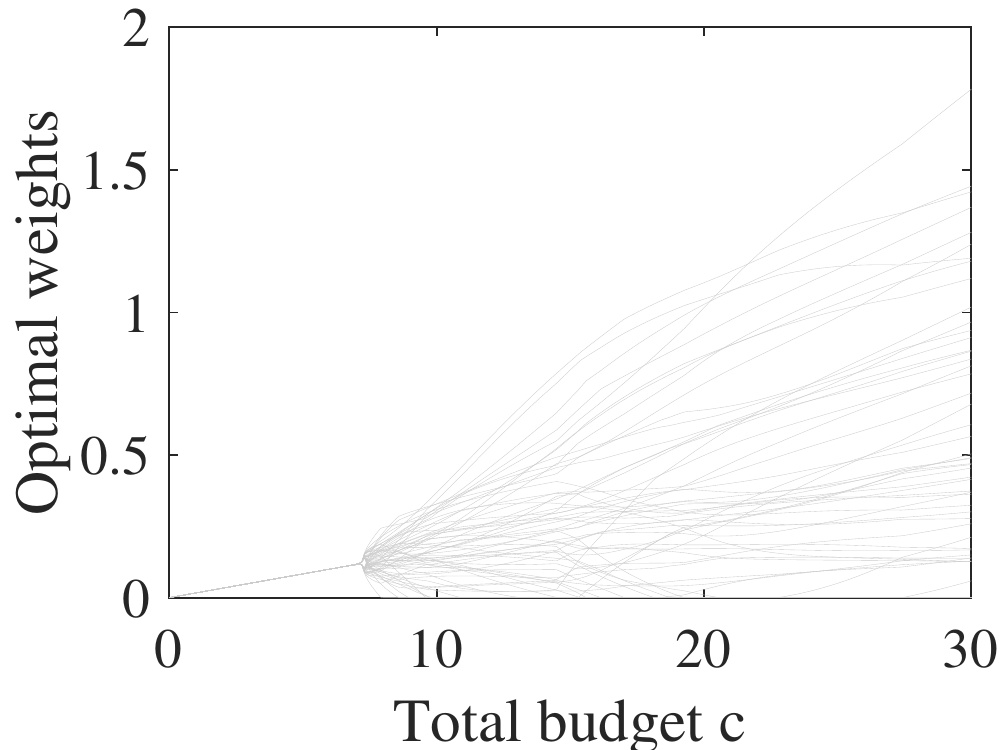}} \ \ \ \ \ \
	\subfloat[\label{fig:RandLinkWeights}]{\includegraphics[clip,width=.3\columnwidth]{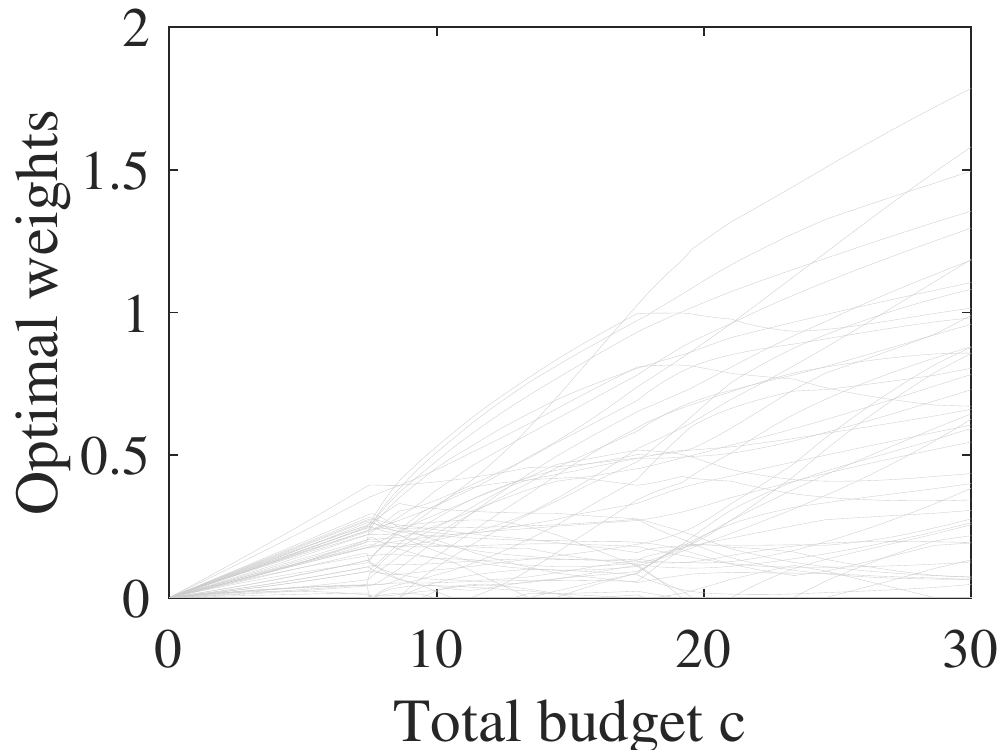}}
	\caption{Optimal weights for maximizing algebraic connectivity in a multilayer including two ER networks, each with 30 nodes, and 60 interlinks for (a) $k$-to-$k$ interconnection with $k=2$, and (b) random interconnections. In (a) regularity is feasible with uniform weights before $c^*$, while in (b), without regularity feasible, optimal weights are always distributive (nonuniform) and there is no uniform optimal weights region. After $c^*$, maximum algebraic connectivity in both patterns is attained by a weight distribution that does not satisfy the regularity conditions. }
	\label{fig:RegulRandInterConec}
	
\end{figure*}
\begin{figure*}
	\centering
	\subfloat[\label{fig:EmbdRegC7}]{\includegraphics[clip,width=.25\columnwidth]{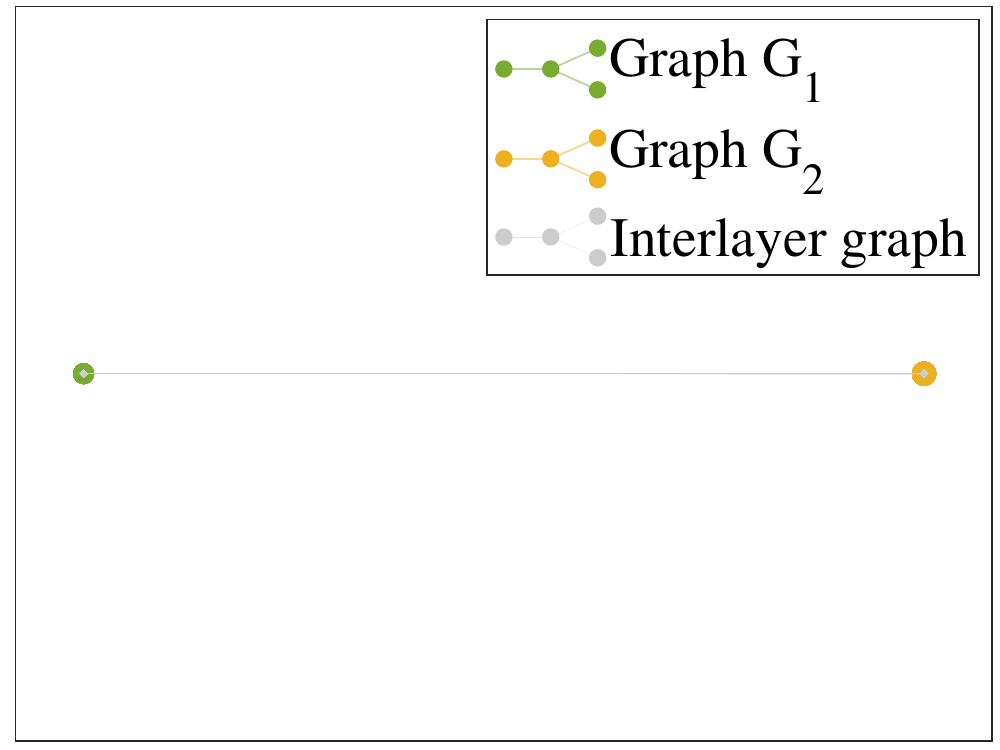}} \ \ \ \ \ \
	\subfloat[\label{fig:EmbdRandC7}]{\includegraphics[clip,width=.25\columnwidth]{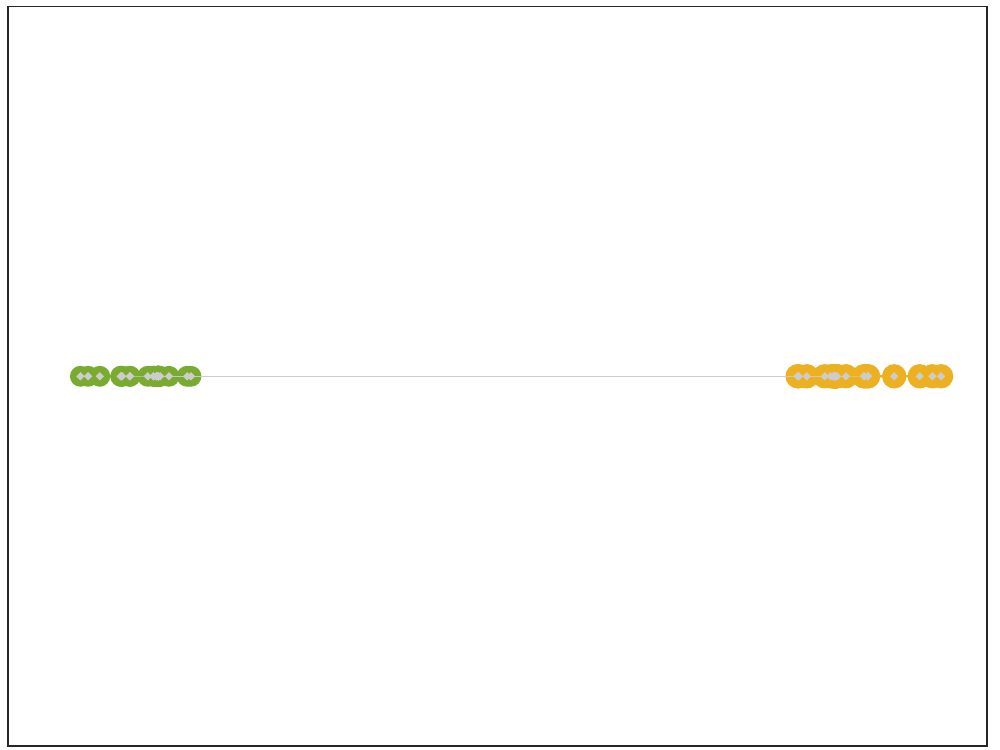}}
	\caption{Graph embedding of a multilayer including two ER networks, each with 30 nodes, and 60 interlinks for $c\leq c^*$ for (a) $k$-to-$k$ interconnection with $k=2$, and (b) random interconnections. In (a), with regularity feasible, the embedding indicates nodes clumped together, while in (b), without regularity feasible, there is no region of clumped nodes embedding.}
	\label{fig:EmbdRegRand}
	
\end{figure*}
\begin{figure*}
	\centering
	\subfloat[\label{fig:WeightsRegNonuni1}]{\includegraphics[clip,width=.3\columnwidth]{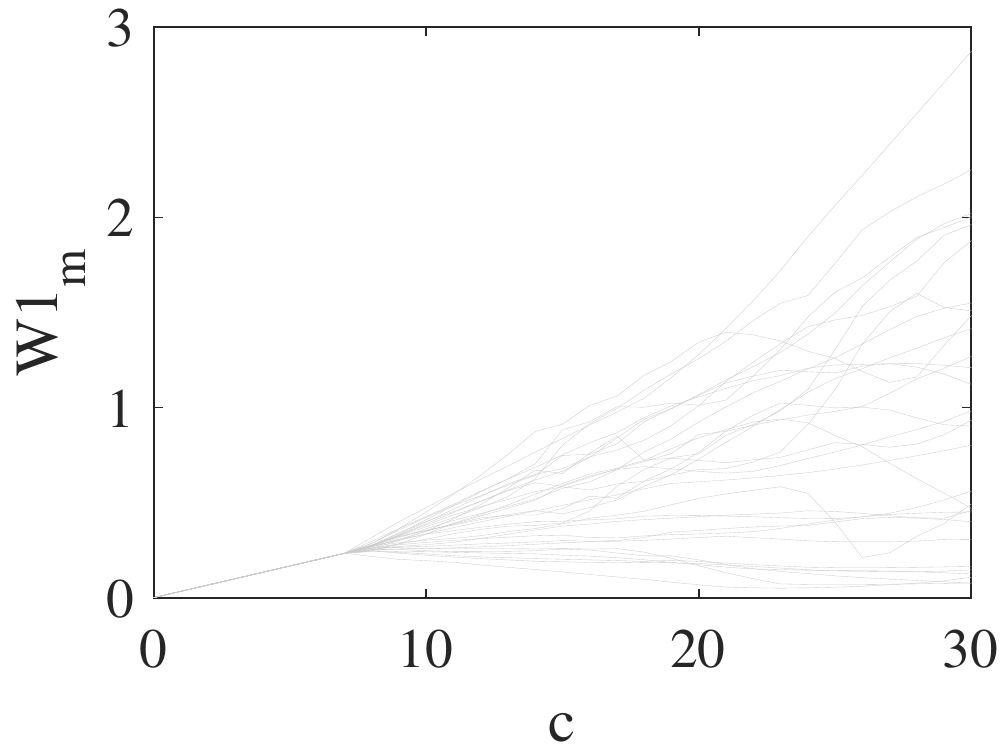}} \ \ \ \ \
	\subfloat[\label{fig:RandLinkWeights2}]{\includegraphics[clip,width=.3\columnwidth]{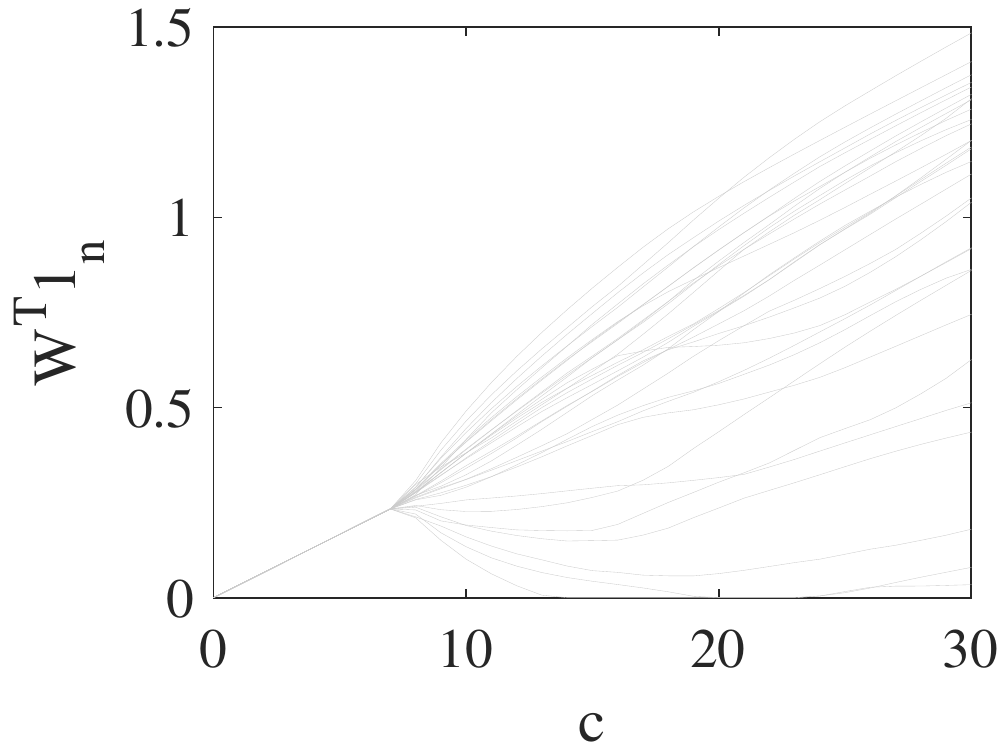}} \ \ \ \ \
	\subfloat[\label{fig:WeightsRegNonuni0}]{\includegraphics[clip,width=.3\columnwidth]{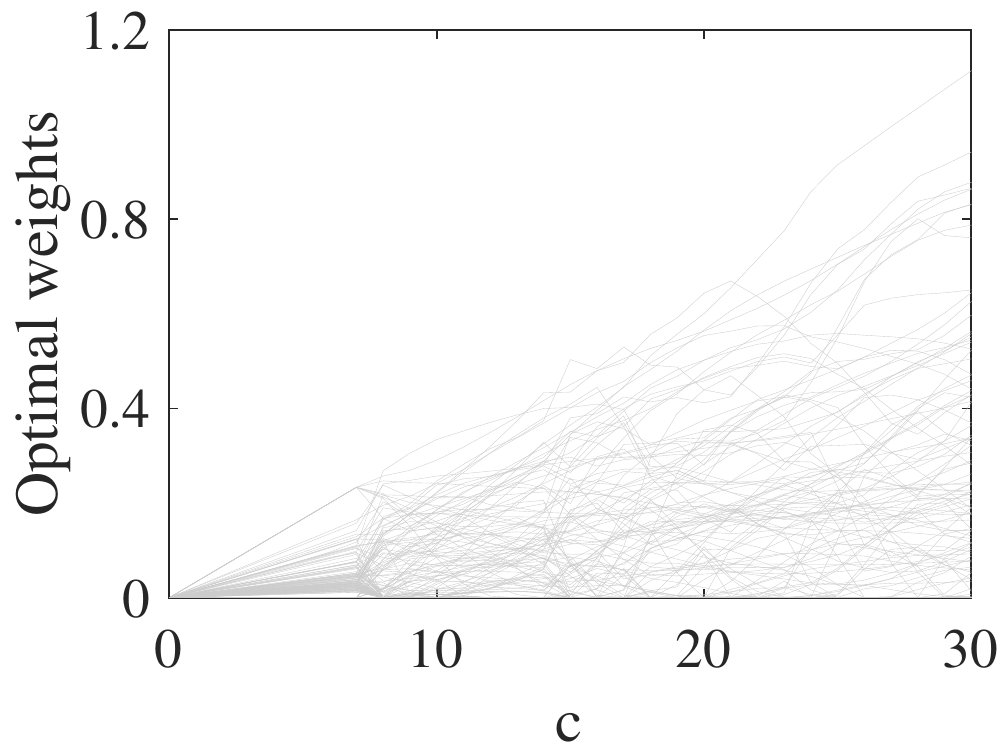}}
	\caption{Optimal weights in a multilayer including two ER networks, each with 30 nodes and total 60 random interlinks, as example of \textit{nonuniform regular optimal weights}. (a) optimal interlayer weights assigned to nodes in Layer 1, (b) optimal interlayer weights assigned to nodes in Layer 2, and (3) optimal weight assigned to each interlink. }
	\label{fig:WeightsRegNonuni}
	
\end{figure*}

Lemma \ref{lem:Opt0} implies that  for a given set of admissible interlinks, regular interconnection is feasible, the maximum algebraic connectivity and the corresponding Fiedler vector are the same for all interconnection patterns for $c\leq c^*$. Therefore, before the threshold there is no dependency of maximum algebraic connectivity on the number of interconnections or admissible set $E_a$. However, what distinguishes between different interconnection patterns is the value of threshold $c^*$ that depends on the  number and pattern of interlinks. We have shown that for the case of all-pairs interconnection, $c^*$ is computed by \eqref{eq:thresholds} and \eqref{eq:threshold2}. However, there is no explicit expression for $c^*$ in general regular interconnections. For particular configurations,  \citet{sahneh2014exact,Mieghem2017Interdepen,Mieghem2019Regular} provide some bounds, and for  one-to-one interconnection pattern, the exact threshold budget is calculated in \citep{sahneh2014exact}, and in \citep{, shakeri2015PRL} from  different approaches. 
Moreover, \citet{Mieghem2019Regular} shows that for regular $k$-to-$k$ interconnections, the transition threshold $c^*$ is upper-bounded by the minimum algebraic connectivity of subgraphs times $n$, and lower-bounded by it times $n/2$, i.e. when $n=m$ it follows $\frac{n\lambda_2^{(0)}}{2}\leq c^*\leq n\lambda_2^{(0)}$. The upper-bound is attained when $k=n$. This implies that, the coupling is postponed as much as possible through a complete interconnection; or equivalently, for a given total budget $c$ the weakest coupling occurs with all-pairs interconnection.  \\

Among numerous inter-structures satisfying regularity, the class of $k$-to-$k$ interconnectivity pattern, which is a generalization of the one-to-one scheme, is representative of more real interdependent networks \citep{Mieghem2019Regular}. To unravel more facts about optimized $k$-to-$k$ interconnections, Figures \ref{fig:EmbRegK1}, \ref{fig:EmbRegK2}, and \ref{fig:EmbRegK3} show embedding results for $k=1$, $k=2$, and $k=3$, respectively. These figures use the same individual network components where  $c>c^*$.  Figure \ref{fig:Lam2Reg} shows the corresponding maximum algebraic connectivity values. For the smaller budget $c=10$ in Figures \ref{fig:EmbRegK1C10}, \ref{fig:EmbRegK2C10}, and \ref{fig:EmbRegK3C10}, we observe that the most unified embedding of individual network components is associated with $k=3$, next followed by $k=2$ and $k=1$. 
Therefore, for a given total budget $c$, the larger the number of interconnections, the weaker the coupling. However, Figure \ref{fig:Lam2Reg} illustrates that the algebraic connectivity increases with increasing the number of interconnections for a given total budget. Therefore, despite the networks coupling is becoming weaker by increasing the number of interlinks for fixed total budget $c$, the diffusion becomes faster in such a condition--the speed of diffusion and the modes of diffusion are two distinct properties \citep{sahneh2014exact}. 

For intermediate values of total budget $c$, the main observation in Figures \ref{fig:EmbRegK1C100}, \ref{fig:EmbRegK2C100}, and \ref{fig:EmbRegK3C1000} is that the  one-to-one interconnection holds the minimum embedding space dimension of 2, which increases by 3 in two other cases $k=2$ and $k=3$. For the larger budget $c$ in Figure \ref{fig:EmbRegK1C1000}, when $k=1$, it is seen that each two interlinked nodes are embedded at the same point. Hence, the interlinked nodes are unified due to significant coupling strength $c$. In such conditions, in  one-to-one interconnected networks, the diffusion is connected with an average Laplacian \citep{Radicchi2013}. An average network can also be concluded in Figure \ref{fig:EmbRegK3C1e6} where nodes from different networks are embedded in pairs at same point.  However, here, for $k=3$, the average network associated with intralinks is elaborated with a circle network of interlinks. This circle of interlinks will promote diffusion with respect to  one-to-one interconnected. This is verified by comparing the cases $k=1$ and $k=3$ in Figure \ref{fig:Lam2Reg}. For the odd interconnection pattern $k=2$ in Figure \ref{fig:EmbRegK2C1e5}, the existence of an average network for large $c$ is not directly deducible. \\
 
\begin{figure*}
	\begin{center}
		\subfloat[\label{fig:EmbRegK1C10}]{\includegraphics[clip,width=.25\columnwidth]{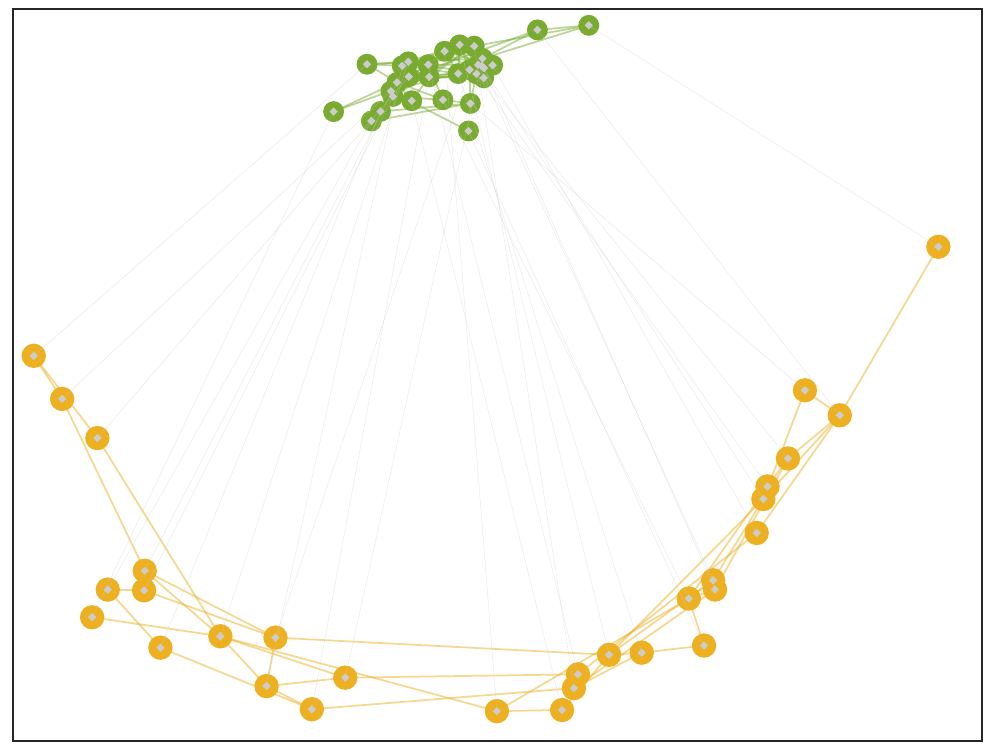}} \ \ \ \ \
		\subfloat[\label{fig:EmbRegK1C100}]{\includegraphics[clip,width=.25\columnwidth]{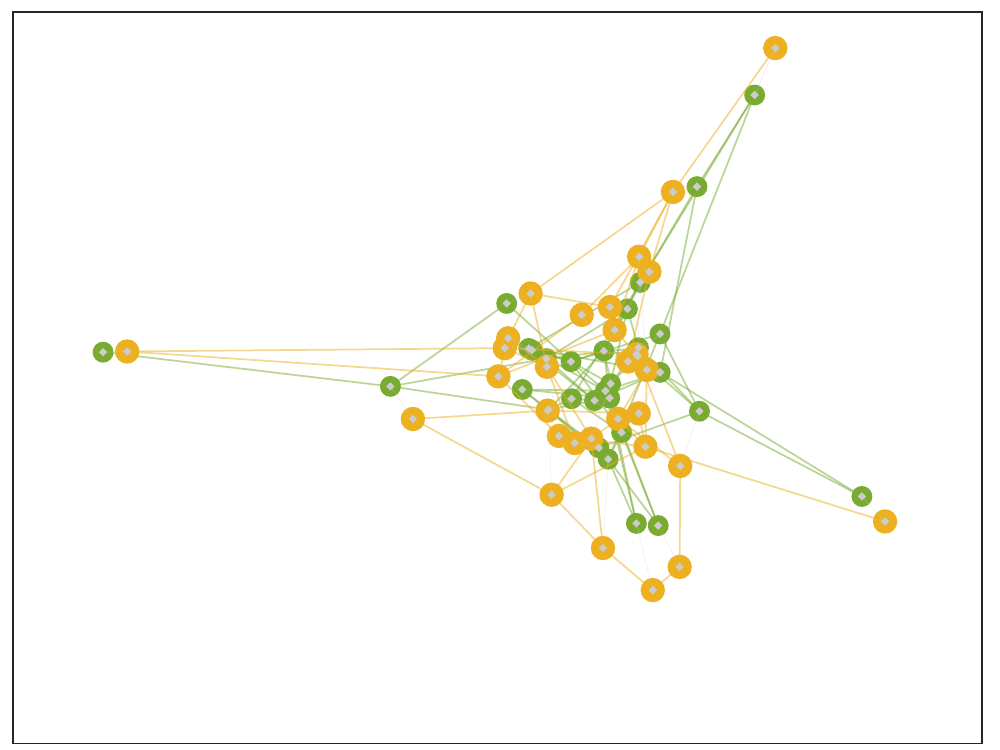}} \ \ \ \ \
		\subfloat[\label{fig:EmbRegK1C1000}]{\includegraphics[clip,width=.25\columnwidth]{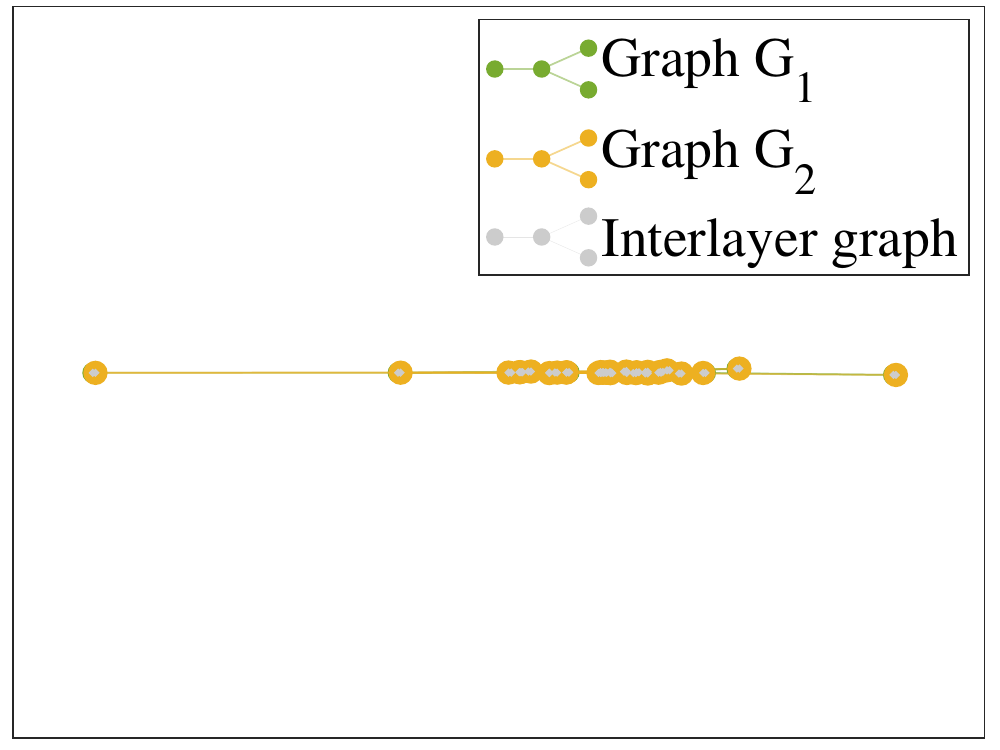}}
		\caption{Graph embedding of a multilayer including two ER networks, each with 30 nodes, and 30 interlinks for $k$-to-$k$ interconnection with $k=1$ and budget values (a) $c=10$, (b) $c=100$, (c) $c=1000$.}
		\label{fig:EmbRegK1}
	\end{center} 
\end{figure*}

\begin{figure*}
	\centering
		\subfloat[\label{fig:EmbRegK2C10}]{\includegraphics[clip,width=.25\columnwidth]{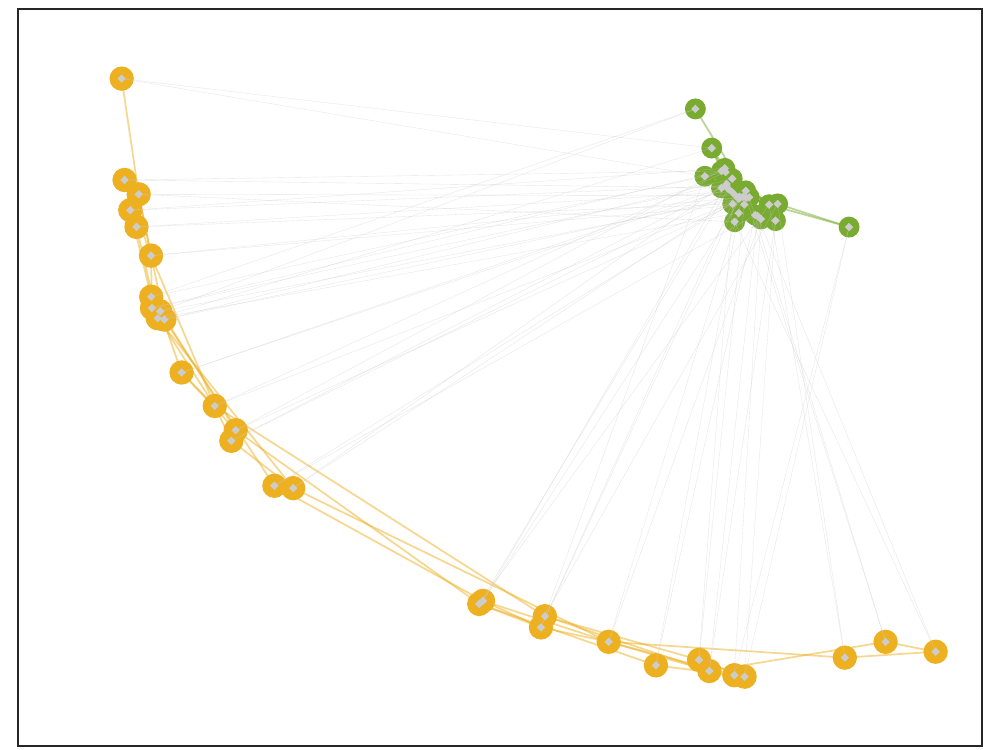}} \ \ \ \ \
		\subfloat[\label{fig:EmbRegK2C100}]{\includegraphics[clip,width=.25\columnwidth]{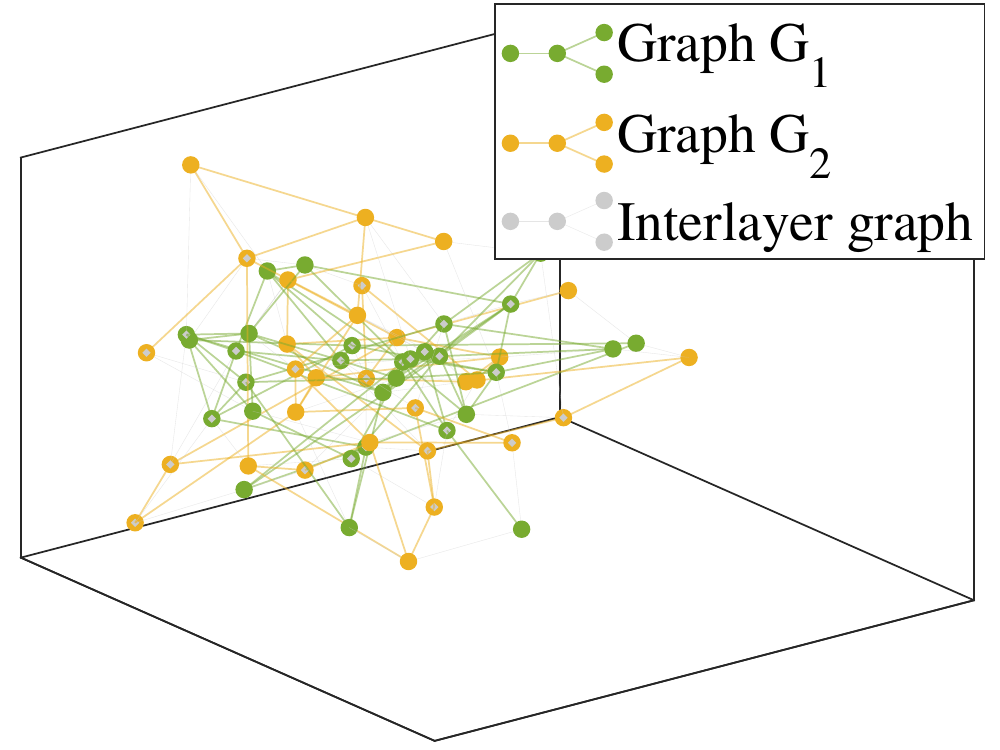}}  \ \ \ \ \
		\subfloat[\label{fig:EmbRegK2C1e5}]{\includegraphics[clip,width=.25\columnwidth]{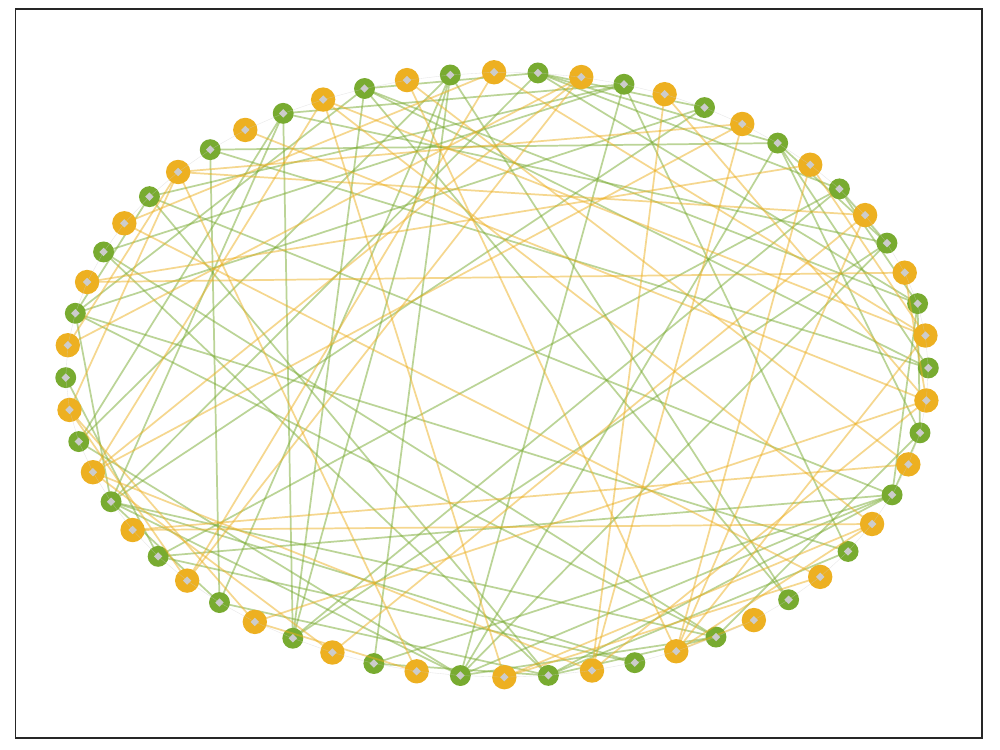}}
	    \caption{Graph embedding of a multilayer including two ER networks, each with 30 nodes, and 60 interlinks for $k$-to-$k$ interconnection with $k=2$ and budget values (a) $c=10$, (b) $c=100$, (c) $c=10^5$.}
		\label{fig:EmbRegK2}
	 
\end{figure*}
\begin{figure*}
	\centering
		\subfloat[\label{fig:EmbRegK3C10}]{\includegraphics[clip,width=.25\columnwidth]{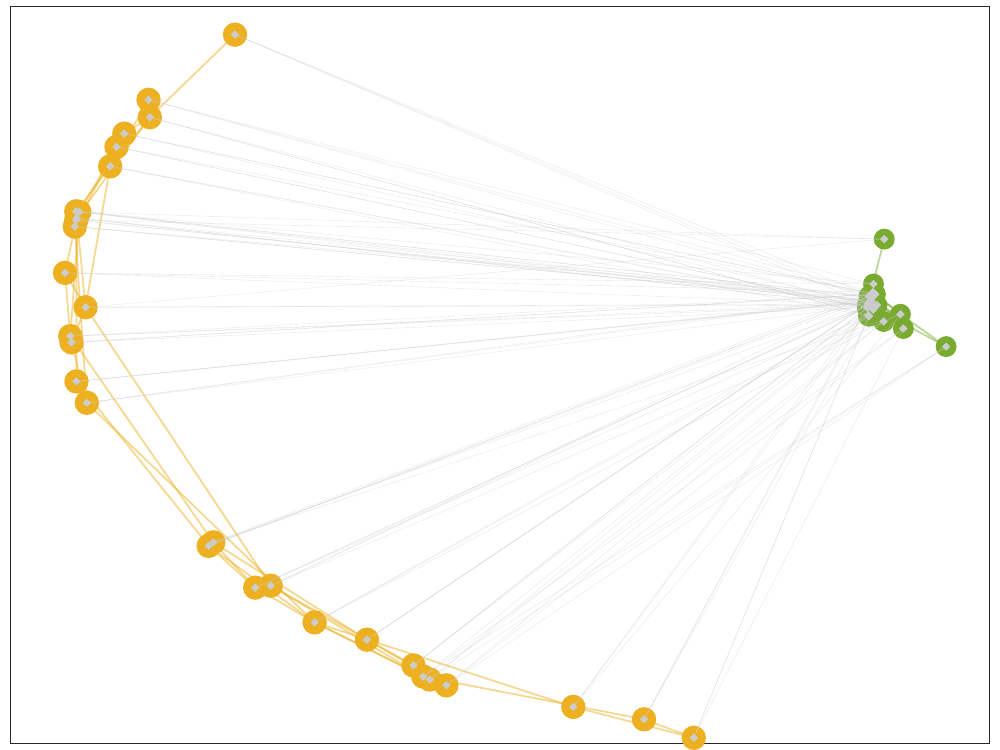}} \ \ \ \ \
		\subfloat[\label{fig:EmbRegK3C1000}]{\includegraphics[clip,width=.25\columnwidth]{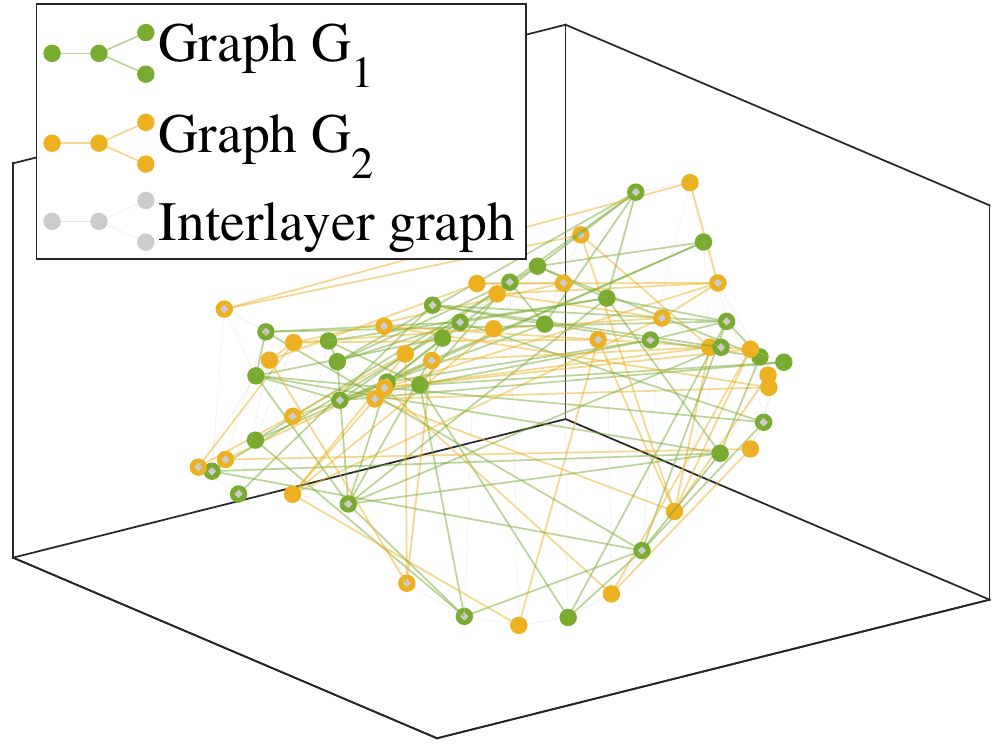}} \ \ \ \ \
		\subfloat[\label{fig:EmbRegK3C1e6}]{\includegraphics[clip,width=.25\columnwidth]{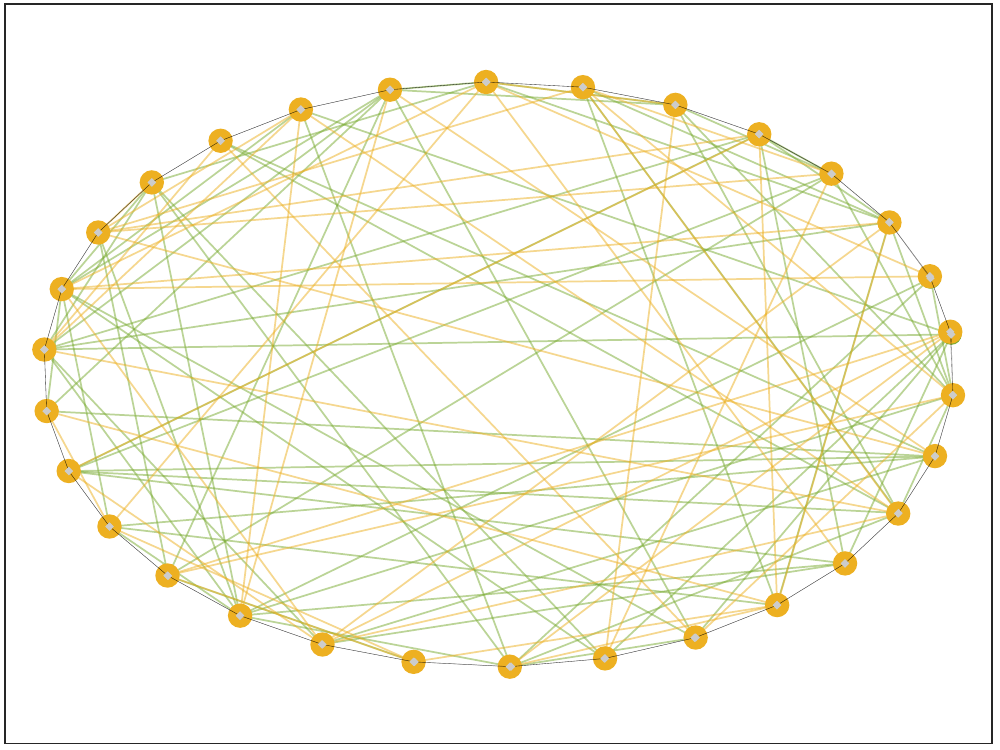}} 
		\caption{Graph embedding of a multilayer including two ER networks, each with 30 nodes, and 90 interlinks for $k$-to-$k$ interconnection with $k=3$ and budget values (a) $c=10$, (b) $c=1000$, (c) $c=10^6$.}
		\label{fig:EmbRegK3}
	 
\end{figure*}
\begin{figure}[!htb]
	\begin{center}
	\includegraphics[clip,width=.3\columnwidth]{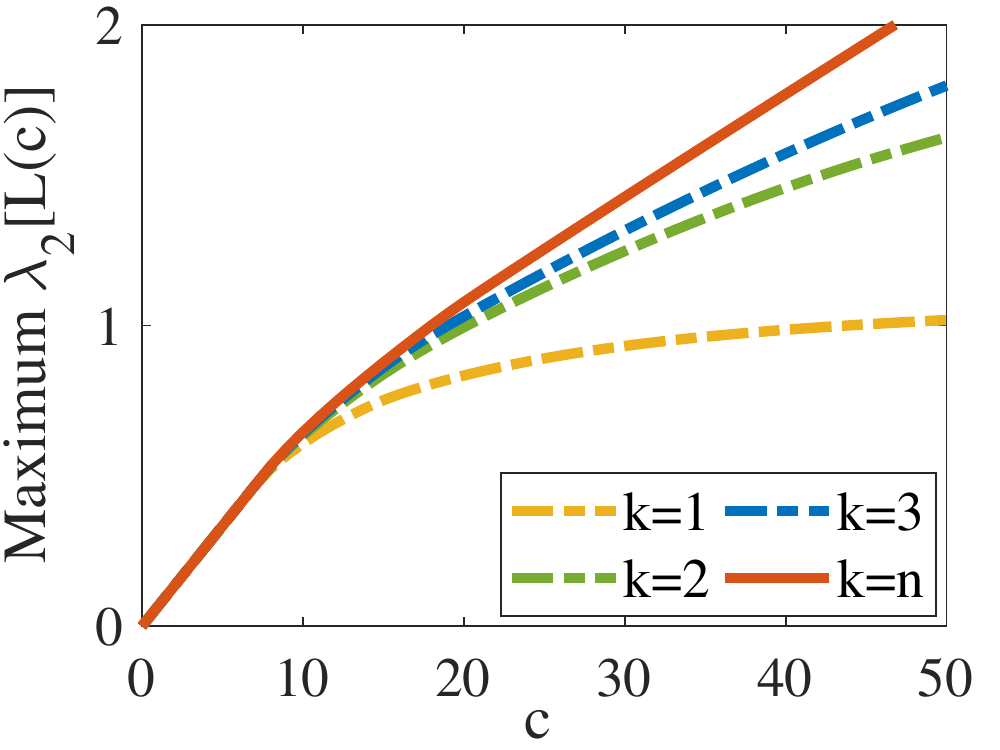} 
	\caption{Maximum algebraic connectivity for $k$-to-$k$ interconnection of two ER networks with $n=m=30$.}
	\label{fig:Lam2Reg}
	\end{center}
\end{figure}

\section{Well-interconnected multilayer networks}\label{sec:Well}
In Section \ref{sec:admissible}, we  observed the importance of the admissible set $E_a$  in final characterization of interdependent network; in this section we ask two questions; 
First, which interconnection pattern leads to the optimal performance. 
For multilayers with one-to-one interconnection, \citet{GomezDiffusion2013} show that the super-diffusion occurs only if some definite condition is satisfied; namely, if the algebraic connectivity of the average Laplacian $L_{ave}=\frac{1}{2}\left(L_1+L_2\right)$ is greater than the algebraic connectivity values of individual network components. 
Second, if there is any other interlink connection strategy, other than the  one-to-one connection strategy that with a given number of interlinks, can lead to super-diffusion without satisfying the condition proposed by Gomez et al. To answer these questions, we use an interlink connection strategy based on a greedy approach by means and show that the approach can result in super-diffusion without requiring the algebraic connectivity of average Laplacian greater than those of individual networks. \\

We investigate the situation where, for constant budget $c$ and number of interlinks $r$, the weights $w_{ij}=w_0=c/r$ assigned to all interlinks are identical, and in turn interconnection pattern is the design parameter. The general optimization problem is combinatorially difficult. A heuristic is following the greedy method in \citep{Boyd2006Growing} that is developed for well-connected single layer networks. By previous analyses, it is evident that, for small budgets $c$, the optimal interconnection pattern is regular, if feasible. However, this is not the case for larger budget values, and we will see that, for a larger given total budget $c$, the optimal interconnection pattern is generally not a regular one, such as a $k$-to-$k$ coupling scheme--although it is feasible. \\

Our design parameter is choosing the inter-structure with given number of interlinks $r$. Based on the greedy approach in \citep{Boyd2006Growing}, we add $r$ edges one at a time, each time choosing the interlink $e\sim\{i,j\}$, between nodes $i\in V_1$ and $j\in V_2$, which has the largest value of $\left(v_i-v_j\right)^2$ with $v$ being a unit Fiedler vector of the current Laplacian. The motivation for this heuristic is that, when $\lambda_2$ is isolated, $w_0\left(v_i-v_j\right)^2$ gives the first order approximation of the increase in $\lambda_2$, if edge $e\sim\{i,j\}$ with weight $w_0$ is added to the graph. \\

Figure \ref{fig:WellConnectSmall1} shows a small well-interconnected network. Here, supper-diffusion is not possible with a one-to-one interconnection since $\text{max}\left(\lambda_2\left[L_1\right],\lambda_2\left[L_2\right]\right)>\lambda_2\left[L_{ave}\right]$, and hence, the condition suggested by \citet{GomezDiffusion2013} is not met. The one-to-one interconnected pattern used as reference case in this section follows a multiplex setting \citep{Mieghem2019Regular}. However, it is observed that under a well-interconnected strategy, the diffusion in multilayer network immediately turns into super-diffusion after small values of budget $c$. In fact, the well-interconnected multilayer is achieved only at the expense of one change with respect to a  one-to-one interconnection pattern. This further  highlights the impact of interconnection pattern on structural properties of interdependent networks.   \\

\begin{figure*}
	\centering
	\subfloat[\label{fig:WelConSmal}]{\includegraphics[clip,width=.3\columnwidth]{WellConnectSmall.pdf}} 
	\subfloat[\label{fig:WelConSmalLam}]{\includegraphics[clip,width=.3\columnwidth]{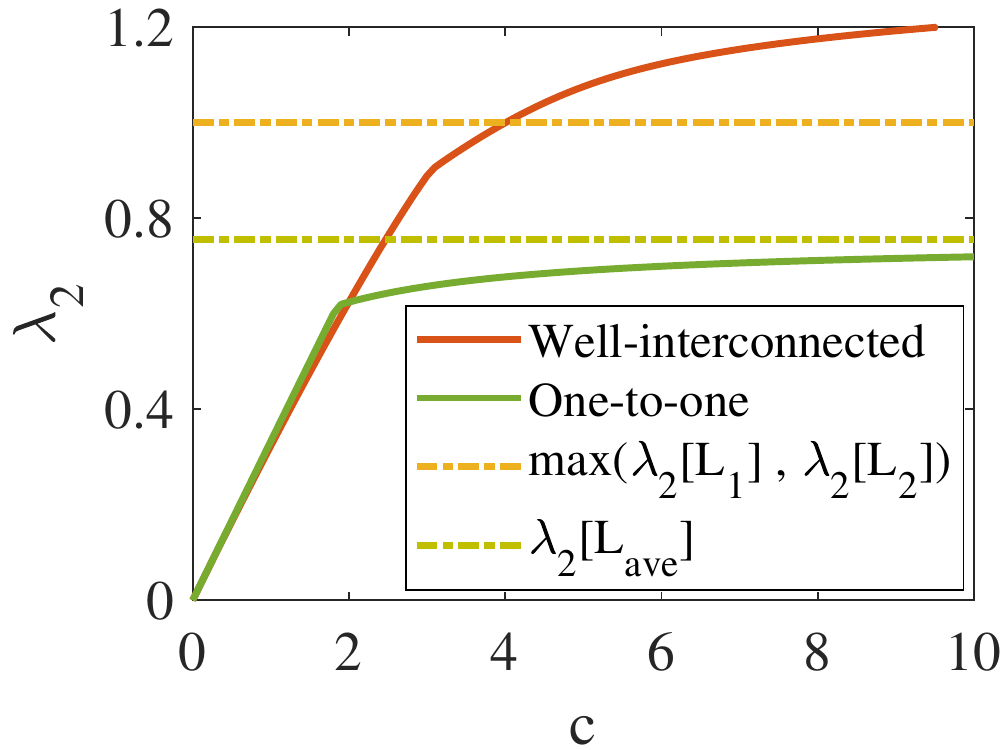}} 
	\caption{(a) Small well-interconnected multilayer network for $c=10$ with $n=m=6$ and $\lambda_2\left[L_1\right]=1, \lambda_2\left[L_2\right]=0.4384$, and (b) $\lambda_2$ as function of total budget $c$.}
	\label{fig:WellConnectSmall1} 
\end{figure*}

Figures \ref{fig:WellConnectSmall2} and \ref{fig:WellConnectSmall3} unravel more properties of well-interconnected networks. In Figure \ref{fig:WellConnectSmall2}, we note an interconnection pattern where nodes that are far from each other in an individual network component are bridged by nodes in the other layer. Therefore, overall interconnected network gains increased connectivity among its nodes compared to each isolated component. In a  one-to-one interconnection setting, such condition is possible only by close-to-orthogonal Fiedler vectors \citep{sahneh2014exact}. Moreover, Figure \ref{fig:WellConnectSmall3} unfolds an interconnection pattern that is essentially inhomogeneous as only few nodes in Layer 2 undergo interlinks. On the other hand, all nodes in Layer 1 are (equally) assigned interlink. What marks this situation is the large gap between algebraic connectivity values of individual network components, with Layer 2 holding the (very) larger algebraic connectivity. Furthermore, the Nodes 3 and 5 of Layer 2, i.e. the nodes assigned most interlinks in subgraph with larger connectivity, hold the most negative and positive values in this subgraph Fiedler vector, and each bridges nodes with different signs in Fiedler vector of Layer 1. \\

Figure \ref{fig:WellConnectMedium1} shows the results for two Geo networks each with 30 nodes. The feature is that, while the numbers of interlinks assigned in Layer 2 with smaller algebraic connectivity are more uniform and most nodes are assigned one interlink, the nodes in Layer 1 with larger algebraic connectivity are assigned more different interlinks. In fact, the nodes with most interlinks in Layer 1 are those corresponding to the most negative and positive values in this layer Fiedler vector. \\

\begin{figure*}
	\centering
	\subfloat[\label{fig:WelConSmal2}]{\includegraphics[clip,width=.35\columnwidth]{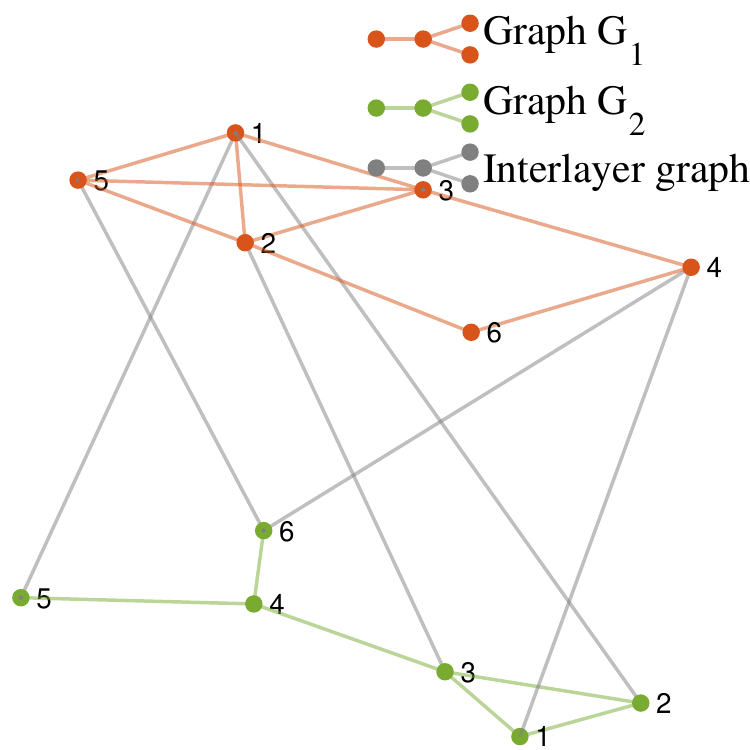}}
	\subfloat[\label{fig:WelConSmalLam2}]{\includegraphics[clip,width=.3\columnwidth]{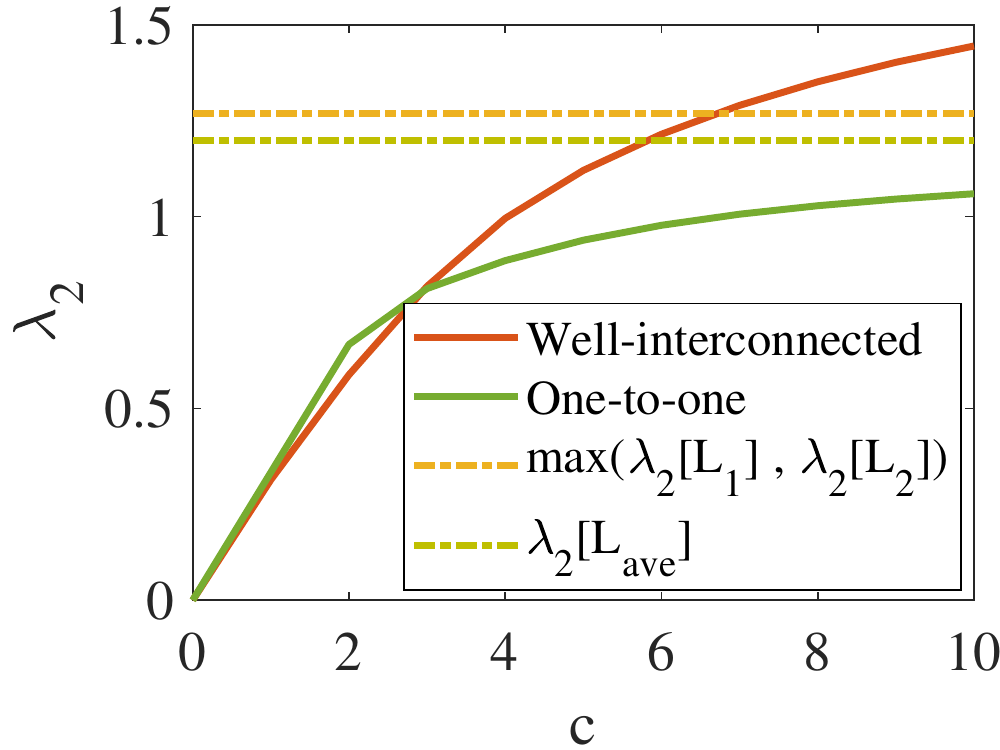}} 
	\caption{(a) Small well-interconnected multilayer network for $c=10$ with $n=m=6$ and $\lambda_2\left[L_1\right]=1.2679, \lambda_2\left[L_2\right]=0.4384$, and (b) $\lambda_2$ as function of total budget $c$. The well-interconnected strategy bridges the nodes that are far from each other in a subgraph. Therefore, Nodes 4 and 5 that are far from each other in Layer 1 are interconnected to the common Node 6 in Layer 2. In the same manner, the Node 1 in Layer 1 is interconnected with two far Nodes 2 and 5 in Layer 2, and the Node 4 in Layer 1 is interconnected to far Nodes 1 and 6 in Layer 2. Moreover, the Nodes 1 and 4 that are far in Layer 1 are interlinked to Nodes 1 and 2 that are close in Layer 1. Therefore, the diffusion between nodes that are far from each other in an individual network component speeds up by interconnecting to a common node, or some closer nodes, within the other layer.}
	\label{fig:WellConnectSmall2} 
\end{figure*}
\begin{figure*}
	\centering
	\subfloat[\label{fig:WelConSmal3}]{\includegraphics[clip,width=.35\columnwidth]{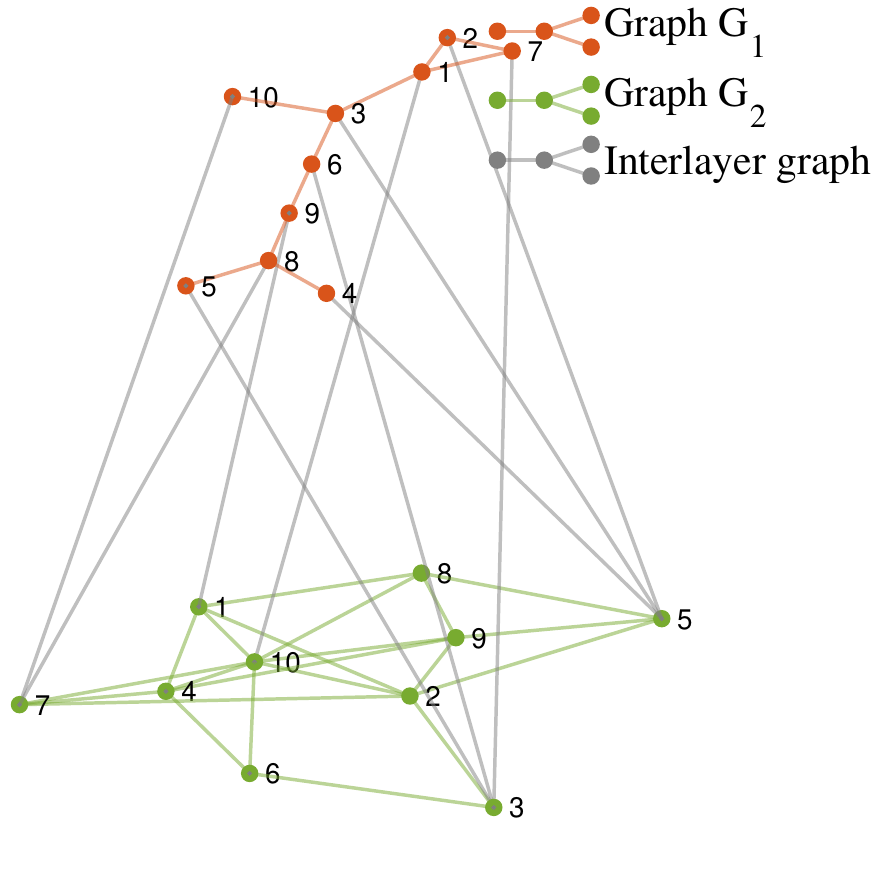}}
	\subfloat[\label{fig:WelConSmalLam3}]{\includegraphics[clip,width=.3\columnwidth]{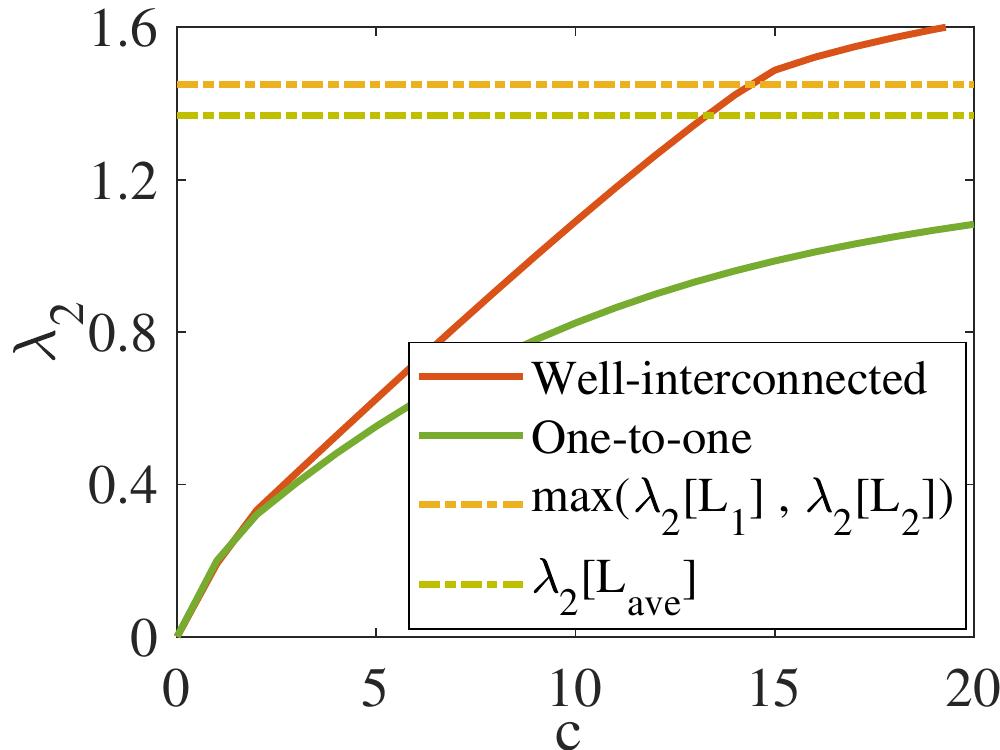}}
	\caption{(a) Small well-interconnected multilayer network for $c=20$ with $n=m=10$ and $\lambda_2\left[L_1\right]=0.1338, \lambda_2\left[L_2\right]=1.4498$, and (b) $\lambda_2$ as function of total budget $c$. Figure indicates an unbalanced interlink assignment strategy where the nodes of the set \{3, 5\} in Layer 2, with the very larger algebraic connectivity, undergo the most interlinks in this layer and bridge the nodes that are far from each other in Layer 1, having the very smaller algebraic connectivity.}
	\label{fig:WellConnectSmall3} 
\end{figure*}
\begin{figure*}
	\centering
	\subfloat[\label{fig:WellConnectMediumLam1}]{\includegraphics[clip,width=.3\columnwidth]{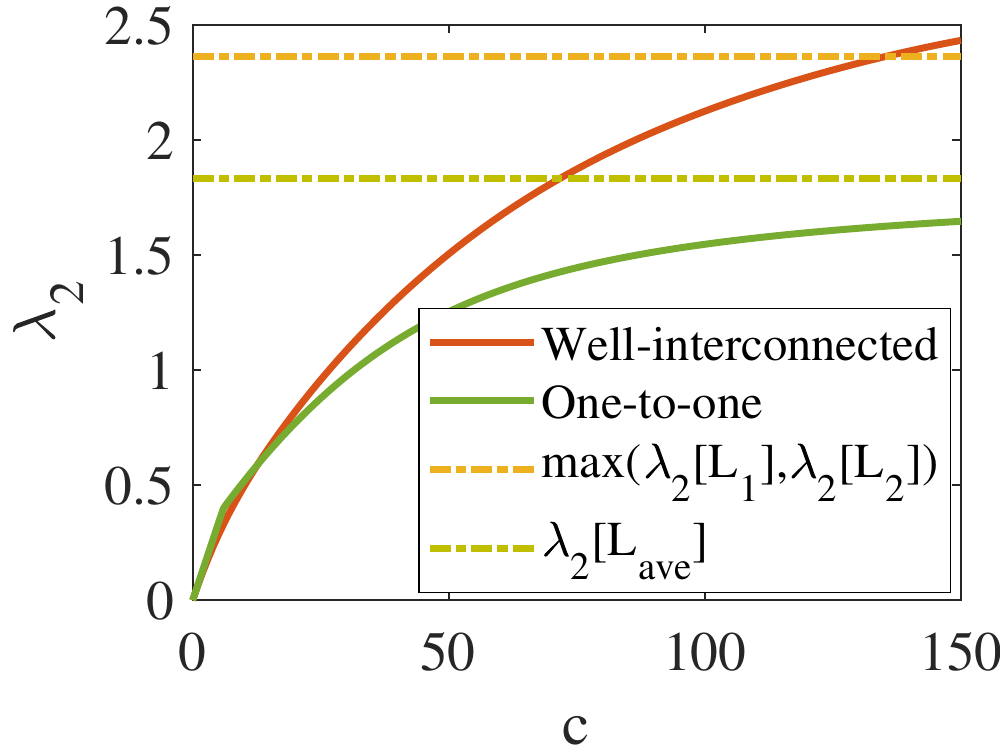}} \ \ \ \ \
	\subfloat[\label{fig:WellConnectMedIntrlnk1_1}]{\includegraphics[clip,width=.3\columnwidth]{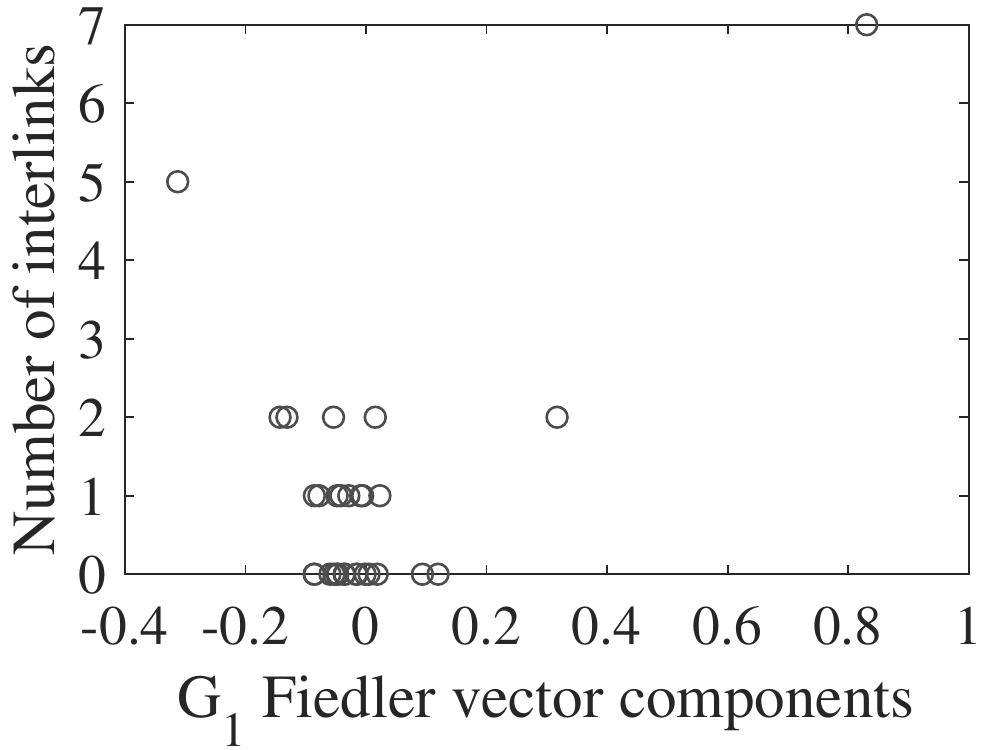}} \ \ \ \ \
	\subfloat[\label{fig:WellConnectMedIntrlnk2_1}]{\includegraphics[clip,width=.3\columnwidth]{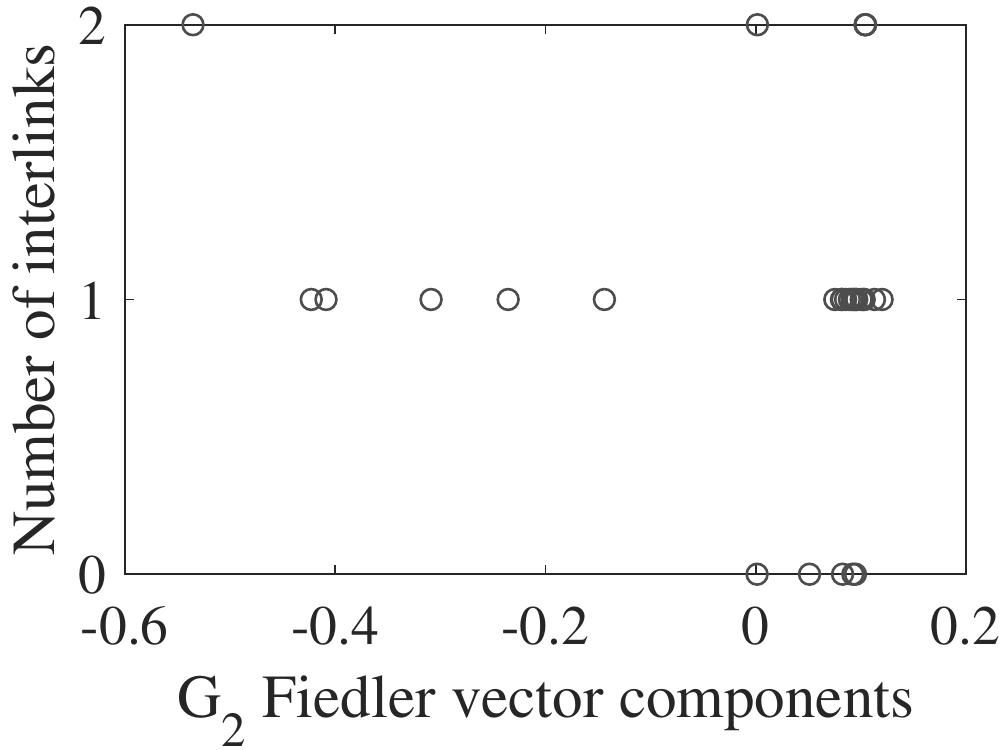}} 
	\caption{Well-interconnection of two Geo networks each with 30 nodes: (a) $\lambda_2$ as function of $c$, and number of interlinks for each node in Layer (b) 1 with larger algebraic connectivity ($\lambda_2=2.3621$), and (c) 2 with smaller algebraic connectivity ($\lambda_2=0.2101$).}
	\label{fig:WellConnectMedium1} 
\end{figure*}

In all above examples, the well-interconnected graph is not regular for the given $c$. To emphasize that the well-interconnected pattern depends on the budget $c$ given, we show in Figure \ref{fig:WellConnectSmall4} the situation where the interlinks may or may not be regular depending on $c$.

\begin{figure*}
	\centering
		\subfloat[\label{fig:WelConSmal4Regu}]{\includegraphics[clip,width=.35\columnwidth]{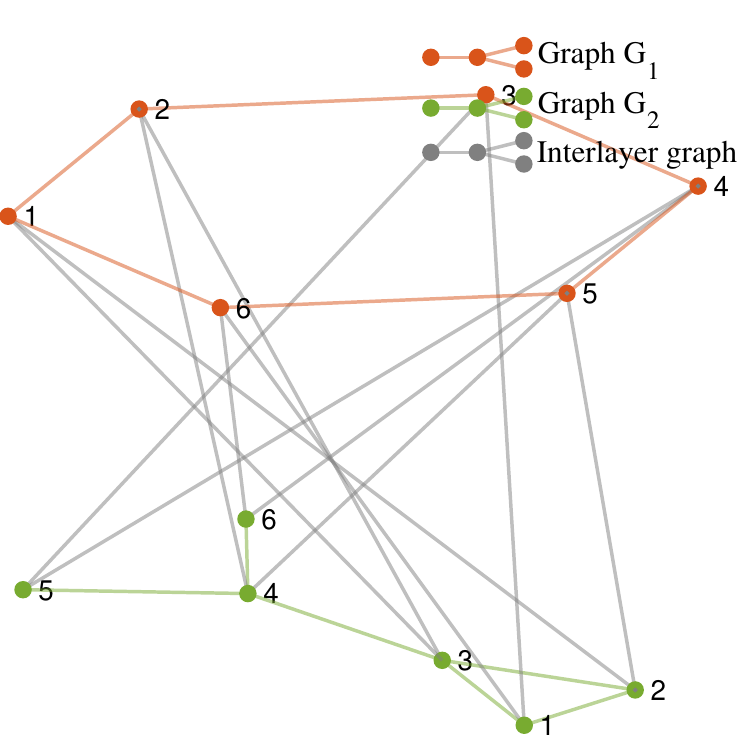}} \ \ \ \ \ \ \ \ \
		\subfloat[\label{fig:WelConSmal4Unregu}]{\includegraphics[clip,width=.35\columnwidth]{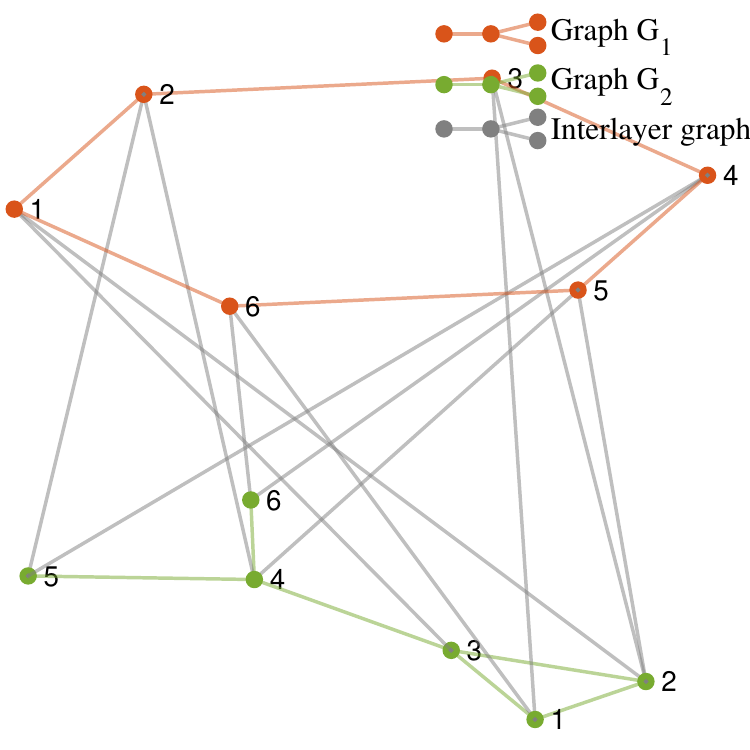}} 	 
		\caption{(a) Small well-interconnected multilayer network in Figure \ref{fig:WellConnectSmall1} revisited with $2n=12$ admissible interlinks: (a) for smaller budget $c=1$ the well-interconnected graph is a regular $k$-to-$k$ interconnection with $k=2$, and (b) for larger budget $c=2$ the well-interconnected graph is not regular.}
		\label{fig:WellConnectSmall4} 
	
\end{figure*}


\section{Conclusion and discussion}\label{sec:Conclusion}
In this paper, we considered optimal interlayer weights and structure determination for maximizing the algebraic connectivity in multilayer networks with arbitrary interconnection pattern. We first investigated optimal weight distribution subject to a total budget $c$. Using an appropriate formulation of maximum algebraic connectivity problem, we showed analytically that before a known threshold budget $c^*$, the maximum algebraic connectivity is attainable subject to a set of regularity conditions, which may or may not lead to uniform weights depending on inter-structure pattern. For efficient numerical solution of larger budget $c>c^*$, we presented a convex formulation of the considered optimization problem. Under a primal-dual setting, we obtained a graph embedding problem that enables easier interpretations of some physical aspects. In particular, we found the graph embedding related to the phase of diffusion, as well as interlayer and intralayer interactions, over the multilayer graph. We showed that when $c\leq c^*$, the graph embedding is one dimensional and implies intralayer phase of optimum diffusion. When regularity is feasible in such a small budget condition, graph embedding involves nodes in same layers clumped together. The most apparent cases of this situation are the case of no restriction on interconnection pattern, i.e. when all nodes in one layer are allowed to be connected to all nodes in the other layer, and the case of $k$-to-$k$ interconnection pattern. For larger budgets beyond $c^*$, interactions between interlayer and intralayer connections result in graph embeddings of higher dimensions, and more diverse versions of intralayer and interlayer phases of optimal diffusion emerge. It was observed that while in an all-to-all interconnection pattern any interlink is possible, the optimal inter-structure may include many interlinks with zero weights. Using a perturbation analysis, we found out a positive correlation between optimal weights and Fiedler vector components of subgraphs.  When sorting several results, we also noted the role of specific algebraic connectivity, i.e. algebraic connectivity divided by number of nodes in a subgraph. Additionally, we investigated determination of optimal inter-structure that, for a given number of interlinks with identical weights, yields the maximum algebraic connectivity. In this regard, we concluded another role of subgraphs Fiedler vectors in identifying optimized inter-structures, which may or may not be regular depending on weight per number of interlinks. 

The findings of this study have far-reaching implications in designing and managing multi-layer systems where resiliency and robustness are of concern. One such system is the supply networks of commodities, which are generally referred to as supply chains. Supply chains are typically composed of a layered sequence of firms whose mission is to serve a final consumer by supplying each other with the material, information, and financial instruments. By nature, and in its generic form, each layer of a supply chain encompasses entities that form a network and perform comparable tasks (e.g., a network of manufacturers versus a network of distributors). A well-studied and common goal in designing supply chains is to serve a market by connecting various layers of firms (i.e., networks) while minimizing logistics costs or the chain's susceptibility to disruptions. This study can be leveraged in devising resilient supply chains that are reasonably immune to supply disturbances and, hence, resistant against catastrophic cascading effects. The recent COVID-19 (SARS-CoV-2) outbreak, which was first documented in Wuhan, China, and the resulting supply disruptions, provide an enlightening example to elucidate the significance of this study in supply chains. In the aftermath of the Coronavirus pandemic, a staggering number of companies reported severe logistics-related disturbances. 94\% of the Fortune 1000 firms described supply chain interruptions \cite{ivanov2020predicting}. Moreover, COVID-19's quarantined areas are home to roughly 12,000 facilities of the world's largest 1000 supply chains \cite{ivanov2020viability}. Considering approximately five million companies with supply roots in Wuhan \cite{ivanov2020predicting}, one begins to realize the significance of the theoretical insights proposed in this study in relation to the supply chains' resilience-driven design.

Besides designing issues, supply chains offer a wealth of tactical and operational problems, structured in a multi-layered fashion, that may benefit from this study. For example, one may consider a distribution problem where a network of warehouses (or fulfillment centers) is to serve a consumer base bound by a transportation network. Warehouses can be either independent or connected through transshipment. A common issue in this scenario is to identify the set of consumers supported by each warehouse, and similarly, the collection of warehouses that may serve a customer. The findings of this study can be utilized to connect the two warehouses and customers' networks, such that the consumers are best served in the face of inventory shortages or commodities' flow interruptions.

Transportation planning problems are another fertile territory for applying the findings of this study. In this context, a decision-maker may be interested in connecting transportation networks of different modes. For example, one widely-studied shipment planning problem is to identify the set of nodes where two distribution networks with varying modes of delivery collide. A few examples of such a problem are connecting a subway system and networks of streets as well as adjoining a truckload and an air freight shipment system. In each of these scenarios, the current study can be utilized to improve the resiliency of the overall connected system by preserving the flow or limiting the disruptions under unforeseen circumstances.

\bibliographystyle{chicago}
\bibliography{refs}

\appendix
\section{Supplemental materials }\label{App}

\subsection{Further bounds on algebraic connectivity}\label{AppBounds}
Decomposing $v_1$ and $v_2$ in \eqref{eq:v1v2} in various ways can yield various bounds for algebraic connectivity. For instance, consider the decomposition $v_1=\alpha u_2^{(1)}+y_1, \|u_2^{(1)}\|=1, y_1\in\mathbb R^n, y_1^Tu_2^{(1)}=0$, where $u_2^{(1)}$ is the eigenvector associated with the second smallest eigenvalue $\lambda_2^{(1)}$, or Fiedler vector, of $L_1$, $L_1u_2^{(1)}=\lambda_2^{(1)}u_2^{(1)}, \boldsymbol{1}_n^T{u_2^{(1)}}=0$. To ensure $v_1^T\boldsymbol{1}_n=-v_2^T\boldsymbol{1}_m$, we should have $y_1^T\boldsymbol{1}_n=-v_2^T\boldsymbol{1}_m$. Inserting this decomposition of $v_1$ into \eqref{eq:v1v2}, we can get
\begin{equation}\label{vdecomp2}
\begin{gathered}
\alpha^2\left(\lambda_2^{(1)}+{u_2^{(1)}}^T\text{diag}\left(W\boldsymbol 1_m\right)u_2^{(1)}-\lambda_2\right)+2\alpha\left({u_2^{(1)}}^T\text{diag}\left(W\boldsymbol 1_m\right)y_1-{u_2^{(1)}}^TWv_2\right) \\ -2y_1^TWv_2+v_2^T\left(L_2+\text{diag}\left(W^T\boldsymbol 1_n\right)\right)v_2-\lambda_2\left(\|y_1\|^2+\|v_2\|^2\right)\geq 0, \\ \forall \ y_1^Tu_2^{(1)}=0, \ y_1^T\boldsymbol{1}_n=-v_2^T\boldsymbol{1}_m
\end{gathered}
\end{equation}
Since the condition in \eqref{vdecomp2} must hold for every $\alpha$, the coefficient of $\alpha^2$ must be positive, which yields the following bound for algebraic connectivity
\begin{equation}\label{eq:lam2bound1}
\begin{gathered}
\lambda_2\leq\lambda_2^{(1)}+{u_2^{(1)}}^T\text{diag}\left(W\boldsymbol 1_m\right)u_2^{(1)}
\end{gathered}
\end{equation}

\begin{lemma}\label{lem:UpBoundCond1}
	The upper-bound in \eqref{eq:lam2bound1} is not attainable.
\end{lemma}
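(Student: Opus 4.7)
The plan is to argue by contradiction. Suppose equality held in \eqref{eq:lam2bound1}. Then the coefficient of $\alpha^{2}$ in \eqref{vdecomp2} vanishes, so the inequality there reduces to an affine function of $\alpha$ of the form $2\alpha B(y_{1},v_{2})+C(y_{1},v_{2})\ge 0$ that must hold for every real $\alpha$ and every admissible pair $(y_{1},v_{2})$. An affine function in $\alpha$ is bounded below on $\mathbb{R}$ only when its slope vanishes; thus the hypothesis of attainment would force $B(y_{1},v_{2})=0$ for every admissible pair, and then, evaluating at $\alpha=0$, also $C(y_{1},v_{2})\ge 0$ for every admissible pair.

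The key step is exhibiting a specific admissible $(y_{1},v_{2})$ that violates the constant-term inequality $C(y_{1},v_{2})\ge 0$. I would take $v_{2}=0$, which collapses the compatibility condition $y_{1}^{T}\mathbf{1}_{n}=-v_{2}^{T}\mathbf{1}_{m}$ to $y_{1}\perp\mathbf{1}_{n}$; together with the standing requirement $y_{1}\perp u_{2}^{(1)}$ this confines $y_{1}$ to $\operatorname{span}\{u_{2}^{(1)},\mathbf{1}_{n}\}^{\perp}$. Because $u_{2}^{(1)}$ and $\mathbf{1}_{n}$ are eigenvectors of $L_{1}$ for distinct eigenvalues (hence orthogonal and linearly independent), this complement has dimension $n-2$, so a nonzero $y_{1}$ in it exists as soon as $n\ge 3$.

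For this choice one reads off from \eqref{vdecomp2} that $C(y_{1},0)=-\lambda_{2}\|y_{1}\|^{2}$. The derived requirement $C(y_{1},0)\ge 0$ then yields $\lambda_{2}\le 0$, which contradicts the standing assumption that the multilayer graph is connected and therefore $\lambda_{2}>0$. This refutes the assumption of attainment and establishes the strict inequality claimed by the lemma.

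The main obstacle is the mild dimensional condition $n\ge 3$ that is used to guarantee a nonzero $y_{1}$ in $\operatorname{span}\{u_{2}^{(1)},\mathbf{1}_{n}\}^{\perp}$; this is automatic in every multilayer network of practical interest. In the degenerate edge case $n=2$ the complement is trivial and one would have to produce the contradiction instead by taking $y_{1}=0$ together with a nonzero $v_{2}\perp\mathbf{1}_{m}$, and reasoning on the Rayleigh quotient of $L_{2}+\operatorname{diag}(W^{T}\mathbf{1}_{n})$ on $\mathbf{1}_{m}^{\perp}$; since this degeneracy is not pertinent to the regime studied in the paper, the argument above suffices.
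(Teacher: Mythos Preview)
Your strategy diverges from the paper's: after noting that the vanishing of the $\alpha^{2}$--coefficient forces the $\alpha$--coefficient $B(y_{1},v_{2})$ to vanish identically, the paper stays with this linear term and evaluates it at two particular admissible pairs to obtain incompatible structural constraints on the vector ${u_{2}^{(1)}}^{T}W$. You instead move on to the constant term $C(y_{1},v_{2})$ and try to find an admissible pair at which it is strictly negative.

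That route does not work, and the reason is a missing term in the displayed inequality \eqref{vdecomp2}. If you redo the substitution $v_{1}=\alpha u_{2}^{(1)}+y_{1}$ in \eqref{eq:v1v2}, the $\alpha$--free part is
\[
C(y_{1},v_{2})=y_{1}^{T}\bigl(L_{1}+\operatorname{diag}(W\boldsymbol{1}_{m})\bigr)y_{1}-2y_{1}^{T}Wv_{2}+v_{2}^{T}\bigl(L_{2}+\operatorname{diag}(W^{T}\boldsymbol{1}_{n})\bigr)v_{2}-\lambda_{2}\bigl(\|y_{1}\|^{2}+\|v_{2}\|^{2}\bigr),
\]
so the term $y_{1}^{T}(L_{1}+\operatorname{diag}(W\boldsymbol{1}_{m}))y_{1}$, omitted in \eqref{vdecomp2}, belongs there. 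With it, your test pair $(y_{1},0)$ gives
\[
C(y_{1},0)=y_{1}^{T}\bigl(L_{1}+\operatorname{diag}(W\boldsymbol{1}_{m})\bigr)y_{1}-\lambda_{2}\|y_{1}\|^{2},
\]
and since $y_{1}\perp\boldsymbol{1}_{n}$ means $\begin{bmatrix}y_{1}\\0\end{bmatrix}\perp\boldsymbol{1}_{N}$, the variational characterization \eqref{eq:lambda2} yields $C(y_{1},0)\ge 0$ automatically. No contradiction arises. More generally, $C(y_{1},v_{2})\ge 0$ for every admissible pair is precisely the Rayleigh inequality \eqref{eq:v1v2} specialized to $v=\begin{bmatrix}y_{1}\\v_{2}\end{bmatrix}$, so the constant term can never be made negative; the information needed to reach a contradiction lives entirely in the linear coefficient $B$, which is where the paper works.
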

\begin{proof}
	Considering the condition in \eqref{eq:lam2bound1} as equality, we see that the coefficient of $\alpha^2$ in \eqref{vdecomp2} becomes zero. Then, since \eqref{vdecomp2} must hold for every $\alpha$, the coefficient of $\alpha$ must vanish as well:
	\begin{equation}\label{eq:lamCond1}
	\begin{gathered}
	{u_2^{(1)}}^T\text{diag}\left(W\boldsymbol 1_m\right)y_1-{u_2^{(1)}}^TWv_2=0, \\ \forall \ y_1^Tu_2^{(1)}=0, \ y_1^T\boldsymbol{1}_n=-v_2^T\boldsymbol{1}_m
	\end{gathered}
	\end{equation}
	Setting $y_1=\frac{1}{n}\boldsymbol 1_n, v_2=-\frac{1}{m}\boldsymbol 1_m$ in \eqref{eq:lamCond1}, we get ${\frac{1}{n}u_2^{(1)}}^T\text{diag}\left(W\boldsymbol 1_m\right)\boldsymbol 1_n+\frac{1}{m}{u_2^{(1)}}^TW\boldsymbol 1_m=0$. Since $\text{diag}\left(W\boldsymbol 1_m\right)\boldsymbol 1_n=W\boldsymbol 1_m$, it follows the weights matrix $W$ should belong to $W\in\mathcal S^1=\{{u_2^{(1)}}^TW\boldsymbol{1}_m=0\}=\{{u_2^{(1)}}^TW\perp\text{spam}\{\boldsymbol{1}_m\}\}$. On the other hand, setting $y_1=0$ in \eqref{eq:lamCond1}, we note that $W\in\mathcal S^2=\{{u_2^{(1)}}^TWv_2=0, \ v_2^T\boldsymbol{1}_m=0\}=\{{u_2^{(1)}}^TW\in\text{spam}\{\boldsymbol{1}_m\}\}$. The proof follows since $\mathcal S^1\cap\mathcal S^2=\{\emptyset\}$ for $c\neq 0$. 
\end{proof}

Similarly, denoting $\lambda_2^{(2)}$ the second smallest eigenvalue of $L_2$ and $v_2^{(2)}$ the corresponding Fiedler vector, we can get the following upper-bound in terms of spectral properties of $L_2$
\begin{equation}\label{eq:lam2bound2}
\begin{gathered}
\lambda_2\leq\lambda_2^{(2)}+{v_2^{(2)}}^T\text{diag}\left(W^T\boldsymbol 1_n\right)v_2^{(2)}
\end{gathered}
\end{equation}
which is not attainable as well. Although the bounds in \eqref{eq:lam2bound1} and \eqref{eq:lam2bound2} are not attainable, they can give some intuition about the maximum algebraic connectivity problem under investigation. 

\subsection{Different Diffusion phases}\label{AppDiffusion}
Figure \ref{fig:TimeRes} shows different diffusion phases corresponding to the conditions considered in Figure \ref{fig:EmbGeoCase1}. To better realize the interlinks effect, we have assumed identical initial conditions of nodes in same layers. In Figure \ref{fig:TimeResC1}, when $c<c^*$, each network operation is distinguishably unified. Indeed, for small $c$, the interconnection strength is too weak to affect the intralayer connections and break the individual networks unity. In such condition, the optimal diffusion process within each network component is prominently through its intralinks. For the intermediate value $c=10$ in Figure \ref{fig:TimeResC10}, while the network $G_1$ with larger specific connectivity still operates as a unity, the network $G_2$, with smaller specific connectivity, loses its operation unity due to being interconnected with $G_1$. Thus, the subgraph with larger specific connectivity is more robust against intermediate couplings while the other with smaller specific connectivity is more vulnerable and loses unity in this region. As such, for intermediate $c^*<c<c^{**}$ in Figure \ref{fig:TimeResC10}, the optimal diffusion within $G_1$ is mostly due to its intralinks while in $G_2$ it is through intralinks and interlinks. The situation turns conversely for the larger value $c=30>c^{**}$ in Figure \ref{fig:TimeResC30} where $G_1$ loses unity while $G_2$ becomes again unified. This time, the optimal interlinks strength is so high that it completely overcomes the intralink effects within $G_2$. In fact, the strong interlinks may be thought of as powerful attracting forces that pull the nodes in $G_2$ toward each other from all sides, due to all-to-all interconnection possibility, thus making them unified. However, the interlinks are only strong enough to destroy $G_1$ unity but not strong enough to completely overcome the intralink effects in this subgraph. As such, diffusion within $G_2$ forms prominently through interlinks while it is through interlinks and intralinks in $G_1$.
\begin{figure*}
	\centering
	\subfloat[\label{fig:TimeResC1}]{\includegraphics[clip,width=.3\columnwidth]{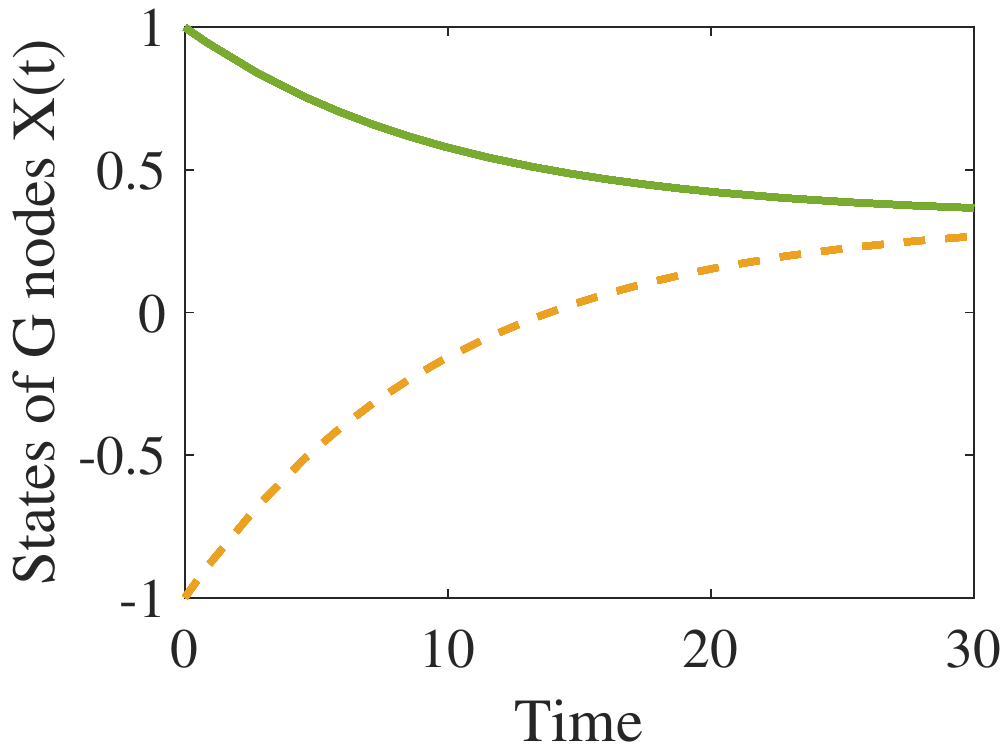}} \ \ \ \ \
	\subfloat[\label{fig:TimeResC10}]{\includegraphics[clip,width=.3\columnwidth]{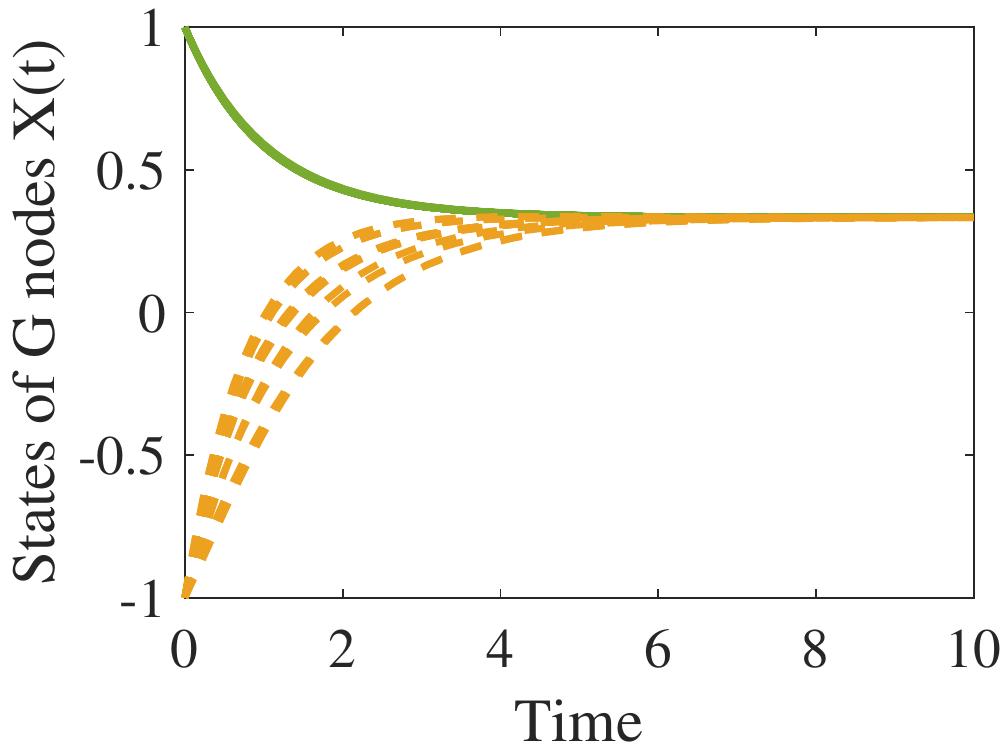}} \ \ \ \ \
	\subfloat[\label{fig:TimeResC30}]{\includegraphics[clip,width=.3\columnwidth]{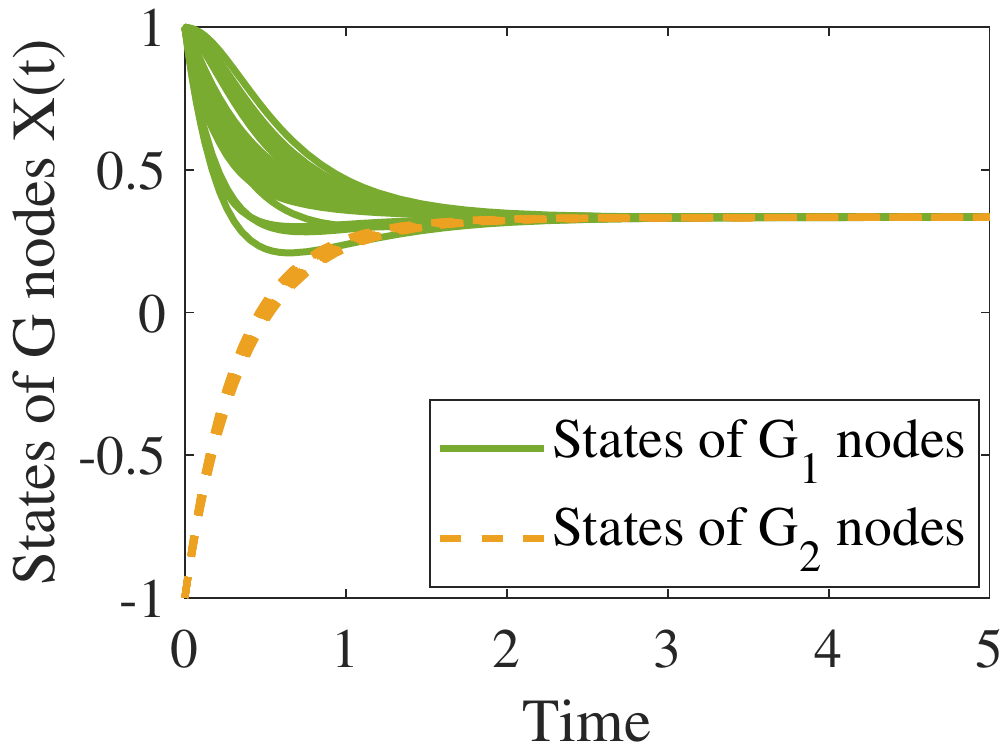}} 
	\caption{Diffusion process $\dot X=-LX$ corresponding to Figure \ref{fig:EmbGeoCase1} for (a) $c=1$, (b) $c=10$, and (c) $c=30$.}
	\label{fig:TimeRes} 
	
\end{figure*}
\end{document}